\documentclass[a4paper,UKenglish,cleveref, autoref, thm-restate]{lipics-v2021}

\usepackage{tikz}
\usepackage{pgfplots}
\usetikzlibrary{automata, arrows.meta, positioning,decorations.pathmorphing}
\usepackage[ruled,vlined]{algorithm2e}
\DontPrintSemicolon
\LinesNumbered

\usepackage{thm-restate}
\usepackage{marginnote}
\newcommand{\reviewer}[1]{ \marginpar{\it\footnotesize #1}}

\usepackage{wrapfig}
\usepackage{lipsum}

\usepackage{nicefrac}
\usepackage{xcolor}

\nolinenumbers

\title{A Fine-Grained Analysis of SIMD-Based Hash Table Implementations}

\titlerunning{A Fine-Grained Analysis of SIMD-Based Hash Table Implementations} 

\author{Cyril Nicaud}{Univ Gustave Eiffel, CNRS, LIGM, F-77454 Marne-la-Vallée, France}{cyril.nicaud@univ-eiffel.fr}{}{}

\author{Pablo Rotondo}{Univ Gustave Eiffel, CNRS, LIGM, F-77454 Marne-la-Vallée, France}{pablo.rotondo@univ-eiffel.fr}{}{}

\authorrunning{C. Nicaud and P. Rotondo} 

\Copyright{Cyril Nicaud and Pablo Rotondo} 

\ccsdesc{Theory of computation~Design and analysis of algorithms~Data structures design and analysis~Sorting and searching}
\ccsdesc{Mathematics of computing~Probability and statistics}

\keywords{hash tables, SIMD instructions, Wormald's method, probabilistic analysis} 

\category{} 

\relatedversion{} 

\newcommand{\bucketSize}{b}

\newcommand{\finger}{h_f}
\newcommand{\overflow}{h_o}
\renewcommand{\O}{\mathcal{O}}
\newcommand{\instruction}[1]{\texttt{#1}}

\begin{document}

\maketitle

\begin{abstract}
In recent years, several variants of classical hash table schemes have been developed by engineers in order to take advantage of the processor’s internal parallelism using SIMD instructions, which make it possible to operate on multiple bytes simultaneously. At a small additional memory cost, this enables a significant speedup, making it a data structure that is increasingly popular in practice, when very high performance is required. 
In this article, we provide a detailed theoretical analysis of the dynamics of such hash tables. From a methodological standpoint, we use and adapt a technique developed by Wormald in the 1990s to study dynamic graphs. This approach, which can be  adapted to many variants, enables us to accurately estimate the quantities of interest by capturing the dynamics of the data structure through systems of differential equations. Although complex, we provide an explicit description of the solutions of these systems, which can furthermore be efficiently approximated numerically. Our main results are stated with high probability, which is significantly more precise than average-case analyses, and they match experimental results remarkably well, even for hash tables of moderate size.

\end{abstract}

\newpage

\section{Introduction}
\label{sec:intro}

Hash tables are a fundamental data structure that has become indispensable in practice and is now natively available in nearly all programming languages, yet their concrete implementation varies considerably from one language to another. For instance, one can find separate chaining (\textsc{Java}, \textsc{C++}), separate chaining with array-backed nodes (\textsc{C\#}), open addressing with various probing schemes: pseudo-random perturbation (\textsc{Python}, \textsc{Ruby}), quadratic probing (\textsc{V8}), linear probing (\textsc{Swift}), \ldots  

The importance of hash tables is such that the organization of the table structures themselves takes up a non-negligible portion of the RAM of data servers and search engines, which has led engineers developing programming languages and performance-oriented libraries to revisit the classic academic results on hash tables in order to optimize them in terms of space and time. While the classic theoretical literature~\cite{guibas1978,knuth1998art,lueker1993}  long favored minimizing collisions as the primary performance criterion, a broader trend has emerged in recent years: the dominant constraint on modern hardware is cache locality rather than the number of collisions in the combinatorial sense, which has driven the convergence of several major languages and libraries toward the \emph{Swiss Table} design, as seen in Google's Abseil (C++), Rust, Go, and Meta's F14. They all use compact metadata and SIMD instructions to drastically reduce memory accesses per probe. The goal of this paper is to propose a mathematical methodology for analyzing this type of hash table.

\paragraph*{Model of hash tables}
Open addressing is by far the most widely used method nowadays when it comes to optimizing performance. With this method, collisions are resolved through probing: in case of collisions when inserting a key, several strategies exist to find a free slot, such as linear probing, where the key is inserted into the next available slot, quadratic probing, double hashing, \ldots This is usually preferred  to closed addressing, where each slot in the table can hold any number of keys, in a linked list, for instance. This choice minimizes memory overhead and takes cache locality into account, at the cost of slightly more algorithmic management when collisions occur. Above all, it is compatible with the use of SIMD instructions, which make it possible to perform several elementary operations simultaneously within the processor. Thus, if we look in detail at hash tables in {\tt Rust}, in the {\tt boost} library of {\tt C++}, or in the  implementations used by \textsc{Google} (Abseil) or \textsc{Meta} (F14), we see these two characteristics every time: open addressing and vectorization with SIMD instructions.

 The mathematical analysis of hash tables, as initiated by D. Knuth~\cite{knuth1998art}, we avoid complex considerations about hash functions by adopting the \emph{uniform hashing model}, where each hash function $h$ is modeled by a uniform random map: if the table has size $n$, the image of any key $x$ by $h$ is taken uniformly at random and independently in $\{0,\ldots,n-1\}$. When several hash functions are used, they are independent (in the probabilistic sense).

 For collision resolution, we use the \emph{random probing} model, where we pick a slot uniformly at random each time (which amounts to using a new hash function at each iteration of the probe sequence). In practice, setting aside linear probing which is very different, quadratic probing is one of the most used. Experimentally, the performance of quadratic probing does not differ very significantly from that of random probing~\cite{knuth1998art}. However, the theoretical analysis of quadratic probing is particularly difficult, and no significant result existed in the literature before the recent breakthrough by Kuszmaul and Xi~\cite{kuszmaul2024}.

Finally, from reading engineers’ blog posts\footnote{For instance see \href{https://bannalia.blogspot.com/2022/11/inside-boostunorderedflatmap.html}{\color{blue}{Joaqu\'in L\'opez Mu\~noz' post}} regarding {\tt boost::unordered\_flat\_map}. } and from the conferences they gave about their implementations, the main performance metric is the number of hops (i.e., the number of probes to find a free slot), since this is quite directly related to cache misses, the cost of other operations being of secondary importance. Consequently, in this article we study hash tables in which collisions are handled by open addressing, under the uniform hashing model and random probing, where the time cost is estimated by counting the number of hops.

It is standard practice not to specify the size of a hash table when it is created, and to resize it,  reinserting all its elements in the process, when it becomes too full. We do not include this mechanism in our model, since we are primarily interested in the state of the table after a given number insertions, which is exactly equivalent to looking at the situation just after the last resizing. This changes the total construction cost, but it is easy to derive it from our results using a standard amortized analysis once  the resizing parameters are given.

\paragraph*{Using vectorization for hash tables operations}
SIMD instructions introduce parallelism within the CPU by allowing simultaneous operations on multiple bytes (typically 16 bytes in one instruction on a modern architecture). 

In a nutshell, the general mechanism considered in this article can be described as follows.  The table $T$ is made of $n$ buckets, each containing up to $b$ keys (for example, $b=16$ to match the inner parallelism of the processor). The table is accompanied by a header $H$, also composed of $n$ buckets, each containing $b$ bytes. When a key $x$ is inserted into the hash table, it is assigned to a bucket $T[i]$ identified through a probing sequence that searches for a bucket that is not full. We also use a hash function $\finger$ that maps $x$ to a byte, say, an integer between $0$ and $254$, to add $\finger(x)$ to the header, in the bucket $H[i]$ (see Fig.~\ref{fig:SIMD}). The value $255$ (or equivalently $-1$) is used to denote an empty slot. This byte $\finger(x)$ is used as a \emph{fingerprint} for $x$. When searching for $x$ in the hash table, each probe (at position $j$) begins by using a few SIMD instructions to quickly test whether $\finger(x)$ appears in $H[j]$. If the fingerprint is absent, the algorithm can skip querying $T[j]$ altogether. It is only when we try to insert into a bucket that already contains 16 keys that we need to perform a hop to try a new slot. 

\begin{figure}[h]
	\begin{tikzpicture}[scale=.18]
		\draw[<->,thick] (0,5.5) -- node[fill=white]{\small bucket $0$} (16,5.5);
		\draw[<->,thick] (48,5.5) -- node[fill=white]{\small bucket $n\!-\!1$} (64,5.5);
		\draw[<->,thick] (24,5.5) -- node[fill=white]{\small bucket $i$} (40,5.5);
	
		\draw[thick] (0,0) rectangle (64,2);
		\foreach \x in {0,1,...,15} {
			\draw (\x,0) -- (\x,2);
			\draw (48+\x,0) -- (48+\x,2);
			\draw (24+\x,0) -- (24+\x,2);
		}
		\draw[thick] (16,0) rectangle (16,2);
		\draw[thick] (48,0) rectangle (48,2);	
		\draw[thick] (24,0) rectangle (40,2);
		
		\node at (20,1) {$\cdots$};
		\node at (44,1) {$\cdots$};

		\draw[thick] (0,3) rectangle (64,4);
		\foreach \x in {0,1,...,16} {
			\draw (\x,3) -- (\x,4);
			\draw (48+\x,3) -- (48+\x,4);
			\draw (24+\x,3) -- (24+\x,4);
		}
		\node at (20,3.5) {$\cdots$};
		\node at (44,3.5) {$\cdots$};

		\node at (-3,3.5) {\small header};
		\node at (-3,1) {\small table};
		
		\draw[dotted, thick] (24,0) -- (16,-3);
		\draw[dotted, thick] (40,0) -- (48,-3);
		
		\draw[thick] (16,-3) rectangle (48,-5);
		\foreach \x in {0,1,...,15} {
			\draw (2*\x+16,-3) -- (16+2*\x,-5);
		}
		\draw[thick] (16,-6) rectangle (48,-10);
		\foreach \x in {0,1,...,15} {
			\draw (2*\x+16,-6) -- (16+2*\x,-10);
		}

		\foreach \x/\h/\k in {0/9/x,1/5/y,2/9/z,3/4/u} {
			\node at (17+2*\x,-4) {\small \h};
			\node[anchor=north] at (17+2*\x,-7) {\small $\k$};
		}
		\foreach \x in {4,5,...,15}{
			\node at (17+2*\x,-4) {\small -1};
		}
	\end{tikzpicture}
    \caption{The hash table for $b=16$. The $n$ buckets are made of a size-16 header to store the fingerprints and a size-16 array to store the keys. In this example, the $i$-th bucket contains the keys $x$, $y$, $z$ and $u$, of fingerprints $\finger(x)=\finger(z)=9$, $\finger(y)=5$ and $\finger(u)=4$. When looking for a key $k$ in this bucket, the algorithm first computes $\finger(k)$ and then uses a few SIMD instructions to check whether $H[i]$ contains this fingerprint, in order to know whether it has to look in $T[i]$ or not.  
    \label{fig:SIMD}}
\end{figure}

In this article, we do not mathematically study the impact of deletions, mainly due to the lack of a simple, commonly accepted scenario for hash table usage that is relevant in practice (see the end of the conclusion for a discussion of this point and possible directions for extension). For the sake of completeness, we nevertheless present the algorithmic mechanisms that make it possible to handle deletions for these hash tables.
Deletions are handled using \emph{tombstones}: another of the 256 header values is used to mark bucket slots that are unused but previously contained a key that has been deleted. For instance, $\finger$ maps a key to $\{0,\ldots, 253\}$, the value $255\equiv-1$ denote an empty slot and the value $254\equiv-2$ denote a tombstone. A bucket has an available slot for inserting a new key when its header  contains at least a $-1$ or a $-2$ and requires a hop for lookups when the current search is unsuccessful and its header does not contain a $-1$.

\begin{figure}[h]
\centering
\begin{tikzpicture}[scale=.18]
    \draw[thick] (31,3) rectangle (33,5);
    \node at (32,4) {\small 4};
    \node at (28,4) {\small $\finger(x)$};

    \draw[->,thick] (32,2.5) -- node[right]{\  \instruction{\_mm\_set1\_epi8}} (32,0.5);
    \foreach \x in {0,1,...,15} {
        \node at (17+2*\x,-1) {\small 4};
    }

    \draw[thick] (16,0) rectangle (48,-2);
    \node at (13,-4) {\small $H[i]$};
    \draw[thick] (16,-3) rectangle (48,-5);
	\foreach \x in {0,1,...,15} {
        \draw (2*\x+16,0) -- (16+2*\x,-2);
        \draw (2*\x+16,-3) -- (16+2*\x,-5);
	}
	\foreach \x/\h in {0/9,1/5,2/-2,3/4,4/7,5/4} {
			\node at (17+2*\x,-4) {\small \h};
	}
	\foreach \x in {6,7,...,15}{
			\node at (17+2*\x,-4) {\small -1};
	}

    \draw[->,thick] (32,-5.5) -- node[right]{\  \instruction{\_mm\_cmpeq\_epi8}} (32,-7.5);

    \draw[thick] (16,-8) rectangle (48,-10);
    \foreach \x in {0,1,...,15} {
        \draw (2*\x+16,-8) -- (16+2*\x,-10);
    }
    \foreach \x/\h in {0/0,1/0,2/0,3/-1,4/0,5/-1} {
			\node at (17+2*\x,-9) {\small \h};
    }
    \foreach \x in {6,7,...,15}{
			\node at (17+2*\x,-9) {\small 0};
	}
\end{tikzpicture}
\caption{
The main SIMD instructions at play, without going into all the details. We use  x86-64 assembly for the instruction's names. When looking for a key $x$, the algorithm first computes $\finger(x)$, then creates a 128-bit SIMD integer vector made of 16 copies of the byte $\finger(x)$ using the instruction \instruction{\_mm\_set1\_epi8}. Each time we search for $x$ in a bucket $T[i]$, we use the \instruction{\_mm\_cmpeq\_epi8} instruction to compare, in parallel, the 16 bytes of its header $H[i]$ with $\finger(x)$. If there is no match, the output is a vector of $0$'s, which can be checked very quickly. Otherwise, $x$ is compared to each key within the bucket having same fingerprint (yielding two comparisons in the example).
\label{fig:SIMD details}
}
\end{figure}

Fig.~\ref{fig:SIMD details} details the main mechanisms that allow quick probing of the $i$-th bucket. Throughout the article, the number of such probes is the main measure of complexity for our algorithms. Even though these queries do not always require the same amount of computation, the table jumps are the most costly, due to their impact on the cache, and are therefore the only parameter that really matters in practice. As a result, it will be important to know how many buckets are full, and to be able to calculate this, we will need to provide a precise estimation, at any given time, of how many buckets contain $j$ keys, for all  $j$ from $0$ to $16$.

Since we use headers, another idea that can be easily implemented is to store other types of information in them besides fingerprints. This can originally be attributed to Amble and Knuth~\cite{AmbleKnuth74}, who proposed a mechanism related to adding an extra bit in each slot of a classical open-addressing hash table. This bit is used as a flag to remember whether this slot has already been subject to a hop during an insertion. This costs some space, but it makes it possible to prematurely interrupt the probing sequence during a search. In our settings, this simple idea has been implemented in different ways, using one or more slots in the header to store some information. A common use is as follows: each bucket now contains up to $b=15$ keys, and the associated  header contains $15$ fingerprints and one byte to account for \emph{overflows}, i.e. when a hop occurred previously.  For the $i$-th probe in the sequence, we have a hash function $\overflow^{(i)}$ that returns a number between $0$ and $7$, and if we try to insert a key $x$ into a bucket that is already full, the $\overflow^{(i)}(x)$-th bit of the overflow byte is set to $1$ (initially, all its bits are set to $0$). 
This allows us, when searching for the key $x$, to avoid performing a hop if the overflow bit $\overflow^{(i)}(x)$ is zero, even if the bucket is full with $15$ keys. This only improves unsuccessful searches, but can be significantly faster. Using this method, one does not need tombstones anymore: one keeps looking for a key $x$ if its overflow bit $\overflow^{(i)}(x)$ is 1, even if the bucket is not full. Indeed, it indicates that the bucket was previously full and that $x$ may have been inserted elsewhere. If $\overflow^{(i)}(x)=0$ and $x$ is not in the current bucket, then $x$ cannot be in the table. Using a different overflow hash for each probe best describes what is observed in practice and is consistent with the model of random probing.

In the examples we found in different programming languages, we observed buckets of sizes between 14 and 16, and 0, 1, 7 or 8 overflow bits. In the sequel, these two values will be treated as parameters ($b$ for the buckets' size and $d$ for the number of overflow bits), allowing us to cover all cases.

\begin{figure}[t]
\noindent
{\small
\begin{minipage}{.45\textwidth}
\begin{algorithm}[H]
\caption{\!\!\textsc{AddNewElement}($x$)}
$c,i\gets 0,h^{(0)}(x)$\;
\While{$|T[i]|=b$}{
    $B[i][h_o^{(c)}(x)]\gets 1$\;
    $c\gets c+1$\;
    $i\gets h^{(c)}(x)$\;
}
Add $x$ in $T[i]$ and update $H[i]$\;
\end{algorithm}
\end{minipage}
\hfill
\begin{minipage}{.5\textwidth}
\begin{algorithm}[H]
\caption{\textsc{Search}($x$)}
$c,i\gets 0,h^{(0)}(x)$\;
\While{\emph{True}}{
\lIf{$x\in T[i]$}{\Return True}
\lIf{$B[i][h_o^{(c)}(x)]=0$}{\Return False}
    $c\gets c+1$\;
    $i\gets h^{(c)}(x)$\;
}
\end{algorithm}
\end{minipage}
}
\caption{The algorithms to insert a new element and to search for an element. We used $T$, $H$ and $B$ for the tables of buckets, headers and overflow bits, respectively. The SIMD operations are performed in Line~3 of \textsc{Search}. The random probing model is realized using sequences of hash functions $(h^{(c)})_{c\geq 0}$ for probing and $(\overflow^{(c)})_{c\geq 0}$ for overflows. We assume that the resizing is handled separately.\label{fig:algos}}
\end{figure}

\paragraph*{Related works}
Since Knuth’s seminal work~\cite{knuth1998art}, many variants of hash tables have been proposed. Among those directly related to this article, one can mention the paper by Knuth and Amble~\cite{AmbleKnuth74}, which introduces and studies a notion related to adding an overflow bit. More recently, the work of Larson~\cite{larson1983analysis} provides an analysis of hash tables with buckets and using 0 or 1 overflow bits, and Martini and Burkhard~\cite{martini2003double} started to study what happens with more overflow bits, but with algorithms that differ slightly from those that were developed many years later, when SIMD instructions became fully accessible: it is not exactly the same model we use this article, as, for them, $h_o(x)$ remains the same throughout the sequence of probes.

\begin{figure}[t]
\noindent
\begin{tikzpicture}[scale=.6]
\small
\begin{axis}[
    width=12cm,
    height=8cm,
    xlabel={Proportion of table occupancy},
    ylabel={Proportion of full groups},
    legend style={at={(0.03,0.97)},anchor=north west},
    grid=both,
    xticklabel={
        \pgfmathparse{\tick*100}
        \pgfmathprintnumber{\pgfmathresult}\%
    },
    yticklabel={
        \pgfmathparse{\tick*100}
        \pgfmathprintnumber{\pgfmathresult}\%
    },
    ylabel style={yshift=-1em},
    xticklabel style={font=\footnotesize},
    yticklabel style={font=\footnotesize}
]

\addplot[
    red,
    very thick,
]
coordinates {
(0.4375, 0.0032332703162837686) 
(0.44625, 0.0038664349793662263) 
(0.45499999999999996, 0.004597925009071537) 
(0.46374999999999994, 0.005438704349688519) 
(0.4724999999999999, 0.006400386038227865) 
(0.4812499999999999, 0.0074952135790481355) 
(0.4899999999999999, 0.008736038067351121) 
(0.49874999999999986, 0.010136291513201055) 
(0.5074999999999998, 0.011709956876291751) 
(0.5162499999999999, 0.01347153536650904) 
(0.5249999999999998, 0.015436011596209482) 
(0.5337499999999997, 0.01761881718736575) 
(0.5424999999999998, 0.020035793441071065) 
(0.5512499999999998, 0.02270315366949181) 
(0.5599999999999997, 0.025637445772650303) 
(0.5687499999999996, 0.02885551561607318) 
(0.5774999999999997, 0.032374471732153576) 
(0.5862499999999997, 0.03621165182991461) 
(0.5949999999999996, 0.04038459155658039) 
(0.6037499999999996, 0.04491099591175914) 
(0.6124999999999996, 0.04980871367281772) 
(0.6212499999999996, 0.055095715149739335) 
(0.6299999999999996, 0.06079007355083569) 
(0.6387499999999995, 0.06690995020839773) 
(0.6474999999999995, 0.07347358388686956) 
(0.6562499999999996, 0.08049928437645211) 
(0.6649999999999995, 0.08800543056314336) 
(0.6737499999999994, 0.09601047316304434) 
(0.6824999999999994, 0.10453294231525842) 
(0.6912499999999995, 0.1135914602449536) 
(0.6999999999999994, 0.12320475923737766) 
(0.7087499999999993, 0.13339170520633575) 
(0.7174999999999994, 0.14417132719878398) 
(0.7262499999999994, 0.15556285325325836) 
(0.7349999999999993, 0.16758575312708957) 
(0.7437499999999992, 0.18025978853005642) 
(0.7524999999999993, 0.19360507165597995) 
(0.7612499999999993, 0.20764213299633844) 
(0.7699999999999992, 0.22239199966141085) 
(0.7787499999999992, 0.23787628573837769) 
(0.7874999999999992, 0.25411729660071203) 
(0.7962499999999992, 0.27113814957439886) 
(0.8049999999999992, 0.2889629139989451) 
(0.8137499999999991, 0.30761677454360825) 
(0.8224999999999991, 0.3271262227207775) 
(0.8312499999999992, 0.3475192829779384) 
(0.8399999999999991, 0.36882578169114055) 
(0.848749999999999, 0.3910776700395669) 
(0.857499999999999, 0.4143094154341466) 
(0.8662499999999991, 0.43855848139798714) 
(0.874999999999999, 0.46386592333184345) 
(0.8837499999999989, 0.4902771387022708) 
(0.892499999999999, 0.5178428269545531)
};
\addlegendentry{Predicted}

\addplot[
    blue,
    mark=*,
    mark size=1.5pt,
     dashed,
]
coordinates {
(0.4375, 0.00341796875) 
(0.4462483723958333, 0.00396728515625) 
(0.4549967447916667, 0.0047607421875) 
(0.46374918619791666, 0.0054931640625) 
(0.47249755859375, 0.00640869140625) 
(0.48124593098958335, 0.00750732421875) 
(0.4899983723958333, 0.008544921875) 
(0.49874674479166664, 0.01043701171875) 
(0.5074991861979167, 0.01226806640625) 
(0.51624755859375, 0.0140380859375) 
(0.5249959309895833, 0.01611328125) 
(0.5337483723958333, 0.01849365234375) 
(0.5424967447916667, 0.02178955078125) 
(0.5512491861979166, 0.0244140625) 
(0.55999755859375, 0.02789306640625) 
(0.5687459309895834, 0.0311279296875) 
(0.5774983723958333, 0.03472900390625) 
(0.5862467447916667, 0.03839111328125) 
(0.5949991861979167, 0.04248046875) 
(0.60374755859375, 0.04632568359375) 
(0.6124959309895833, 0.0504150390625) 
(0.6212483723958333, 0.0552978515625) 
(0.6299967447916667, 0.0601806640625) 
(0.6387491861979167, 0.06597900390625) 
(0.64749755859375, 0.07196044921875) 
(0.6562459309895833, 0.0792236328125) 
(0.6649983723958334, 0.0867919921875) 
(0.6737467447916666, 0.0938720703125) 
(0.6824991861979167, 0.10333251953125) 
(0.69124755859375, 0.11236572265625) 
(0.6999959309895833, 0.1221923828125) 
(0.7087483723958333, 0.13214111328125) 
(0.7174967447916667, 0.1436767578125) 
(0.7262491861979167, 0.1549072265625) 
(0.73499755859375, 0.1671142578125) 
(0.7437459309895833, 0.18084716796875) 
(0.7524983723958333, 0.193359375) 
(0.7612467447916667, 0.20758056640625) 
(0.7699991861979166, 0.22357177734375) 
(0.77874755859375, 0.23809814453125) 
(0.7874959309895834, 0.25518798828125) 
(0.7962483723958333, 0.27154541015625) 
(0.8049967447916667, 0.28802490234375) 
(0.8137491861979167, 0.3070068359375) 
(0.82249755859375, 0.32427978515625) 
(0.8312459309895833, 0.34332275390625) 
(0.8399983723958333, 0.364013671875) 
(0.8487467447916667, 0.38446044921875) 
(0.8574991861979167, 0.40869140625) 
(0.86624755859375, 0.43084716796875) 
(0.8749959309895833, 0.457275390625) 
(0.8837483723958334, 0.48370361328125) 
(0.8924967447916666, 0.51025390625)
};
\addlegendentry{Experiment}

\end{axis}
\end{tikzpicture}
\hfill
\begin{tikzpicture}[scale=.6]
\small
\begin{axis}[
    width=12cm,
    height=8cm,
    xlabel={Proportion of table occupancy},
    ylabel={Extra hops for successful search},
    legend style={at={(0.03,0.97)},anchor=north west},
    grid=both,
    xticklabel={
        \pgfmathparse{\tick*100}
        \pgfmathprintnumber{\pgfmathresult}\%
    },
    ylabel style={yshift=-1em},
    yticklabel style={font=\footnotesize,/pgf/number format/fixed,/pgf/number format/precision=3},
    xticklabel style={font=\footnotesize},
    scaled ticks=false
]

\addplot[
    red,
    very thick,
]
coordinates {
(0.4375, 0.000306952524666837)
(0.44625, 0.0003706343007365831)
(0.455, 0.00044507399725168817)
(0.46375, 0.0005316576320255617)
(0.4725, 0.0006318893652592639)
(0.48125, 0.0007473954955561357)
(0.49, 0.0008799283143050562)
(0.49875, 0.0010313698697193008)
(0.5075, 0.0012037357040137753)
(0.51625, 0.0013991786391209793)
(0.525, 0.0016199926979816315)
(0.53375, 0.0018686172597872641)
(0.5425, 0.0021476415587573516)
(0.55125, 0.002459809647207045)
(0.56, 0.002808025955099769)
(0.56875, 0.0031953615902662325)
(0.5775, 0.0036250615364086247)
(0.58625, 0.004100552920387246)
(0.595, 0.004625454536654194)
(0.60375, 0.005203587835733492)
(0.6125, 0.005838989606147926)
(0.62125, 0.006535926606077158)
(0.63, 0.007298912433433058)
(0.63875, 0.008132726962274628)
(0.6475, 0.009042438721193147)
(0.65625, 0.010033430647467867)
(0.665, 0.011111429721869959)
(0.67375, 0.012282541076072961)
(0.6825, 0.013553287271576675)
(0.69125, 0.01493065358080492)
(0.7, 0.01642214026391553)
(0.70875, 0.01803582303702922)
(0.7175, 0.019780423179681872)
(0.72625, 0.021665389045280037)
(0.735, 0.023700991136660354)
(0.74375, 0.025898433414148547)
(0.7525, 0.028269984148918903)
(0.76125, 0.030829130464989918)
(0.77, 0.03359076179059972)
(0.77875, 0.03657138884951894)
(0.7875, 0.039789406685025286)
(0.79625, 0.04326541269382078)
(0.805, 0.04702259399844233)
(0.81375, 0.051087203060594666)
(0.8225, 0.05548914676054939)
(0.83125, 0.060262723030140686)
(0.84, 0.065447551738683)
(0.84875, 0.07108976477886175)
(0.8575, 0.07724354718839455)
(0.86625, 0.0839731615730217)
(0.875, 0.09135565026664727)
(0.88375, 0.09948450769820805)
(0.8925, 0.10847477445981046)
};
\addlegendentry{Predicted}

\addplot[
    blue,
    mark=*,
    mark size=1.5pt,
     dashed,
]
coordinates {
(0.4375, 0.0003255208333333333)
(0.4462483723958333, 0.00039208534694994075)
(0.4549967447916667, 0.0004650330888928635)
(0.46374918619791666, 0.0005966430056768827)
(0.47249755859375, 0.0007147716605954134)
(0.48124593098958335, 0.0007525090681570292)
(0.4899983723958333, 0.0008719336998222916)
(0.49874674479166664, 0.001019808765460301)
(0.5074991861979167, 0.0011946473385021207)
(0.51624755859375, 0.0014502691668045998)
(0.5249959309895833, 0.0016353673376064734)
(0.5337483723958333, 0.001837254333938128)
(0.5424967447916667, 0.0020926464852539677)
(0.5512491861979166, 0.002502306698652888)
(0.55999755859375, 0.0028555858310626704)
(0.5687459309895834, 0.0031979967805401536)
(0.5774983723958333, 0.003579330073418542)
(0.5862467447916667, 0.004115883283822427)
(0.5949991861979167, 0.004718690802656144)
(0.60374755859375, 0.005324275325690639)
(0.6124959309895833, 0.005872700578633734)
(0.6212483723958333, 0.006392538545173502)
(0.6299967447916667, 0.007253210013692613)
(0.6387491861979167, 0.008262251638754228)
(0.64749755859375, 0.00903040929057557)
(0.6562459309895833, 0.010131511232088492)
(0.6649983723958334, 0.011252524016398459)
(0.6737467447916666, 0.012368643555985022)
(0.6824991861979167, 0.013748203969451086)
(0.69124755859375, 0.015034053249039033)
(0.6999959309895833, 0.01659584609750568)
(0.7087483723958333, 0.018193613576603783)
(0.7174967447916667, 0.02002472608488533)
(0.7262491861979167, 0.021738765036446048)
(0.73499755859375, 0.023711060548183335)
(0.7437459309895833, 0.025954273646892765)
(0.7524983723958333, 0.02845880151837953)
(0.7612467447916667, 0.030857796497829854)
(0.7699991861979166, 0.033619573546119906)
(0.77874755859375, 0.036507563288658985)
(0.7874959309895834, 0.03991009378148655)
(0.7962483723958333, 0.0435186983228233)
(0.8049967447916667, 0.04711478194059726)
(0.8137491861979167, 0.05127833309165096)
(0.82249755859375, 0.05581857848884667)
(0.8312459309895833, 0.06048353541830856)
(0.8399983723958333, 0.06554994719964348)
(0.8487467447916667, 0.0711977678485819)
(0.8574991861979167, 0.07734211512819174)
(0.86624755859375, 0.0840484947554829)
(0.8749959309895833, 0.09142992666446552)
(0.8837483723958334, 0.09909296008103503)
(0.8924967447916666, 0.10779611561958603)
};
\addlegendentry{Experiment}

\end{axis}

\end{tikzpicture}
\caption{On the left, the proportion of full buckets (containing $b$ elements) as the occupancy increases. On the right, the expected number of additional hops for a successful search. The blue dots were obtained experimentally, using $b=16$, $n=2^{14}=16\,384$ and only one trial to illustrate the fact that our statements hold with high probability. The red line corresponds to the theoretical solutions  established in this paper, Theorem~\ref{thm:proportion-buckets-no-overflow} and Corollary~\ref{cor:successful-search}. Though asymptotic, our results fit the experiments particularly well. \label{fig:experiments}}
\end{figure}

Although the papers~\cite{larson1983analysis,martini2003double} correctly state some weaker versions of some results presented in Section~\ref{sec:no overflow}, they are not entirely complete, in the sense that some proofs are more heuristic than fully rigorous. For example, certain arguments based on expectations would require concentration around the mean to be made rigorous, and approximating the number of times a slot is accessed by independent Poisson distributions is only asymptotically valid, requiring careful handling of error terms. 
The equations that appear when studying the dynamics of that kind of hash tables are indeed quite involved, and probably very difficult to handle using elementary combinatorial and probabilistic techniques. More advanced methodologies, as concentration inequalities and Poissonization-Depoissonization techniques, should help to finalize the proofs, but it is not the path we followed  in the sequel.

Indeed, we use results stemming from the work of Wormald~\cite{wormald95,Wormald1999TheDE} on dynamic graphs  which were later revisited by Warnke~\cite{warnke2019wormaldsdifferentialequationmethod}, tailored to our needs, and which are particularly well suited to studying the dynamics of the parameters of interest in our hash tables. 
These theorems are powerful and make it possible to obtain results that hold with high probability during the whole trajectory, which is far more useful than results in expectation, and they also provide usable bounds on the error terms. In Section~\ref{sec:no overflow}, we present this method and explain how to use it to precisely analyze hash tables without overflow bits. More precisely, we can analyze after $k$ insertions the number of buckets containing $i$ elements, simultaneously for every $i\in\{0,\ldots,b\}$. Theorem~\ref{thm:proportion-buckets-no-overflow} states that with high probability, this number is $n\,u_i(\frac{k}{nb})+o(n)$, with $u_i(t)=\frac1{i!}\lambda_b(t)^ie^{-\lambda_b(t)}$, where $\lambda_b(t)$ is  implicitly defined by a single equation. In Theorem~\ref{lambda-number-hops} we link  $\lambda_b(t)$ to the typical number of hops, by proving that with high probability it is $n\lambda_b(\frac{k}{nb})+o(n)$. Consequently, we obtain the typical cost of a successful and unsuccessful search in a table containing $k$ elements, in Corollary~\ref{cor:successful-search}
 and Corollary~\ref{cor:unsuccessful-search}. Our theoretical results allow a very precise description of the behavior of SIMD hash tables, as depicted in Fig.~\ref{fig:experiments}. Since our statements hold with high probability, which is much stronger than in expectation, only one trial was used for the experimental plots. 

To conclude this related work section, we would like to mention two recent breakthroughs in the theoretical study of hash tables, although the problems they address are orthogonal to those studied in this paper, which focuses on how hash tables are implemented in contemporary programming languages. The first is Kuszmaul and Xi's paper~\cite{kuszmaul2024}, which makes the first tangible progress on a question open since 1968 by proving that quadratic probing supports constant-time operations at small constant load factors, a result that extends more generally to any fixed-offset open-addressing scheme. The second is due to Bender, Kuszmaul and Zhou~\cite{bender2024}. They show that, if insertions and deletions are allowed to rearrange existing elements (see also~\cite{FarachColtonKK24,FCKK26} for recent results when reordering the keys is not allowed), classical open addressing can achieve 
$\O(1)$ expected query time with only $\O(\log \log \varepsilon^{-1})$ expected insertion/deletion time at load factor $1-\varepsilon$, via a new ``rainbow hashing'' scheme. 

placed. 
They  also established a matching $\Omega(\log \log \varepsilon^{-1})$ lower bound for the amortized expect time of insertions and deletions, when the successful search time is $\O(\log \log \varepsilon^{-1})$, hence proving the optimality of their solution.

\paragraph*{Plan of the article and contributions}
In Section~\ref{sec:no overflow} we show how to apply Wormald's Differential Equation method to study the tables without overflow. In particular we obtain precise estimates of the total number of buckets with a given number of occupied slots in Theorem~\ref{thm:proportion-buckets-no-overflow} and the total number of hops in Theorem~\ref{lambda-number-hops}. These results involve an implicitly defined function $\lambda_b(t)$ tightly related to the number of hops.
After presenting some properties of $\lambda_b(t)$ in Section~\ref{sec:lambda}, we move on to study the hash tables with overflow bits in Section~\ref{sec:analysis-overflow}. To overcome technical difficulties, we change the scale of (discrete) time by considering every insertion attempt, successful or not, as one time unit. We can then derive the system of equations for the dynamics of the table, considering simultaneously the number of buckets with $i$ elements for $i\in\{0,\ldots,b-1\}$ and the number of full buckets with $j$ overflow bits set to 1, for $j\in\{0,\ldots,d\}$. As Wormald's technique provides tight asymptotic results with high probability, we obtain a precise estimation of these quantities, in Theorem~\ref{thm:proportion-buckets-overflow}, and of the cost of an unsuccessful search, in Theorem~\ref{thm:unsuccessful-overflow} (insertions and successful searches are not affected by overflow bits). 

\section{Analysis without overflow}\label{sec:no overflow}

Throughout the article, the hash table is made of $n$ buckets, each containing up to $b$ elements. We denote by $U_j(k)$ the number of buckets with exactly $j$ elements at time $k$. Notice that $U_0(0)=n$ and $U_j(0)=0$ for $j\in[b]$, where $[b]:=\{1,\ldots,b\}$, since we start with an empty table (or just after resizing, see the discussion in Section~\ref{sec:intro}). Using the uniform hashing with random probing model, each of $U_j(k)$ 
is a random variable, which is our main interest for the analysis. They are linked by the equation $\sum_{j=0}^b U_j(k) = n$, which always holds.

\subsection{Recurrence relations}

In our model, the probe sequence when looking for a bucket for inserting a new element at time $k+1$ is simply a sequence of independent uniform random variables on $\{0,\ldots,n-1\}$. The bucket it is inserted into goes from $j$ to $j+1$ elements, so $U_j(k+1) = U_j(k)-1$ and $U_{j+1}(k+1) = U_{j+1}(k)+1$. Since it cannot be inserted in one of the $U_b(k)$ buckets that are already full, by uniformity this yields:
\begin{equation}
    U_j (k+1) - U_j(k) = \begin{cases}
         +1, & \text{with probability } \frac{U_{j-1}(k)}{n-U_b(k)} \text{ if } j>0\,, \\
         -1, & \text{with probability } \frac{U_{j}(k)}{n-U_b(k)} \text{ if } j<b\,,\\ 
         0, & \text{otherwise}.
    \end{cases}
\end{equation}
If $\mathcal{F}_k$ denotes the $\sigma$-algebra of all random choices up to time $k$, this means that the conditional expectation of the difference, which is a random variable, satisfies
\begin{equation}
    \label{eq:conditional-expectation-bucketsu}
    \mathbb{E}[U_j (k+1) - U_j(k) \,\,\big|\,\, \mathcal{F}_k] = {\mathbf{1}}_{j>0}\frac{\tfrac{U_{j-1}(k)}{n}}{1-\frac{U_b(k)}{n}} - {\mathbf{1}}_{j<b} \frac{\tfrac{U_{j}(k)}{n}}{1-\frac{U_b(k)}{n}}\,,
\end{equation}
where ${\mathbf{1}}_{P}$ is $1$ if $P$ holds and $0$ otherwise, is used to account for the border effects.

That kind of system of equations is amenable to our main tool, Wormald's Theorem, which gives precise asymptotic estimates for the $U_j(k)$'s, with high probability. In a nutshell, it consists in moving to continuous time, using the \emph{fluid limits} $u_j(t) =\lim_n U_j(\lfloor t b n\rfloor)/n$, for $t\in [0,1)$. Eq.~\eqref{eq:conditional-expectation-bucketsu} directly produces a system of differential equations $u_j^\prime(t)= f_j(u_0(t),\ldots,u_b(t))$, where\footnote{The factor $\bucketSize$ appearing in $f_j$ but not in Eq.~\eqref{eq:conditional-expectation-bucketsu} comes from the fact that the time normalization is performed with respect to $n\cdot \bucketSize$ rather than $n$.}
$f_j(u_0,\ldots,u_b):=b\cdot{\mathbf{1}}_{j>0}\frac{u_{j-1}}{1-u_b}-b\cdot{\mathbf{1}}_{j<b} \frac{u_j}{1-u_b}$. A variant of Wormald Theorem (see Section~\ref{sec:wormaldstheorem} for the details) then implies that
 with high probability $U_j(\lfloor t bn\rfloor) = u_j(t) n +o(n)$. After this, it remains to study the solutions $u_j$ of the system of differential equations.

\subsection{Wormald's differential equation method}
\label{sec:wormaldstheorem}

Fix a bounded, connected and open set $\mathcal{D}\subseteq \mathbb{R}\times \mathbb{R}^\ell$. Consider functions $f_j\colon\mathcal{D}\to \mathbb{R}$ that  are continuous and satisfy a Lipschitz condition: there exists $L>0$ such that, for all $(t_1;u_1,\ldots,u_\ell),(t_2;w_1,\ldots,w_\ell)\in U$, and for all $j\in\{1,\ldots,\ell\}$  \begin{equation*}
   |f_j(t_1;u_1,\ldots,u_\ell)-f_j(t_2;w_1,\ldots,w_\ell)| \leq L\times \max \{|t_1-t_2|,|u_1-w_1|,\ldots,|u_\ell-w_\ell|\}. 
\end{equation*} 
Suppose $(t_0,v_1,\ldots,v_\ell)\in \mathcal{D}$, then the well-known Picard-Lindel\"of Theorem (see for instance~\cite[Theorem 7.19]{apostol}) ensures that, for $t$ in some open interval containing $t_0$, there is a unique solution $(y_1(t),\ldots,y_b(t))$ to the system of ordinary differential equations:
\begin{equation}
    \label{eq:sys-ode}
    y_i'(t) =  f_i(t;y_1(t),\ldots,y_\ell(t)),\ \forall i\in\{1\,\ldots,\ell\},
\end{equation}
with the initial condition $y_1(t_0)=v_1,\ldots,y_\ell(t_0)=v_\ell$.  
In principle, this statement is local, but, by connectedness, the solution can be extended as long as $(t,\boldsymbol{y}(t))$ is in $\mathcal{D}$. The solution stops existing, if ever, when $(t,\boldsymbol{y}(t))$ reaches the boundary of the domain $\mathcal{D}$. 

We can now present our main tool: Wormald's differential equation method. The statement of Theorem~\ref{thm:wormald-probabilistic-increment} is an adaptation of Warnke's result~\cite{warnke2019wormaldsdifferentialequationmethod}, which revisits and makes more precise  the original theorems of Wormald~\cite{wormald95,Wormald1999TheDE}. We describe the different parameters intuitively after the statement of Theorem~\ref{thm:wormald-probabilistic-increment}.

\begin{theorem}[Wormald's method for increments bounded with high probability]

    \label{thm:wormald-probabilistic-increment}
        Let $\ell\geq 1$ be a fixed integer. Consider discrete random processes $(Y_j)_{j=1}^\ell$ taking real values, where $Y_j(k)$ is the value of the $j$-th process at time $k$. 

    Consider an integer variable
    $N\geq 1$. Suppose that there are functions $m:=m(N)$, $\beta=\beta(N)$, $\gamma:=\gamma(N)$ and $\delta:=\delta(N)$, such that $\lim_N\gamma(N) = \lim_N\delta(N)=0$ and 
    \begin{enumerate}
        \item  for all $k<m$ and $j\in [\ell]$,
        $
        \Pr(|Y_j(k+1)-Y_j(k)|\leq \beta(N))\geq 1 - \gamma(N)\,;
        $
        \item for all $j\in [\ell]$, and  $k<m,$
        \[
        \mathbb{E}[Y_j(k+1)-Y_j(k) \,\,\big|\,\, \mathcal{F}_k] = f_j(k/N; Y_1(k)/N,\ldots,Y_\ell(k)/N)\,,
        \]
        where the functions $f_j $ are continuous, uniformly bounded by $R>0$, and are $L$-Lipschitz on an open and connected domain $\mathcal{D}$, for some $L>0$. 
        \item The initial vector $\boldsymbol{Y}(0)=(Y_1(0),\ldots,Y_\ell(0))$ satisfies $\|{\boldsymbol{Y}}(0)-N \boldsymbol{v}\|_\infty \leq \delta\cdot N$ for some fixed ${\boldsymbol{v}}=(v_1,\ldots,v_\ell)$ with $(0,v_1,\ldots,v_\ell)\in \mathcal{D}$.
    \end{enumerate}    
        Let  $\boldsymbol{y}(t)=(y_1(t),\ldots,y_\ell(t))$ be the unique solution to the system of differential equations of Eq.~\eqref{eq:sys-ode}, with initial condition $y_1(0)=v_1,\ldots,y_\ell(0)=v_\ell$. Let $\sigma>0$ be the supremum of $t\geq 0$ such that the solution $(s,\boldsymbol{y}(s))\in \mathcal{D}$ for all $s\in [0,t)$ and fix $\theta \in (0,\sigma)$.

        Whenever $\delta\geq R/N$, with probability at least        \reviewer{See Appendix~\ref{sec:proof-wormald-probabilistic-increment} for details.} $1-\theta \ell N \gamma-2 b \exp(-N \delta^2 / (8\theta \beta^2)) $, we have $|Y_j(k) - N y_j(k/N) |\leq 3 e^{L \theta} \delta \cdot N$, for all $j \in [\ell]$ and $k\leq \min(m,\theta N)$.

\end{theorem}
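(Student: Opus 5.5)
The plan is to derive Theorem~\ref{thm:wormald-probabilistic-increment} from Warnke's version of the differential equation method~\cite{warnke2019wormaldsdifferentialequationmethod}. Warnke's theorem assumes increments bounded \emph{deterministically} by $\beta$, so the approach is to replace the increments with a truncated process that satisfies this assumption. The event that truncation ever takes effect is then charged to the $\gamma$ term by a union bound.

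\textbf{Step 1: the truncated process and the bad event.} Let $E_{k,j}$ be the event $|Y_j(k+1)-Y_j(k)|>\beta$, and let $\mathcal{B}$ be the union of the $E_{k,j}$ over $j\in[\ell]$ and $k<\min(m,\theta N)$. By hypothesis~1 and a union bound,
\[
\Pr(\mathcal{B})\leq \theta\ell N\gamma ,
\]
which is exactly the first loss term in the statement.

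\textbf{Step 2: Azuma--Hoeffding on the drift-corrected martingale.} I would work directly with the martingale
\[
M_j(k)=Y_j(k)-Y_j(0)-\sum_{i<k} f_j\bigl(i/N;\boldsymbol Y(i)/N\bigr),
\]
stopped at the first time $\tau$ at which an increment exceeds $\beta$ or $(k/N,\boldsymbol Y(k)/N)$ leaves $\mathcal{D}$.

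\textbf{Step 3 (main obstacle): conditioning breaks the martingale property.} Once the increment is truncated, the conditional expectation is no longer exactly $f_j$. There are two ways to handle this.
\begin{itemize}
\item Note that Warnke's proof only uses the one-sided supermartingale and submartingale bounds
\[
\mathbb{E}[\Delta Y_j\mid\mathcal F_k]\le f_j+\epsilon
\quad\text{and}\quad
\mathbb{E}[\Delta Y_j\mid\mathcal F_k]\ge f_j-\epsilon .
\]
This inequality-based version of his theorem can then be invoked directly.
\item Alternatively, define $\tilde Y_j$ to follow $Y_j$ up to time $\tau$ and to evolve deterministically by $f_j$ afterwards. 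Then $\tilde Y_j$ has exact drift $f_j$ and increments bounded by $\max(\beta,R)$. Moreover, $\tilde Y=Y$ on the complement of $\mathcal{B}$.
\end{itemize}
I would take the second route, because it gives exact drift $f_j$ and keeps bounded increments, with constant $\max(\beta,R)$. Since the truncation only occurs where $Y$ already jumps by more than $\beta$, we can take $\beta\ge R$ without loss of generality.

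\textbf{Step 4: concentration.} Azuma--Hoeffding applied to the stopped martingale $M_j$, with increments at most $2\beta$ over at most $\theta N$ steps, gives for each $j$ and each sign
\[
\Pr\bigl(|M_j(k)|\ge \delta N\bigr)\leq \exp\!\Bigl(-\frac{N\delta^2}{8\theta\beta^2}\Bigr).
\]
Using Doob's maximal inequality over $k\le\theta N$, summed over both signs and all coordinates, produces the second loss term.

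\textbf{Step 5: deterministic comparison with the ODE.} On the good event, $|M_j(k)|\le\delta N$ for all $k$, and the initial error is at most $\delta N$. The Riemann-sum discretization error is at most $R\,k/N\le\delta N$, using $\delta\ge R/N$ and Lipschitz continuity over unit steps. A discrete Gr\"onwall inequality with constant $L$ then yields
\[
|Y_j(k)-Ny_j(k/N)|\le 3e^{L\theta}\delta N .
\]

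One also has to check that the trajectory stays inside $\mathcal{D}$ up to time $\theta N$. Since $\theta<\sigma$, the ODE solution stays a positive distance away from $\partial\mathcal{D}$. For $\delta$ small, the scaled process therefore cannot exit $\mathcal{D}$ before time $\theta N$ without first violating the bound, so $\tau$ does not trigger early.
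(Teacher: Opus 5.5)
Your Step~3 is where the argument breaks. The second route does not escape the problem you flagged in the first bullet, because no choice of $\tau$ gives $\tilde Y_j$ both exact drift $f_j$ and increments bounded by $\max(\beta,R)$.

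\emph{Case 1.} Let $\tau$ be the first $k$ with $|Y_j(k+1)-Y_j(k)|>\beta$, so that the large jump is excluded. Then $\{\tau\le k\}$ is not $\mathcal{F}_k$-measurable. On $\{\tau\ge k\}\in\mathcal{F}_k$, writing $f_j$ for $f_j(k/N;\boldsymbol{Y}(k)/N)$, one gets
\[
\mathbb{E}\bigl[\tilde Y_j(k+1)-\tilde Y_j(k)\,\big|\,\mathcal{F}_k\bigr]=f_j-\mathbb{E}\bigl[(\Delta Y_j(k)-f_j)\,\mathbf{1}_{|\Delta Y_j(k)|>\beta}\,\big|\,\mathcal{F}_k\bigr],
\]
which is the truncation bias again, merely relabelled.

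\emph{Case 2.} Let $\tau$ be a genuine stopping time, the first $k$ with $|Y_j(k)-Y_j(k-1)|>\beta$. Then the stopped process is a martingale, but its last increment is the large jump itself, so Azuma--Hoeffding is unavailable.

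Your first route fails for the same reason. Its $\epsilon$ is exactly this bias, and Hypothesis~1 gives no control over it, since a jump of probability $\gamma$ can have size $1/\gamma$.

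This cannot be repaired in the write-up, because under Hypotheses~1--3 as written the theorem is false. Take $\ell=1$, $f\equiv 0$, $\mathcal{D}=(-1,2)\times(-2,2)$, $R=L=\theta=1$, $m=N$ and $Y(0)=0$. Let the increments be i.i.d., equal to $N^2$ with probability $N^{-2}$ and to $-1/(1-N^{-2})$ otherwise. They have mean $0$, so the hypotheses hold with $\beta=2$, $\gamma=N^{-2}$ and $\delta=N^{-1/3}$, even if Hypothesis~1 is read conditionally on $\mathcal{F}_k$. The theorem then promises $|Y(k)|\le 3eN^{2/3}$ for all $k\le N$ with probability $1-O(1/N)$. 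Yet with probability $(1-N^{-2})^N\ge 1-1/N$ no large jump occurs, and then $|Y(N)|>N$.

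The paper's proof has the same hole. It feeds Lemma~\ref{lemma:conditioning-wormald} the events ``the increment at step $k$ is at most $\beta$'', which lie in $\mathcal{F}_{k+1}$. That lemma instead needs an event $\mathcal{E}_k\in\mathcal{F}_k$ that certifies Condition~1 for the step $k\to k+1$.

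The missing ingredient is a conditional tail bound such as $\mathbb{E}\bigl[|\Delta Y_j(k)|\,\mathbf{1}_{|\Delta Y_j(k)|>\beta}\,\big|\,\mathcal{F}_k\bigr]\le\epsilon$ for all $j$, with $\epsilon$ small compared with $\delta$. With it, Steps~1--5 can be salvaged as follows. Freeze all coordinates of your $\tilde Y$ into the deterministic recursion after the first large jump of any coordinate. Then $\tilde Y$ satisfies the trend hypothesis up to $O(\epsilon)$, with increments at most $\max(\beta,R)$. Warnke's Theorem~2 in its approximate-trend form applies to it, and $\tilde Y=Y$ off $\mathcal{B}$.

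In the hash-table application this bound does hold. There $R(k+1)-R(k)$ is conditionally geometric with parameter at most $\theta$, which gives $\epsilon=O(\beta\theta^{\beta})$. This is superpolynomially small for $\beta=(\log N)^2$.
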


Let us explain the main parameters introduced in  Theorem~\ref{thm:wormald-probabilistic-increment}. Here $N$ controls the discrete time. Indeed, notice that the final implication is that $Y_j(k)\approx Ny_j(k/N)$ with high probability, where the $y_i$'s are the solutions of the system of differential equations defined by the $f_j$'s.
Hence, by renormalizing by $N$, we obtain in the limit a system of continuous equations, which provide an accurate first-order approximation of the $Y_j$.
For this result to be useful, we need that  $\gamma N$ tend to $0$ and  $\frac{N\delta^2}{\beta^2}$ tend to $\infty$. 
Hence, $\beta$ is relatively small and Condition~\textcolor{darkgray}{\bf 1} means that it bounds the increment of $Y_j$ with high probability. 
In the sequel, the value $\theta \in (0,1)$ corresponds to the \emph{maximum load factor} of the table (typically around 80\

The role of $m=m(N)$ is to delimit the applicability of the recurrence conditions~\textcolor{darkgray}{\bf 1} and~\textcolor{darkgray}{\bf 2}. In our case simply considering $m(N)=\lfloor\theta N\rfloor$ works. We have chosen to keep the parameters $m$ and $\sigma$ in the statement of the theorem to give a more general statement, since this method should apply for the analysis of the dynamic of many discrete systems.

The equation for the conditional expectation is given exactly in Condition~\textcolor{darkgray}{\bf 2} (called the \emph{trending hypothesis}) since it is sufficient for our needs, but can be weakened~\cite{wormald95,Wormald1999TheDE} to an approximation up to uniform $o(1)$.

The upper bound  of $\theta \ell N \gamma+2 \ell \exp(-N \delta^2 / (8\theta \beta^2)) $ on the probability of failure is obtained quite directly from Theorem~2 and Lemma~9 of Warnke’s article~\cite{warnke2019wormaldsdifferentialequationmethod}.\reviewer{See Appendix~\ref{sec:proof-wormald-probabilistic-increment} for details.}

\subsection{Results without overflow bits}
\label{sec:no-overflow}

Let $N=bn$ be the maximum number of elements in the hash table.
In what follows, we fix $\theta \in (0,1)$ and suppose that the time $k\leq \theta N$. This parameter $\theta$ can be seen as the classical \emph{maximal load factor}, the threshold that triggers resizing. Observe that, since there have been $k$ insertions we have $\sum_{i=0}^b U_i(k) i = k$, so that $U_b(k)\leq k/b$ and $U_b(k)/n \leq k / N \leq \theta$.

Applied to Eq.~\eqref{eq:conditional-expectation-bucketsu}, Wormald's method yields the following system for the associated $u_i(t)=\lim_n U(\lfloor tbn \rfloor)/n$. As the continuous time for $t$ is scaled to reaching a full table with $N=bn$ elements at time $t=1$, differentiating $tb$ produces an extra $b$ factor: 
\begin{equation}
    \label{eq:sys-ode-occupied-buckets-no-overflow}
    \begin{cases}
            u_0'(t) &=  - b\frac{u_0(t)}{1-u_b(t)}\,, \\
            u_i'(t) &=  b\frac{u_{i-1}(t)}{1-u_b(t)} - b\frac{u_i(t)}{1-u_b(t)}\,,\text{ for all }i\in\{1,\ldots, b-1\}, \\
     u_b'(t)&= b\frac{u_{b-1}(t)}{1-u_b(t)}\,.
    \end{cases}
\end{equation}    
We also have the following properties of this system:
\begin{itemize}
    \item We fix the initial condition $u_0(0)=1,u_1(0)=\ldots u_\bucketSize(0)=0$, which corresponds exactly to the initial condition for $(U_i)_{i=0}^b$, namely $U_0(0)=n,U_1(0)=\ldots=U_b(0)=0$. 

    \item The sum $\sum_{i=0}^b u_i(t) = 1$, since the  derivatives in Eq.~\eqref{eq:sys-ode-occupied-buckets-no-overflow} sum to zero.

    \item The sum $\sum_{i=0}^b u_i(t) i = b t$, which can be derived by simple computations.
\end{itemize}
  One can prove that\reviewer{See Lemma~\ref{lemma:sols-are-positive} in Appendix~\ref{sec:proof:thm:proportion-buckets-no-overflow} for a proof of $u_i(t)\geq 0$.} $u_i(t)\geq 0$, and that the latter property implies $u_b(t)\leq t$. Therefore, the system satisfies all of the hypothesis of the Picard-Lindel\"of Theorem and is well-defined defined in $t\in [0,1-\varepsilon)$, for every $\varepsilon$.

  Thus taking any $\theta<1$ is a valid choice, which allows us to  apply Theorem~\ref{thm:wormald-probabilistic-increment} to obtain that $U_i(k) = n u_i(k/N)+o(n)$.  The associated system of differential equations already appears, in a very similar form, in \cite[p. 13--18]{Wormald1999TheDE}, where it comes up as the approximation to a process (the $d$-process) of adding random edges to a graph, in such a way that the degree never exceeds $d$. Although the discrete objects counted are not the same, the system of differential equations is, but restricted only to $b=2$ in \cite{Wormald1999TheDE} due to the edges of the graphs. A direct adaptation gives the following solution of the system, characterized in terms of an implicit function $\lambda_b(t)$. \reviewer{See Apppendix~\ref{sec:proof:thm:proportion-buckets-no-overflow} for the remaining details of the proof.}

\begin{restatable}{theorem}{thmproportionoccupied}

\label{thm:proportion-buckets-no-overflow}
Fix $\theta \in (0,1)$. Starting from an empty table with $n$ buckets of size $\bucketSize$, the number $U_i(k)$ of buckets containing exactly $i \in [b]$ elements after $k\leq \theta N$ insertions satisfies
\[
U_i(k) = n  u_i(k/N) + o(n)\,,
\]
with high probability, where the $o$-term is uniform in $i$ and $k$ (but depends on $\theta$), and $u_1(t),\ldots,u_\bucketSize(t)$ are the solutions to the system of ordinary differential equations of Eq.~\eqref{eq:sys-ode-occupied-buckets-no-overflow} with initial condition $u_0(0)=1,u_1(0)=0,\ldots,u_\bucketSize(0)=0$.

Moreover,  for each $i \in \{0,1,\ldots,b-1\}$, we have 
$u_i(t) =  \tfrac{1}{i!} \lambda_b(t)^i e^{-\lambda_b(t)}$,
 where $\lambda_b(t)$ is defined implicitly from the equation $e^{-\lambda_b(t)} \sum_{i<b} (b-i)\tfrac{1}{i!} \lambda_b(t)^i  = b - b t$. 

\end{restatable}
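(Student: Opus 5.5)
The proof has two parts: a probabilistic part, where I apply Theorem~\ref{thm:wormald-probabilistic-increment}, and an analytic part, where I solve the system of Eq.~\eqref{eq:sys-ode-occupied-buckets-no-overflow} explicitly.

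\textbf{Probabilistic part.} I take $\ell=b+1$ processes $Y_j(k)=U_j(k)$ and $N=bn$. Since $N=bn$, the rescaled variables appearing in Theorem~\ref{thm:wormald-probabilistic-increment} are $U_j(k)/N=(U_j(k)/n)/b$. I therefore define $f_j(t;y_0,\ldots,y_b)$ as $\mathbf{1}_{j>0}\frac{by_{j-1}}{1-by_b}-\mathbf{1}_{j<b}\frac{by_j}{1-by_b}$. This reproduces Eq.~\eqref{eq:conditional-expectation-bucketsu} exactly. Its solution is $y_j=u_j/b$, and $N y_j(k/N)=n u_j(k/N)$.

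Increments are bounded by $1$ deterministically, so I set $\beta=1$ and $\gamma=0$. The initial condition is exact, so any $\delta\ge R/N$ is admissible. For the domain I choose $\mathcal{D}=\{(t,\boldsymbol{y}): -\eta<t<\theta', |y_j|<2,\ b y_b<\theta'\}$ with $\theta<\theta'<1$. On $\mathcal{D}$ the denominator is at least $1-\theta'$, so the $f_j$ are bounded and Lipschitz.

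I must also check that the solution stays in $\mathcal{D}$ up to time $\theta$, i.e.\ that $\sigma>\theta$. This uses nonnegativity of the $u_i$ (Lemma~\ref{lemma:sols-are-positive}) together with $\sum_i i u_i=bt$, which give $u_b(t)\le t<\theta'$. On the discrete side, $U_b(k)/n\le\theta$ holds deterministically, so the trending hypothesis holds for all $k<m=\lfloor\theta N\rfloor$.

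Choosing $\delta=N^{-1/3}$, the failure probability is $O(\exp(-cN^{1/3}))$ and the error is $O(N^{2/3})=o(n)$. The error is uniform in $i$ and $k\le\theta N$, as required.

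\textbf{Analytic part.} I introduce the time change $\lambda'(t)=\frac{b}{1-u_b(t)}$ with $\lambda(0)=0$. In the variable $\lambda$ the system for $i<b$ becomes linear: $\frac{du_0}{d\lambda}=-u_0$ and $\frac{du_i}{d\lambda}=u_{i-1}-u_i$. With initial conditions $u_0=1$ and $u_i=0$, this gives by induction $u_i=\frac{\lambda^i}{i!}e^{-\lambda}$ for $i<b$. By uniqueness (Picard--Lindel\"of), these are the solutions.

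To identify $\lambda$ as a function of $t$, I use the two conservation laws $\sum_{i=0}^b u_i=1$ and $\sum_{i=0}^b iu_i=bt$. Multiplying the first by $b$ and subtracting the second eliminates $u_b$:
\[
\sum_{i<b}(b-i)u_i=b-bt .
\]
Substituting the Poisson form of the $u_i$ gives exactly the implicit equation of the statement.

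The left side of this equation is strictly decreasing in $\lambda$: its derivative is $-\sum_{i<b}u_i<0$. It starts at $b$ when $\lambda=0$ and tends to $0$ as $\lambda\to\infty$. So for each $t\in[0,1)$ the equation has a unique root, which is the $\lambda_b(t)$ defined in the statement. Finally, $\lambda(t)$ from the time change is increasing and continuous, so it coincides with this root.

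\textbf{Main obstacle.} The delicate step is the domain and Lipschitz bookkeeping for Theorem~\ref{thm:wormald-probabilistic-increment}. The denominator $1-u_b$ must stay away from $0$ both for the ODE solution, which requires the positivity lemma, and for the process itself, so that $\sigma>\theta$ is genuinely established rather than assumed. The explicit solution afterwards is routine.
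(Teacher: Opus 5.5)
Your proposal is correct, and its probabilistic half matches the paper's: Theorem~\ref{thm:wormald-probabilistic-increment} with deterministic unit increments, so $\gamma=0$. You also supply bookkeeping the paper skips. You rescale the drift to the variables $U_j/N$ with $N=bn$, choose an explicit domain on which $1-by_b\ge 1-\theta'$, and check that both the process and the ODE trajectory stay in $\mathcal{D}$ up to time $\theta$, so $\sigma>\theta$ is proved rather than assumed.

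The analytic half runs in the opposite direction from the paper. The paper first defines $\lambda_b$ as the root of the implicit equation, using the Implicit Function Theorem. It substitutes the Poisson candidates, gets $u_i'=\lambda_b'(u_{i-1}-u_i)$ from the chain rule, and recovers $\lambda_b'=b/(1-u_b)$ by differentiating the implicit equation written as $\sum_i iu_i=bt$. It concludes by uniqueness of the ODE solution, and only afterwards deduces Lemma~\ref{lemma:lambda-integral}.

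You instead start from the actual solution and define $\lambda$ through $\lambda'=b/(1-u_b)$, so Lemma~\ref{lemma:lambda-integral} is built in. This time change turns the first $b$ equations into the linear Poisson chain. You then identify $\lambda$ with $\lambda_b$ using the two conservation laws and the strict monotonicity of $\lambda\mapsto e^{-\lambda}\sum_{i<b}(b-i)\lambda^i/i!$ on $[0,\infty)$.

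What your route buys: it explains where the Poisson form comes from, it avoids the Implicit Function Theorem, and it is the analytic counterpart of the hop time scale in Proposition~\ref{prop:relationtion-time-scales}. Restricting to $\lambda\ge 0$ is also exactly what makes the root unique; for negative $\lambda$ the derivative of the left-hand side need not keep a sign. What it costs: it presupposes that the solution exists with $0\le u_b<1$, which is the positivity lemma. The paper's verification direction, once $\lambda_b$ is defined on $[0,1)$, exhibits a solution there outright.
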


\begin{remark}
    The condition $e^{-\lambda_b(t)} \sum_{i<b} (b-i)\tfrac{1}{i!} \lambda_b(t)^i  = b - b t$ follows from the equations $\sum_i u_i(t)=1$ and $\sum_i i u_i (t) = bt$ by subtracting. Two equations are needed, as we have to define both $u_b(t)=1-u_0(t)-\ldots-u_{b-1}(t)$  and $\lambda_b(t)$.
\end{remark}

Differentiating $u_0(t) = e^{-\lambda_b(t)}$,  we obtain $u_0'(t) = - u_0(t) \lambda_b'(t)$. Using Eq.~\eqref{eq:sys-ode-occupied-buckets-no-overflow}, we have $\lambda_b'(t) = -\tfrac{u_0'(t)}{u_0(t)} = \tfrac{1}{1-u_b(t)}$. This proves the following lemma.

\begin{lemma}
    \label{lemma:lambda-integral}
    For each $t\in [0,1)$, $\lambda_b(t) = b \int_0^t \frac{dt}{1-u_b(t)}$.
\end{lemma}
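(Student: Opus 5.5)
The approach is to read the identity directly off the first equation of the system in Eq.~\eqref{eq:sys-ode-occupied-buckets-no-overflow}, using the closed form $u_0(t)=e^{-\lambda_b(t)}$ from Theorem~\ref{thm:proportion-buckets-no-overflow}. Then apply the fundamental theorem of calculus. Note that the equation for $u_0'$ carries a factor $b$, so the correct derivative is $\lambda_b'(t)=b/(1-u_b(t))$. This is consistent with the factor $b$ in the statement of the lemma.

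First I will make sure that $\lambda_b$ is well defined, differentiable, and satisfies $\lambda_b(0)=0$. Set $g(\lambda)=e^{-\lambda}\sum_{i<b}(b-i)\lambda^i/i!$. A telescoping computation gives
\[
g'(\lambda)=-e^{-\lambda}\sum_{i<b}\lambda^i/i!<0,
\]
so $g$ is strictly decreasing on $[0,\infty)$, with $g(0)=b$ and $g(\lambda)\to 0$ as $\lambda\to\infty$. Hence $g(\lambda)=b-bt$ has a unique solution $\lambda_b(t)$ for $t\in[0,1)$. In particular $\lambda_b(0)=0$. By the implicit function theorem, since $g'\neq 0$, the map $t\mapsto\lambda_b(t)$ is $C^1$.

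Alternatively, and more directly, the equation $u_0'=-b\,u_0/(1-u_b)$ is linear in $u_0$ with continuous coefficient. Its solution is
\[
u_0(t)=\exp\Bigl(-b\int_0^t \frac{ds}{1-u_b(s)}\Bigr),
\]
which is positive, using $u_b(s)\le s<1$ so that the integrand is continuous and bounded on $[0,t]$. Comparing this with $u_0(t)=e^{-\lambda_b(t)}$ and taking logarithms gives $\lambda_b(t)=-\ln u_0(t)=b\int_0^t ds/(1-u_b(s))$ immediately.

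Equivalently, in derivative form, differentiate $\lambda_b=-\ln u_0$ to get $\lambda_b'=-u_0'/u_0=b/(1-u_b)$, then integrate from $0$ using $\lambda_b(0)=0$. There is no real obstacle here. The only care needed is justifying positivity of $u_0$, so that the logarithm makes sense, and continuity of $1/(1-u_b)$ on $[0,t]$; both follow from the bound $u_b(t)\le t<1$ established before Theorem~\ref{thm:proportion-buckets-no-overflow}.
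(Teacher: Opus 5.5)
Your proof is correct and follows the paper's route: combine $u_0=e^{-\lambda_b}$ with the first equation of the system to get $\lambda_b'=b/(1-u_b)$, then integrate from $\lambda_b(0)=0$; the integrated form $u_0(t)=\exp\bigl(-b\int_0^t ds/(1-u_b(s))\bigr)$ is the same one the paper uses in Lemma~\ref{lemma:sols-are-positive}. You are also right that the derivative carries the factor $b$, which the paper's one-line derivation drops but the lemma's statement keeps.
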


Lemma~\ref{lemma:lambda-integral} leads to a connection between $\lambda_b$ and the total number of hops $R(k)$, stated formally in Theorem~\ref{lambda-number-hops}. Indeed, the expected increment of $R(k)$ is  $\mathbb{E}[ R(k+1)-R(k)  \big| \mathcal{F}_k] = \sum_{r\geq 1} r (1-U_b(k)/n) (U_b(k)/n)^{r-1}  = \frac{1}{1-\nicefrac{U_b(k)}{n}}$. 
Thus Wormald's Method almost applies here, the only detail to deal with being the probabilistic upper bound of $R(k+1)-R(k)$, Condition~\textcolor{darkgray}{\bf 1} of Theorem~\ref{thm:wormald-probabilistic-increment}. This is done using Lemma~\ref{lemma:increment-small-rk} below, which provides a bound to the increments of $R(k)$, with high probability. 
\begin{restatable}{lemma}{incrementsmallrk}

    \label{lemma:increment-small-rk}
    Fix $\theta \in (0,1)$. The total number of hops satisfies $R(k+1)-R(k) > (\log N)^2$ with probability at most $\tfrac{1}{\theta(1-\theta)}\exp(-(1-\theta) \times (\log N)^2)$, for each $k\leq \theta N$.

\end{restatable}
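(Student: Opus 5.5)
Given $\mathcal{F}_k$, the number of hops $R(k+1)-R(k)$ made while inserting the $(k+1)$-st key is a geometric random variable. Each probe is uniform over the $n$ buckets and independent of the past, so it hits a full bucket with probability $p_k:=U_b(k)/n$. The insertion therefore makes at least $r$ hops exactly when its first $r$ probes all land on full buckets, and
\[
\Pr\bigl(R(k+1)-R(k)\geq r \,\big|\, \mathcal{F}_k\bigr) = p_k^{\,r}.
\]
Note that $U_b(k)$ stays fixed during this insertion, because no bucket changes state until the key is placed.

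The key deterministic input is the observation already made in Section~\ref{sec:no-overflow}: after $k$ insertions, $\sum_i iU_i(k)=k$. Hence $bU_b(k)\leq k$, so $p_k\leq k/(bn)=k/N\leq\theta$ for every $k\leq\theta N$. This bound holds surely, not just with high probability, so we may take expectations of the conditional bound and obtain
\[
\Pr\bigl(R(k+1)-R(k)>(\log N)^2\bigr)\leq \theta^{\lfloor(\log N)^2\rfloor+1}\leq \theta^{(\log N)^2}.
\]

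It remains to compare $\theta^{x}=\exp(x\log\theta)$ with the form stated in the lemma. Since $\log\theta\leq\theta-1=-(1-\theta)$, we get $\theta^{(\log N)^2}\leq\exp(-(1-\theta)(\log N)^2)$. Finally, $\tfrac{1}{\theta(1-\theta)}\geq 4\geq 1$, so inserting this prefactor only weakens the bound and gives the statement. The authors probably reached that constant by summing the geometric tail $\sum_{r>x}(1-q)q^{r-1}$ with the crude estimate $q\leq\theta$. Any such loose constant is harmless.

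The main point that needs care is the conditioning: the bound $p_k\leq\theta$ must hold for every realization, not only with high probability. The identity $\sum_i iU_i(k)=k$ guarantees exactly this, so no further estimate is needed.
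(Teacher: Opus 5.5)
Your argument is essentially the paper's. Conditionally on $\mathcal{F}_k$ the increment is geometric with parameter $p_k=U_b(k)/n$. The deterministic bound $p_k\le k/N\le\theta$ lets you pass through the tower property, and $\log\theta\le\theta-1$ finishes. The only structural difference is that you bound the conditional tail directly. The paper instead bounds each point mass by $\theta^{r-1}$ (discarding the factor $1-p_k$) and sums over $r>x$, which is where its factor $1/(1-\theta)$ comes from. Your version avoids that loss.

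There is, however, an off-by-one with respect to the paper's definition of $R$. In the paper a hop is a probe, including the final successful one: $\Pr(R(k+1)-R(k)=r\mid\mathcal{F}_k)=(1-p_k)p_k^{r-1}$ for $r\ge1$. Its conditional mean is $1/(1-p_k)$, which is exactly the trending function used in Theorem~\ref{lambda-number-hops}. So $R(k+1)-R(k)\ge r$ means that the first $r-1$ probes hit full buckets. The correct identity is therefore $\Pr(R(k+1)-R(k)\ge r\mid\mathcal{F}_k)=p_k^{r-1}$, not $p_k^{r}$.

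With $x=(\log N)^2$, your chain then becomes $\Pr(R(k+1)-R(k)>x)\le\theta^{\lfloor x\rfloor}\le\theta^{x-1}\le\theta^{-1}e^{-(1-\theta)x}$. The factor $1/\theta$ in the lemma's constant is thus genuinely used: it pays for the floor. This contradicts your remark that the prefactor is pure slack; only the $1/(1-\theta)$ part is slack in your tail-based version. Since $\frac{1}{\theta(1-\theta)}\ge\frac{1}{\theta}$, the statement still follows once this correction is made.
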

\begin{proof}
    Note that $U_b(k)\leq k/b$, as a bucket must have $b$ elements to be full. Thus, $U_b(k)/n \leq k/N\leq \theta$. 
    Given $\mathcal{F}_k$, the number of hops follows a geometric law of parameter $U_b(k)/n$, so \[\Pr( R(k+1)-R(k) = r \big| \mathcal{F}_k) = (1-U_b(k)/n) (U_b(k)/n)^{r-1} \leq \theta^{r-1}\,.\]
    Thus, by the tower property of conditional expectation, \[\Pr(R(k+1)-R(k) = r) = \mathbb{E}[\Pr( R(k+1)-R(k) = r \big| \mathcal{F}_k) ] \leq \theta^{r-1}\,. \]
    Hence, $\Pr(R(k+1)-R(k) >  x) \leq \tfrac{\theta^{\lfloor x\rfloor}}{1-\theta} \leq \frac{\theta^{x-1}}{1-\theta}$. The formula follows from $\log \theta \leq \theta - 1$.
    
\end{proof}

All the ingredients are now established to estimate the total number of hops, which is also the cost of the construction of the hash table, starting from an empty one.

\begin{restatable}{theorem}{lambdanumberhops}

    \label{lambda-number-hops}
    Fix $\theta \in (0,1)$, the total number of hops satisfies $R(k) = \tfrac{N}{b} \lambda_b(k/N) + o(n)$ with probability tending to one, for all $k\leq \theta N$, where the $o$-term is uniform in $k$.

\end{restatable}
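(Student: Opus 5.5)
We will apply Theorem~\ref{thm:wormald-probabilistic-increment} to the enlarged process $(U_0,\ldots,U_b,R)$, with $\ell=b+2$, time scale $N=bn$ and $m=\lfloor\theta N\rfloor$.

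\textbf{Trend functions.} From Eq.~\eqref{eq:conditional-expectation-bucketsu} and the geometric computation preceding Lemma~\ref{lemma:increment-small-rk}, we have
\[
\mathbb{E}[R(k+1)-R(k)\mid\mathcal{F}_k]=\frac{1}{1-U_b(k)/n}=\frac{1}{1-b\,U_b(k)/N}.
\]
Writing everything in the normalized variables $Y_j/N$, with $u_j=U_j/n=b\,U_j/N$, the trend for $R$ is $f_R(t;y)=1/(1-b\,y_b)$. The trend for $U_j$ is the right-hand side of Eq.~\eqref{eq:conditional-expectation-bucketsu} after the same rescaling. We take the domain
\[
\mathcal{D}=\{(t,y): -\varepsilon<t<\theta',\ b\,y_b<\theta',\ |y_j|<2,\ |y_R|<C\},
\]
with $\theta<\theta'<1$ and $C$ large. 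On $\mathcal{D}$ the denominators satisfy $1-b\,y_b>1-\theta'$, so every $f$ is bounded and Lipschitz.

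\textbf{Increment conditions.} The increments of the $U_j$ are bounded by $1$ deterministically. For $R$, we take $\beta=(\log N)^2$ and use Lemma~\ref{lemma:increment-small-rk}, which gives $\gamma=\frac{1}{\theta(1-\theta)}e^{-(1-\theta)(\log N)^2}$. Then $\theta\ell N\gamma\to0$. The initial condition is exact: $R(0)=0$, $U_0(0)=n$ and all other $U_j(0)=0$.

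\textbf{Choice of $\delta$.} We choose, say, $\delta=N^{-1/4}$. This gives $\delta\geq R/N$, and
\[
\frac{N\delta^2}{\beta^2}=\frac{N^{1/2}}{(\log N)^4}\to\infty,
\]
so the failure probability tends to zero. The error bound $3e^{L\theta}\delta N$ is then $o(N)=o(n)$, uniformly in $k\leq\theta N$.

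\textbf{Identifying the limit.} The ODE solution for the $U_j$ components is the system Eq.~\eqref{eq:sys-ode-occupied-buckets-no-overflow}, up to the factor $b$ from rescaling, as in Theorem~\ref{thm:proportion-buckets-no-overflow}. Its solution satisfies $u_b(t)\leq t<\theta'$ on $[0,\theta]$, so the trajectory stays inside $\mathcal{D}$ and $\sigma>\theta$. The $R$-component is
\[
y_R(t)=\int_0^t\frac{ds}{1-u_b(s)}.
\]
By Lemma~\ref{lemma:lambda-integral} this equals $\lambda_b(t)/b$. Hence $R(k)=N\lambda_b(k/N)/b+o(n)$, which is the claimed statement since $N/b=n$.

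\textbf{Main obstacle.} The delicate point is that Lemma~\ref{lemma:increment-small-rk} bounds increments only with high probability, which is why we need the probabilistic-increment version of the theorem. A second issue is that the trend function of $R$ must be Lipschitz only near the true trajectory. This is handled by keeping $u_b$ bounded away from $1$ inside $\mathcal{D}$, a property justified by $U_b(k)/n\leq\theta$ and $u_b(t)\leq t$.
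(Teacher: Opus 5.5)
Your proposal is correct and follows the paper's proof essentially step for step. It uses the same enlarged $(b+2)$-component process $(R,U_0,\ldots,U_b)$ fed into Theorem~\ref{thm:wormald-probabilistic-increment}, with $\beta=(\log N)^2$, $\gamma$ taken from Lemma~\ref{lemma:increment-small-rk}, and $\delta=N^{-1/2+\varepsilon}$ (your $N^{-1/4}$ is the case $\varepsilon=1/4$), and it identifies the $R$-component of the ODE solution with $\lambda_b(t)/b$ via Lemma~\ref{lemma:lambda-integral}; your explicit bounded domain with $b\,y_b<\theta'<1$ is a detail the paper leaves implicit.
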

\begin{proof}
We apply Theorem~\ref{thm:wormald-probabilistic-increment} to the trending hypothesis $\mathbb{E}[ R(k+1)-R(k)  \big| \mathcal{F}_k] = \frac{1}{1-{U_b(k)}/{n}}$, plus Eq.~\ref{eq:conditional-expectation-bucketsu} in order to complete the system with $b+2$ equations, $R(k),U_0(k),\ldots,U_b(k)$.

Lemma~\ref{lemma:increment-small-rk} tells us that $R(k+1)-R(k) \leq (\log N)^2$ is satisfied for any $k\leq \theta N$ with probability at least $1-  \tfrac{1}{1-\theta}\exp(-(1-\theta) \times (\log N)^2)$. Let $\beta(N):=(\log N)^2$ and $\gamma(N):=\tfrac{1}{1-\theta}\exp(-(1-\theta) \times (\log N)^2)$, so that hypothesis $(1)$ of Theorem~\ref{thm:wormald-probabilistic-increment} is satisfied.

The function $\delta=\delta(N)$ in Theorem~\ref{thm:wormald-probabilistic-increment}, in this case should satisfy $\delta\geq \tilde{R}/N$ and tend to $0$, where $\tilde{R}=\tfrac{1}{1-\theta}$, as $u_b(t)\leq \theta<1$. At this point we have to check that $\theta b N \gamma(N)\to 0$, and $N \delta^2 / (8 \theta \beta^2)\to\infty$ for an appropriate choice of $\delta=\delta(N)$. Clearly, $N \gamma(N) \to 0$. Picking $\delta (N) = N^{-1/2+\varepsilon}$, for fixed $\varepsilon\in(0,\tfrac{1}{2})$, makes $N \delta^2 / (8 \theta \beta^2) = N^{2\varepsilon} / (8 \theta (\log N)^4) \to\infty$. Thus the probability tends to one. Finally, the error term is $O(\delta(N) N) = O(N^{1/2+\varepsilon})=o(n)$.
\end{proof}

To analyze the expected cost of searches in an already built hash table, one usually adds a level of randomness for such elementary operations. The classical setting~\cite{knuth1998art} is to consider that all elements are equally likely to be queried during a successful search,  and that an independent uniform random sequence of probes is used for a unsuccessful search. In the statements below, the \emph{with high probabilities} refers to the construction of the hash table, and the \emph{average} cost to the random choice of the element that is searched. 
For the result below, we need to add the condition that $t\in(\vartheta,\theta)$ for some $\vartheta>0$ to have the uniformity of the error term $o(1)$. It is not a restriction, as the full hash table mechanism usually involves resizing, so it cannot be almost empty, except  at the very beginning of the building process. 

Corollary~\ref{cor:successful-search} follows from Theorem~\ref{lambda-number-hops}, as the expected number of hops in a successful search is $R(k)/k$. Indeed, the number of hops needed to insert an element corresponds to its look-up time, thus $R(k)$ is the sum of all of the look-up times, while the factor $1/k$ corresponds to the choice of the element to look-up as the table has $k$ elements, since it is chosen uniformly at random amongst the $k$ elements in the table.

\begin{corollary}[Successful search]
    \label{cor:successful-search}
    For any $\vartheta \in (0,\theta)$, with high probability, the expected number of hops in a successful search in a hash table with $k$ elements is $\lambda_b(t)/(b t) + o(1)$, where $t=k/N$ and the $o$ term is uniform  for $t\in(\vartheta,\theta)$.
\end{corollary}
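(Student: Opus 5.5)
The corollary follows from Theorem~\ref{lambda-number-hops} by dividing by $k$ and controlling the error uniformly. For a table built by $k$ insertions, look at a uniformly random key $x$ among the $k$ stored ones. The search for $x$ follows exactly the probe sequence that was used when $x$ was inserted. No deletions occur, so the buckets probed before $x$'s final bucket were full at insertion time and are still full. Also, $x$ is not stored in any of those earlier buckets, since they were full when $x$ arrived. Hence the number of hops in a successful search for $x$ equals the number of hops spent inserting $x$. Averaging over the uniform choice of $x$ gives that the expected number of hops is exactly $R(k)/k$, for every realisation of the construction. (If "hops" were counted as probes rather than jumps, this would shift everything by the constant $1$, which we do not include.)

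Next we apply Theorem~\ref{lambda-number-hops}. With probability tending to one, simultaneously for all $k\leq\theta N$, we have $R(k)=\frac{N}{b}\lambda_b(k/N)+E_k$ with $\sup_{k\le\theta N}|E_k|=o(n)$. On this event, writing $t=k/N$, so that $k=tN$,
\[
\frac{R(k)}{k}=\frac{N\lambda_b(t)/b+E_k}{tN}=\frac{\lambda_b(t)}{bt}+\frac{E_k}{tN}.
\]
For $t\in(\vartheta,\theta)$ we get $|E_k|/(tN)\le \sup_j|E_j|/(\vartheta b n)=o(1)$. This bound is uniform in $t$ because $\vartheta>0$ is fixed and $N=bn$. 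It also shows why $\vartheta$ is needed: near $t=0$, dividing by $k$ would blow up the $o(n)$ error.

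The step needing the most care is the first one, the identification of the successful-search cost with the insertion cost. The argument must use that buckets never lose elements in the insertion-only model, so the sequence of full buckets encountered by $x$'s probe sequence is unchanged. With overflow bits, the search algorithm of Fig.~\ref{fig:algos} could in principle stop early; here we only need that it continues through full buckets that did overflow for $x$, which holds since the relevant bit was set at insertion. Everything else is the uniformity bookkeeping above. Finally, $\lambda_b(t)/(bt)$ is bounded on $(\vartheta,\theta)$, by continuity of $\lambda_b$ from Lemma~\ref{lemma:lambda-integral}, so the statement is meaningful.
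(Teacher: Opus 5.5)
Your proof is correct and follows the paper's argument: each key's successful-search cost equals its insertion cost, so the expected cost is exactly $R(k)/k$, and dividing the estimate of Theorem~\ref{lambda-number-hops} by $k=tN\geq\vartheta N$ gives the uniform $o(1)$, a step the paper leaves implicit and you make explicit. One small correction to your parenthetical remark: the paper's $R(k)$ already counts every probe including the first one (its increments are geometric on $\{1,2,\ldots\}$ with mean $1/(1-U_b(k)/n)$), which is why $\lambda_b(t)/(bt)\to 1$ as $t\to 0^+$; this does not affect your argument, since you measure search and insertion costs in the same unit.
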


 Corollary~\ref{cor:unsuccessful-search} is a direct consequence of Theorem~\ref{thm:proportion-buckets-no-overflow}, since the  expected number of hops for an unsuccessful search follows a geometric law of parameter $U_b(k)/n$. This can also be interpreted as the number of hops for inserting a new element, since we have no overflow bit in this section to stop the sequence of probes of a unsuccessful search prematurely.

\begin{corollary}[Unsuccessful search with no overflow bits]\label{cor:unsuccessful-search}
    With high probability, the expected number of hops in a unsuccessful search in a hash table with $k$ elements is $(1-u_b(t))^{-1} + o(1)$, where $t=k/N$ and the $o$ term is uniform for $t\in [0,\theta)$.
\end{corollary}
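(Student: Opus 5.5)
The expected number of hops for an unsuccessful search is a deterministic function of the table's state, so the plan is to write it explicitly and then transfer the concentration of $U_b(k)$ given by Theorem~\ref{thm:proportion-buckets-no-overflow}.

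Fix the table built after $k$ insertions. An unsuccessful search uses a fresh uniform probe sequence, and without overflow bits it stops exactly at the first probe that lands in a non-full bucket. The number of probes is therefore geometric with success parameter $p_k := 1-U_b(k)/n$, and its conditional expectation given $\mathcal{F}_k$ is $1/p_k = \bigl(1-U_b(k)/n\bigr)^{-1}$, with the same counting convention as in Lemma~\ref{lemma:increment-small-rk}. It then suffices to show that, with high probability and uniformly in $k\le\theta N$,
\[
\Bigl|\bigl(1-U_b(k)/n\bigr)^{-1}-\bigl(1-u_b(k/N)\bigr)^{-1}\Bigr| = o(1).
\]

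The value $i=b$ is covered by the Wormald argument behind Theorem~\ref{thm:proportion-buckets-no-overflow}, which treats all $b+1$ coordinates simultaneously. Alternatively, one can use $U_b = n-\sum_{i<b}U_i$ together with the uniform error on the finitely many other $i$. Either way we get $U_b(k)/n = u_b(k/N)+\varepsilon_n$ with high probability, where $\sup_{k\le\theta N}|\varepsilon_n| = o(1)$.

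To pass through the map $x\mapsto (1-x)^{-1}$, note that it is Lipschitz with constant $(1-\theta)^{-2}$ on $[0,\theta]$. Both arguments lie in this interval: deterministically $U_b(k)/n\le k/N\le\theta$, and $u_b(t)\le t\le\theta$ follows from $\sum_i i\,u_i=bt$ and $u_i\ge0$. Hence the difference is at most $(1-\theta)^{-2}|\varepsilon_n| = o(1)$, uniformly in $t=k/N\in[0,\theta)$. Unlike Corollary~\ref{cor:successful-search}, we never divide by $k$, so no lower cutoff $\vartheta$ is needed.

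The only delicate point is the uniformity of the $o(1)$ over $i$ and $k$ on a single high-probability event. That is exactly what the theorem provides, since the error bound from Theorem~\ref{thm:wormald-probabilistic-increment} holds simultaneously for all $j$ and all $k\le\theta N$. Beyond that, the argument is routine.
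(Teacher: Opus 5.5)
Your proof is correct and takes the same route as the paper. The paper gets the corollary directly from Theorem~\ref{thm:proportion-buckets-no-overflow} through the geometric law with parameter $U_b(k)/n$. Your explicit control of $x\mapsto(1-x)^{-1}$ on $[0,\theta]$, with both arguments bounded by $\theta$, is the same justification the paper gives for the analogous step in the proof of Theorem~\ref{thm:unsuccessful-overflow}.
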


\section{Some properties of the number of hops function $\lambda_b(t)$}\label{sec:lambda}

Since the function $\lambda_b(t)$ turns out to be particularly important for the study of modern hash tables, we devote a short section to give some of its properties. First, for small $b$ one has:

\begin{itemize}
    \item If $b=1$, we simply have $\lambda_1(t)=\log (\frac{1}{1-t})$. Then the successful search time from Corollary~\ref{cor:successful-search} coincides with the classical $\frac{1}{t}\log (\frac{1}{1-t})$ (see~\cite{knuth1998art}).

    \item If $b=2$, then $2-2t = (2+\lambda_b(t)) e^{-\lambda_b(t)}$. This equation can be solved in terms of the  $-1$-branch $W_{-1}(x)$  of the Lambert-W function~\cite{corless1996lambert} as $\lambda_b(t)=-2-W_{-1}\left(-\tfrac{2-2t}{e^2}\right)$. 
\end{itemize}

Our first result states that $\lambda_b(t)$ is always of logarithmic magnitude. \reviewer{See Appendix~\ref{sec:proofpropertieslambda} for the proofs of this section.}

\begin{restatable}{proposition}{propertieslambda}

    \label{properties-lambda}
    The inequalities $\log\left(\tfrac{1}{1-t}\right) \leq \lambda_b(t)\leq b \log\left(\tfrac{1}{1-t}\right)$ are satisfied for all $t$. Moreover  $\lambda_b(t) \sim \log \left(\tfrac{1}{1-t}\right) $ as $t\to 1^-$ and $\lambda_b(t) \sim  b \log \left(\tfrac{1}{1-t}\right) $ as $t\to 0^+$.

\end{restatable}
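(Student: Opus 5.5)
The plan is to work directly with the implicit equation defining $\lambda_b(t)$. Write $G_b(\lambda) := e^{-\lambda}\sum_{i<b}\frac{b-i}{b}\frac{\lambda^i}{i!}$, so that $\lambda_b(t)$ is characterized by $G_b(\lambda_b(t)) = 1-t$. A direct computation shows that $G_b'(\lambda) = -\frac{1}{b}e^{-\lambda}\sum_{i<b}\frac{\lambda^i}{i!} < 0$. Indeed, the derivative telescopes: the coefficients $(b-i)-(b-i-1)=1$, with the $i=b-1$ boundary term cancelling against the missing $i=b$ term. Hence $G_b$ is a strictly decreasing bijection from $[0,\infty)$ onto $(0,1]$, since $G_b(0)=1$ and $G_b(\lambda)\to 0$. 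This also confirms that $\lambda_b$ is well defined and increasing.

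\textbf{Sandwich inequalities.} Because $G_b$ is decreasing, it suffices to show that
\[
e^{-b\lambda'}\le \ldots
\]
or, more precisely, to compare $G_b$ with the two exponentials. Concretely, I will show
\[
e^{-\lambda}\le G_b(\lambda) \le e^{-\lambda/b}.
\]
Evaluating at $\lambda=\lambda_b(t)$ then gives $e^{-\lambda_b}\le 1-t\le e^{-\lambda_b/b}$, which is exactly the claimed $\log\frac1{1-t}\le\lambda_b(t)\le b\log\frac1{1-t}$.

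The lower bound is immediate, since the $i=0$ term of $G_b$ is already $e^{-\lambda}$ and all other terms are nonnegative.

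For the upper bound, set $\phi(\lambda) := e^{\lambda/b}G_b(\lambda)$, so that $\phi(0)=1$. Its derivative is
\[
\phi'(\lambda) = \frac{e^{\lambda/b}e^{-\lambda}}{b}\Bigl(\sum_{i<b}\frac{b-i}{b}\frac{\lambda^i}{i!}-\sum_{i<b}\frac{\lambda^i}{i!}\Bigr)=-\frac{e^{\lambda/b-\lambda}}{b}\sum_{i<b}\frac{i}{b}\frac{\lambda^i}{i!}\le 0,
\]
so $\phi\le 1$ on $[0,\infty)$. This is the step I expect to require the most care, but the telescoping makes it clean.

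\textbf{Asymptotics as $t\to1^-$.} Here $\lambda_b(t)\to\infty$. Taking logarithms in $G_b(\lambda_b)=1-t$ gives
\[
\log\tfrac{1}{1-t} = \lambda_b - \log\Bigl(\sum_{i<b}\tfrac{b-i}{b}\tfrac{\lambda_b^i}{i!}\Bigr) = \lambda_b - O(\log \lambda_b),
\]
because the sum is a polynomial of degree $b-1$ in $\lambda_b$. Dividing by $\lambda_b$ yields $\lambda_b\sim\log\frac1{1-t}$.

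\textbf{Asymptotics as $t\to0^+$.} Here $\lambda_b\to0$. Expand $G_b(\lambda)=1-\lambda/b+O(\lambda^2)$, using $G_b(0)=1$ and $G_b'(0)=-1/b$ from the derivative formula above. Then $t = 1-G_b(\lambda_b)=\lambda_b/b+O(\lambda_b^2)$, so $\lambda_b\sim bt\sim b\log\frac1{1-t}$, since $\log\frac1{1-t}\sim t$.

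For $b=1$ all statements are consistent with $\lambda_1(t)=\log\frac1{1-t}$.
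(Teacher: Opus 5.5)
Your proof is correct, but it takes a different route from the paper's. You work entirely in the variable $\lambda$ with the defining equation $G_b(\lambda_b(t))=1-t$. The paper instead combines that equation with the differential characterization $\lambda_b'(t)=b/(1-u_b(t))$.

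\textbf{Lower bound.} This step is identical in both: keep only the constant term.

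\textbf{Upper bound.} The paper uses $u_b(t)\le t$, obtained from $\sum_i i\,u_i(t)=bt$ and the positivity of the $u_i$, to get $\lambda_b'(t)\le b/(1-t)$, and then integrates. You show that $\phi(\lambda)=e^{\lambda/b}G_b(\lambda)$ is nonincreasing. These are the same inequality in different coordinates. Since $G_b(\lambda_b(t))=1-t$ and $-b\,G_b'(\lambda_b(t))=1-u_b(t)$, one has $\phi'(\lambda_b(t))=\tfrac{1}{b}e^{\lambda_b(t)/b}\bigl(u_b(t)-t\bigr)$. So your sign condition is exactly $u_b\le t$, which you establish by the coefficient comparison $\tfrac{b-i}{b}\le 1$ rather than through the ODE.

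\textbf{Asymptotics.} For $t\to0^+$, the paper uses $\lambda_b'\ge b$, i.e.\ $\lambda_b(t)\ge bt$, where you Taylor-expand $G_b$ at $0$. For $t\to1^-$, the paper bounds the polynomial by $b^2\lambda^{b-1}$ and feeds in the upper bound. Your exact identity $\log\tfrac{1}{1-t}=\lambda_b-\log(\cdots)$ needs nothing else.

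\textbf{What each route buys.} Yours is self-contained: it uses only the defining equation, not the positivity lemma, not $u_b\le t$, and not the integral formula for $\lambda_b$. The same identity also gives, in one more line, the two-sided expansion $\lambda_b(t)=\log\tfrac{1}{1-t}+(b-1)\log\log\tfrac{1}{1-t}+O(1)$ as $t\to1^-$. The paper's route is shorter given what is already established, and it yields an explicit non-asymptotic upper bound for $t\ge 1-e^{-1}$.

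The stray display $e^{-b\lambda'}\le\ldots$ in your write-up should simply be removed.
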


Combined with Corollary~\ref{cor:successful-search}, Proposition~\ref{properties-lambda} shows shows that the classical framework without SIMD instructions uses $b$ times as many instructions when the table (of same capacity) is almost full, and roughly the same number when it is almost empty. The latter is to be expected, as the element is usually found with only one probe in this case. It is not an issue in practice, since after the first resizing mechanism is triggered, the hash table always has a load factor greater than some positive $\vartheta$; for example, if $\vartheta=1/2$, a SIMD-based hash table uses, in the limit, at least 27\

The following proposition provides an alternative characterization of $\lambda_b(t)$ by means of an ordinary differential equation involving the classical upper incomplete $\Gamma$-function $\Gamma(a,x)$ and the lower incomplete $\Gamma$-function  $\gamma(a,x)$, for $a>0$ and $x\geq0$,
\[
\Gamma(a,x):=\int_x^\infty t^{a-1} e^{-t} dt\,,\qquad \gamma(a,x):=\int_0^x t^{a-1} e^{-t} dt\,,\qquad \Gamma(a):=\int_0^\infty t^{a-1} e^{-t} dt\,.
\]
In particular, their regularized versions $Q(a,x)={\Gamma(a,x)}/{\Gamma(a)}$, the regularized upper-incomplete $\Gamma$-function, and  the regularized lower-incomplete $\Gamma$-function $P(a,x) = {\gamma(a,x)}/{\Gamma(a)}$. 

The net gain here is that these functions have efficient implementations~\cite{Temme_1994}, that are implemented in mathematical packages such as {\tt SciPy}, {\tt mpmath}\footnote{For instance, with {\tt mpmath} in {\tt Python3}, the function $\lambda_b(t)$ can be implemented as a one-liner: \\ {\tt mpmath.odefun(lambda t, y : b/mpmath.gammainc(b,y,mpmath.inf,regularized=True),0,0)}.} or {\tt boost}, amongst many.
\begin{restatable}{proposition}{gammaincomplete}

    The function $\lambda_b(t)$ satisfies $\lambda_b^\prime(t)= \nicefrac{b}{Q(b,\lambda_b(t))}$, while  
     $u_b(t)=P(b,\lambda_b(t))$.

\end{restatable}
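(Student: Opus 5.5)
Both identities follow from the closed forms in Theorem~\ref{thm:proportion-buckets-no-overflow} together with the classical finite-sum identity for the regularized incomplete Gamma function at integer parameter:
\[
Q(b,x)=e^{-x}\sum_{i=0}^{b-1}\frac{x^i}{i!}, \qquad P(b,x)=1-Q(b,x), \qquad b\in\mathbb{Z}_{\geq 1},\ x\geq 0.
\]
I would first justify this identity by differentiating both sides in $x$. The right-hand side telescopes to $-e^{-x}x^{b-1}/(b-1)!$, which is exactly $\frac{d}{dx}\,\Gamma(b,x)/\Gamma(b)$ with $\Gamma(b)=(b-1)!$. At $x=0$ both sides equal $1$, so the identity holds.

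Next comes the statement about $u_b$. Since $\sum_{i=0}^b u_i(t)=1$ and, for $i<b$, we have $u_i(t)=\lambda_b(t)^ie^{-\lambda_b(t)}/i!$, we get
\[
u_b(t)=1-\sum_{i<b}u_i(t)=1-Q(b,\lambda_b(t))=P(b,\lambda_b(t)).
\]

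For the differential equation, Lemma~\ref{lemma:lambda-integral} gives $\lambda_b'(t)=b/(1-u_b(t))$ (the factor $b$ being the one coming from the time normalization in Eq.~\eqref{eq:sys-ode-occupied-buckets-no-overflow}). Substituting $1-u_b(t)=Q(b,\lambda_b(t))$ yields $\lambda_b'(t)=b/Q(b,\lambda_b(t))$.

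As a cross-check independent of Lemma~\ref{lemma:lambda-integral}, I would recompute $u_0'=-bu_0/(1-u_b)$ directly with $u_0=e^{-\lambda_b}$. This gives $\lambda_b'=b/(1-u_b)$ again, which settles the factor $b$.

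The only delicate point is well-posedness. We need $Q(b,\lambda_b(t))>0$, which holds because $\lambda_b(t)$ is finite for $t<1$ by Proposition~\ref{properties-lambda}. Uniqueness of the solution with $\lambda_b(0)=0$ then follows from Picard--Lindel\"of, since $x\mapsto b/Q(b,x)$ is smooth on $[0,\infty)$.
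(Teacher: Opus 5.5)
Your proof is correct and follows the paper's argument: you write $1-u_b(t)=\sum_{i<b}u_i(t)=e^{-\lambda_b(t)}\sum_{i<b}\lambda_b(t)^i/i!=Q(b,\lambda_b(t))$ and substitute this into $\lambda_b'(t)=b/(1-u_b(t))$ from Lemma~\ref{lemma:lambda-integral}. The only differences are that you verify the finite-sum identity for $Q(b,\cdot)$ by differentiation, where the paper cites the classical relation $Q(b,\omega)=\Pr(\mathrm{Poiss}(\omega)<b)$, and that your well-posedness remarks go beyond what the statement requires.
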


\section{Analysis with overflow}
\label{sec:analysis-overflow}

In this section, we analyze the impact of having $d$ overflow bits, by proving Theorem~\ref{thm:proportion-buckets-overflow}, where $V_j(k)$ is the number of full buckets having $j$ bits at 1 at time $k$, for $j\in\{0,\ldots,d\}$. The use of integral forms may seem complicated at first, but they allow for very fast and accurate numerical estimations, this is what we used to obtain the curves given in this paper. 

\begin{restatable}{theorem}{thmpropbucketsoverflow}
    
    \label{thm:proportion-buckets-overflow}
Fix $\theta \in (0,1)$. Starting from an empty table with $n$ buckets of size $b$ and $d$ overflow bits, the number $V_j(k)$ of full buckets with $j$ overflow bits on at time $k$ satisfies $V_j(k) = n v_j(k/N)+o(n)$ with high probability and uniformly for all $k\leq \theta N$, 
where 
\begin{equation}
    \label{eq:system-overflow-bits}
    \begin{cases}
            {v}_0'(t) &=  b \frac{{u}_{b-1}(t)}{1-u_b(t)} - b \frac{{v}_0(t)}{1-u_b(t)}\,, \\
            {v}_1'(t) &= b(1-\tfrac{0}{d})\frac{{v}_{0}(t)}{1-u_b(t)} - b(1-\tfrac{1}{d}) \frac{{v}_{1}(t)}{1-u_b(t)} \\
     &\vdots   \\
             {v}_{d-1}'(t) &=  b(1-\tfrac{d-2}{d})\frac{{v}_{d-2}(t)}{1-u_b(t)}- b(1-\tfrac{d-1}{d})\frac{{v}_{d-1}(t)}{1-u_b(t)}\,, \\
     {v}_d'(t)&= b(1-\tfrac{d-1}{d})\frac{{v}_{d-1}(t)}{1-u_b(t)}\,.
    \end{cases}
\end{equation}

Moreover, for all $j=0,\ldots,d$ and $t \in [0,1)$, the solutions to the system in Eq.~\eqref{eq:system-overflow-bits} can be written explicitly in terms of $\lambda_b(t)$:
    \begin{equation}
    \label{eq:formula-prop-overflow-bits}
    v_j(t) = \binom{d}{j} e^{-(d-j)\lambda_b(t) /d} \int_0^{\lambda_b(t)}\tfrac{x^{b-1}}{(b-1)!} (e^{-x/d}-e^{-\lambda_b(t)/d})^j dx\,.\end{equation}

\end{restatable}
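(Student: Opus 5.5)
We plan to move to the time scale suggested at the start of this section, in which \emph{every probe} (insertion attempt) is one time unit, to derive a Wormald system there, and then convert back to the time of insertions $k$. In probe time $\tau$, each probe hits a uniform bucket. If that bucket has $i<b$ elements, the key is inserted: $U_i$ decreases by one and $U_{i+1}$ increases by one, and when $i=b-1$ a new full bucket with no overflow bit set appears, so $V_0$ increases by one. If the bucket is full with $j$ bits at $1$, we read the overflow hash as uniform in $\{0,\ldots,d-1\}$ (the factors $(1-\tfrac{j}{d})$ in Eq.~\eqref{eq:system-overflow-bits} presuppose this); then with probability $(d-j)/d$ a new bit is set, so $V_j$ decreases by one and $V_{j+1}$ increases by one. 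Hence, conditionally on the past, $\mathbb{E}[\Delta V_j]=\mathbf{1}_{j=0}\tfrac{U_{b-1}}{n}+\mathbf{1}_{j>0}(1-\tfrac{j-1}{d})\tfrac{V_{j-1}}{n}-(1-\tfrac{j}{d})\tfrac{V_j}{n}$, with analogous linear expressions for the $U_i$, and the number of insertions performed $I$ has increment with conditional expectation $1-U_b/n$. All increments are deterministically bounded by $1$, so Condition~\textcolor{darkgray}{\bf 1} of Theorem~\ref{thm:wormald-probabilistic-increment} holds with $\gamma=0$ and $\beta=1$. The drift functions are linear, hence Lipschitz and bounded, and Condition~\textcolor{darkgray}{\bf 2} holds exactly. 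This is the point of the change of scale: per insertion, the same full bucket may be probed twice, and the drift would only be linear up to $O(1/n)$ corrections.

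Wormald's theorem then gives, w.h.p. and uniformly up to a fixed probe horizon, $U_i\approx n\tilde u_i(\tau/n)$, $V_j\approx n\tilde v_j(\tau/n)$ and $I\approx n\tilde\iota(\tau/n)$, with $\tilde\iota'=1-\tilde u_b$. To return to insertion time we invert $I$. Since $\tilde\iota'\geq 1-\theta>0$ as long as the load is at most $\theta$, the map $\tilde\iota$ is a bi-Lipschitz increasing bijection, and the $o(n)$ errors transfer through the inverse. The horizon in probe time is finite w.h.p., because the number of probes needed for $\theta N$ insertions is $\theta N+R(\theta N)$, which is $O(n)$ w.h.p. by Theorem~\ref{lambda-number-hops}. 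By the chain rule, $\tfrac{d}{dt}=\tfrac{b}{1-u_b}\tfrac{d}{d\tau}$ (time normalized by $N=bn$), which turns the probe-time drifts exactly into Eq.~\eqref{eq:system-overflow-bits}. It also recovers Eq.~\eqref{eq:sys-ode-occupied-buckets-no-overflow} for the $u_i$, consistent with Theorem~\ref{thm:proportion-buckets-no-overflow}. This establishes the first claim.

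For the explicit formula we change variables to $s=\lambda_b(t)$. By Lemma~\ref{lemma:lambda-integral}, $ds/dt=b/(1-u_b(t))$, so the system becomes the constant-coefficient triangular linear system
\[
\frac{dv_j}{ds}=\Bigl(1-\tfrac{j-1}{d}\Bigr)v_{j-1}-\Bigl(1-\tfrac{j}{d}\Bigr)v_j,
\]
where the source term for $j=0$ is $u_{b-1}=s^{b-1}e^{-s}/(b-1)!$ (from Theorem~\ref{thm:proportion-buckets-no-overflow}), with $v_j(0)=0$. The solution is guessed from a probabilistic picture. A bucket becomes full at ``time'' $x$ with density $u_{b-1}(x)$. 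Afterwards it is hit at rate $1$ in $s$, and each hit targets each of the $d$ bits uniformly, so each bit remains $0$ by time $s$ with probability $e^{-(s-x)/d}$, independently of the others. The number of bits at $1$ is therefore binomial, which gives
\[
v_j(s)=\binom{d}{j}\int_0^s u_{b-1}(x)\bigl(1-e^{-(s-x)/d}\bigr)^j e^{-(d-j)(s-x)/d}\,dx .
\]
Expanding $e^{-x}e^{-(d-j)(s-x)/d}=e^{-(d-j)s/d}e^{-jx/d}$ rewrites this exactly as Eq.~\eqref{eq:formula-prop-overflow-bits}. To make this rigorous, we verify directly that the formula solves the ODE. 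Differentiating under the integral, the boundary term at $x=s$ vanishes for $j\geq1$ and equals $u_{b-1}(s)$ for $j=0$. Differentiating $e^{-(d-j)s/d}(e^{-x/d}-e^{-s/d})^j$ in $s$ yields $-\tfrac{d-j}{d}$ times the same term, plus $\tfrac{j}{d}e^{-s/d}e^{-(d-j)s/d}(e^{-x/d}-e^{-s/d})^{j-1}$. Using $j\binom{d}{j}=(d-j+1)\binom{d}{j-1}$, the second piece becomes $(1-\tfrac{j-1}{d})v_{j-1}$. Uniqueness from Picard--Lindel\"of then concludes the proof.

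The main obstacle we expect is the rigorous time change between probe time and insertion time. It requires a uniform control of $I(\tau)$ and of its inverse simultaneously with the $V_j$, including a w.h.p. bound on the total number of probes, for which we will rely on Theorem~\ref{lambda-number-hops}. The ODE solution itself is a routine verification.
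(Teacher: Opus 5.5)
Your proposal is correct and follows essentially the paper's route: Wormald's method in probe (hop) time, where the drifts are exactly linear and the increments are bounded by $1$, a random time change back to insertion time, and then the closed form via the substitution $s=\lambda_b(t)$, differentiation under the integral sign, and the identity $j\binom{d}{j}=(d-j+1)\binom{d}{j-1}$. The differences are minor. You carry the insertion counter $I$ inside the same Wormald system and invert it, whereas the paper composes $\tilde V_j$ with $R(k)$ using Theorem~\ref{lambda-number-hops} and Proposition~\ref{prop:relationtion-time-scales}. You also verify the formula and invoke uniqueness, where the paper derives it by induction with integrating factors and Lemma~\ref{lemma:integral-induction}. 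One small slip: $R(k)$ already counts every probe, so $\theta N+R(\theta N)$ overcounts, which is harmless since you only need an upper bound on the probe horizon.
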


In order to prove this, it proves simpler to study $\tilde{V_j}(r)$, the number of full-buckets with $j$ bits after $r$ hops, instead of after $k$ insertions. So we change the time scale, each increment corresponding to a successful or unsuccessful access to a bucket, i.e. a hop.  
We can prove that these new random variables satisfy equations amenable to Wormald's Theorem:

\begin{equation*}\label{eq:V tilde}
    \mathbb{E}[\tilde{V_j}(r+1)-\tilde{V_j} (r) \,\big| \, \tilde{\mathcal{F}}_r] = (1-\tfrac{j-1}{d}) \tilde{V}_{j-1}(r)  -  (1-\tfrac{j}{d}) \tilde{V}_{j}(r)\,,\text{ for }j>0\,
\end{equation*}
and $\mathbb{E}[\tilde{V}_0(r+1)-\tilde{V}_0 (r) \,\big| \, \tilde{\mathcal{F}}_r] =   \tilde{U}_{b-1}(r) -  \tilde{V}_0(r)$, where $\tilde{U}_{b-1}(r)$ is the analog of $U_{b-1}$ but on the time-scale of hops $r$. 
Intuitively, the terms in $j/d$ (resp. $(j-1)/d$) account for the probability that the overflow bit chosen for this hop collides one of the $j$ (resp. $j-1$) bits set to 1 for the current bucket. More generally, we define new random variables $\tilde{U}_0,\tilde{U}_1,\ldots,\tilde{U}_b$ that denote the number of buckets with $0,1,\ldots,b$ spots occupied, at the new scale of time. They satisfy, for $j=0,\ldots,b-2$
\[
\mathbb{E}[\tilde{U}_{j+1}(r+1)-\tilde{U}_{j+1}(r)\,\big|\,\tilde{\mathcal{F}}_r] = \tilde{U}_j(r) - \tilde{U}_{j+1}(r)\,,
\]
while $\mathbb{E}[\tilde{U}_{0}(r+1)-\tilde{U}_{0}(r)\,\big|\,\tilde{\mathcal{F}}_r] = -\tilde{U}_0(r)$ and $\mathbb{E}[\tilde{U}_{b}(r+1)-\tilde{U}_{b}(r)\,\big|\,\tilde{\mathcal{F}}_r] = \tilde{U}_{b-1}(r)$.

These are related to the following system of ODE, again by a direct application of Wormald's Theorem:
\begin{equation*}
    \begin{cases}
            \tilde{u}_0'(s) &=  b\tilde{u}_{b-1}(s) - b \tilde{u}_0(s)\,, \\
            \tilde{u}_1'(s) &= b{\tilde{u}_{0}(s)} - b {\tilde{u}_1(s)} \\
     &\vdots   \\
             \tilde{u}_{b-1}'(s) &=  b{\tilde{u}_{b-2}(s)}- b{\tilde{u}_{b-1}(s)}\,, \\
     \tilde{u}_b'(s)&= b{\tilde{u}_{b-1}(s)}\,.
    \end{cases}
\end{equation*}
These functions are defined and continuously differentiable on $s\in [0,\infty)$ and Wormald's Theorem implies that $\tilde{U}_j(s)=n \tilde{u}_j(s/N)+o(n)\,.$ A key property linking the two scales of time is the following:
\begin{proposition}
    \label{prop:relationtion-time-scales}
    For all $j=0,\ldots,b$ and  $t\in [0,1)$,  we have $u_{j}(t)=\tilde{u}_j(\lambda_b(t)/b)$.
\end{proposition}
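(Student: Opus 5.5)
The plan is a deterministic time-change argument combined with uniqueness of solutions of ODEs. Set $w_j(t):=\tilde{u}_j(\lambda_b(t)/b)$ for $t\in[0,1)$ and $j=0,\ldots,b$. I will show that $(w_0,\ldots,w_b)$ solves the same initial value problem as $(u_0,\ldots,u_b)$, namely Eq.~\eqref{eq:sys-ode-occupied-buckets-no-overflow}.

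\textbf{Initial conditions and domain.} By Lemma~\ref{lemma:lambda-integral}, $\lambda_b(0)=0$. Hence $w_j(0)=\tilde{u}_j(0)$, which equals $u_j(0)$ because both processes start from the empty table. Also $\lambda_b'(t)=b/(1-u_b(t))>0$, using $u_b(t)\leq t<1$. So $\lambda_b$ is a $C^1$ increasing map from $[0,1)$ into $[0,\infty)$. Since the $\tilde{u}_j$ are $C^1$ on $[0,\infty)$, the composition $w_j$ is well defined and differentiable on $[0,1)$.

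\textbf{Chain rule.} Differentiating gives
\[
w_j'(t)=\tfrac{\lambda_b'(t)}{b}\,\tilde{u}_j'\bigl(\lambda_b(t)/b\bigr)=\tfrac{1}{1-u_b(t)}\,\tilde{u}_j'\bigl(\lambda_b(t)/b\bigr).
\]
I now substitute the tilde system, reading its first line as $\tilde{u}_0'=-b\tilde{u}_0$. This is what the conditional expectation $\mathbb{E}[\tilde{U}_0(r+1)-\tilde{U}_0(r)\mid\tilde{\mathcal{F}}_r]=-\tilde{U}_0(r)$ dictates; the displayed term $b\tilde{u}_{b-1}$ looks like a typo. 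The substitution yields
\[
w_0'=-\tfrac{b\,w_0}{1-u_b},\qquad w_j'=\tfrac{b\,w_{j-1}-b\,w_j}{1-u_b}\ \ (0<j<b),\qquad w_b'=\tfrac{b\,w_{b-1}}{1-u_b}.
\]

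\textbf{Uniqueness.} The remaining issue is that the denominator contains $u_b(t)$, not $w_b(t)$. I will therefore view the system as a \emph{linear} ODE in an unknown vector $z(t)$, with the coefficient $b/(1-u_b(t))$ treated as a known continuous function on $[0,1)$. Both $w$ and $u$ solve this linear system with the same initial data. Its right-hand side is Lipschitz in $z$ on $[0,1-\varepsilon]$ for every $\varepsilon>0$, so Picard–Lindel\"of uniqueness gives $w=u$ on every such interval, and hence on $[0,1)$.

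\textbf{Explicit check.} As a cross-check, solving the lower-triangular tilde system for $j<b$ gives $\tilde{u}_j(s)=(bs)^je^{-bs}/j!$. Then $w_j(t)=\lambda_b(t)^je^{-\lambda_b(t)}/j!$, which is exactly the formula for $u_j$ in Theorem~\ref{thm:proportion-buckets-no-overflow}. For $j=b$, both sides equal $1-\sum_{i<b}(\cdot)$, since both systems conserve total mass.

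\textbf{Main obstacle.} The only delicate point is the circularity created by $u_b$ appearing in the coefficient. Treating it as an exogenous coefficient in a linear system, as above, resolves this. Otherwise the argument is a routine verification, together with the domain check that $\lambda_b(t)/b$ stays in $[0,\infty)$.
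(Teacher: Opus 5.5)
Your proof is correct, and it takes a genuinely different route from the paper's.

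\textbf{The paper's route.} The paper argues probabilistically. It uses the pathwise identity $U_j(k)=\tilde U_j(R(k))$ and the concentration $R(k)=\frac{N}{b}\lambda_b(k/N)+o(n)$ from Theorem~\ref{lambda-number-hops}. It combines these with the two fluid-limit statements $U_j(k)=nu_j(k/N)+o(n)$ and $\tilde U_j(s)=n\tilde u_j(s/N)+o(n)$. Intersecting these high-probability events forces $u_j(k/N)-\tilde u_j(\lambda_b(k/N)/b)=o(1)$, and continuity finishes the argument.

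\textbf{Your route.} Your argument is purely analytic. You time-change by $\lambda_b/b$ and apply the chain rule with $\lambda_b'=b/(1-u_b)$. You then invoke uniqueness for the \emph{linear} system whose continuous coefficient $b/(1-u_b(t))$ is treated as given. This neatly removes the apparent circularity of $u_b$ sitting in the denominator. Alternatively, you compare the explicit Poisson formulas $\tilde u_j(s)=(bs)^je^{-bs}/j!$ with those of Theorem~\ref{thm:proportion-buckets-no-overflow}.

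\textbf{What each buys.} Your route is self-contained and deterministic. It avoids having to justify evaluating the Wormald approximation of $\tilde U_j$ at the random time $R(k)$, which needs uniformity in $s$ and Lipschitz continuity of $\tilde u_j$ to absorb the $o(n)$ error in $R(k)$; the paper's sketch passes over this quickly. It is also exactly the chain-rule computation the paper itself performs right afterwards for $v_j=\tilde v_j(\lambda_b/b)$. The paper's route, in exchange, explains \emph{why} the identity must hold: the two families are the same random process read on two different clocks.

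\textbf{The typo.} You are right that the first line of the displayed tilde system must read $\tilde u_0'=-b\tilde u_0$, as the stated conditional expectation dictates. With the printed line, $\sum_j\tilde u_j'=b\tilde u_{b-1}\neq 0$, so total mass would not be conserved and the proposition would fail.
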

\begin{proof}[Proof sketch.]
    By definition $U_{j}(k) = \tilde{U}_{j}(R(k))$. Theorem~\ref{lambda-number-hops} tells us that $R(k)= \tfrac{N}{b}\lambda_b(k/N)+o(n)$ with high probability. Then $\tilde{U}_{j}(R(k)) = n \tilde{u}_{j}(\lambda_b(k/N)/b)+o(n)$ and $U_j(k)=n u_j(k/N)+o(n)$ both occur with high probability, the intersection of the events also occurs with high probability. This gives $u_j(k/N)\sim \tilde{u}_{j}(\lambda_b(k/N)/b)$, the rest follows by continuity.

\end{proof}

Back to the number of buckets with a given number of overflow its, we are led to the following system of ordinary differential equations, involving $\tilde{u}_{b-1}(t)$:
\begin{equation*}
    \begin{cases}
            \tilde{v}_0'(s) &=  b\tilde{u}_{b-1}(s) - b \tilde{v}_0(s)\,, \\
            \tilde{v}_1'(s) &= b(1-\tfrac{0}{d}){\tilde{v}_{0}(s)} - b(1-\tfrac{1}{d}) {\tilde{v}_1(s)} \\
     &\vdots   \\
             \tilde{v}_{d-1}'(s) &=  b(1-\tfrac{d-2}{d}){\tilde{v}_{d-2}(s)}- b(1-\tfrac{d-1}{d}){\tilde{v}_{d-1}(s)}\,, \\
     \tilde{v}_d'(s)&= b(1-\tfrac{d-1}{d}){\tilde{v}_{d-1}(s)}\,.
    \end{cases}
\end{equation*}

In this case the solutions are defined for $s\in [0,\infty)$ and Wormald's Differential Equation Method applies, yielding $\tilde{V}_j(s) =  n \tilde{v}_j(s/N) + o(n)$.  We are interested in $s=R(k)$, which are the actual insertion times. The relationship between $R(k)$ and $\lambda_b(t)$ was studied in Theorem~\ref{lambda-number-hops}. In particular, we recall that $R(k)=\tfrac{N}{b} \lambda_b(k/N)+o(n)$ with high probability. Thus $
    V_j(k)=\tilde{V}_j(R(k)) = n \tilde{v}_j(\lambda_b(k/N)/b)+o(n) $ with high probability. Set $v_j(t):=\tilde{v}_j(\lambda_b(t)/b)$.
Then
\[v_j'(t) = \tilde{v}_j'(\lambda_b(t)/b) \lambda_b'(t)/b = \frac{\tilde{v}_j'(\lambda_b(t)/b)}{1-u_b(t)}\,,
\] and we deduce, since $\tilde{u}_{b-1}(\lambda_b(t)/b)\equiv u_{b-1}(t)$, that the system in Eq.~\eqref{eq:system-overflow-bits} holds. \reviewer{See Annex~\ref{sec:D formula} for the proof.} It only remains to find manageable expressions for the $v_j$'s. The details of this inductive proof are left out of this extended abstract.

Recall that insertions and successful searches are not affected by the overflow mechanism, which can only interrupt sequences of probes for unsuccessful searches. Our result for unsuccessful searches is based on the observation that the probability of continuing the sequence of probes is well approximated by $p:=\sum_{j=0}^d v_j(t)  \tfrac{j}{d}$, where $v_j$ is the probability of hitting a full bucket having $j$ overflow bits set to $1$, and $j/d$ is the probability of hitting one of those bits. Thus, the number of hops strongly resembles a geometric law of parameter $p$. Following the formula in Eq.~\eqref{eq:formula-prop-overflow-bits} for the functions $v_j(t)$, we have the following result.

\begin{restatable}[Unsuccessful search with overflow bits]{theorem}{expectedvalueoverflow} \label{thm:unsuccessful-overflow}
    The expected number of hops for a unsuccessful search in a table having $k$ elements and using $d$ overflow bits is, with high probability and uniformly for $t=\frac{k}{N}\in [0,\theta)$:

   \reviewer{See Annex~\ref{sec:theorem overflow unsuccessful} for the full proof.}
\[
\left(1-F_{b,d}(\lambda_b(k/N))\right)^{-1} + o(1)\,,   \qquad 
F_{b,d}(\lambda):=\int_0^{\lambda}   \tfrac{x^{b-1}}{(b-1)!} e^{-(1-1/d)x} (e^{-x/d}-e^{-\lambda/d})  dx \,.
\]
Moreover, $F_{b,d}(\lambda) = P(b,\lambda)-(1-1/d)^{-b} e^{-\lambda/d} P(b,\lambda \cdot (1-1/d))$, where $P$ is the regularized lower-incomplete $\Gamma$-function defined in Section~\ref{sec:lambda}.

\end{restatable}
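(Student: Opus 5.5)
The proof has a probabilistic part and an analytic part. We first reduce the expected cost to a geometric law whose parameter comes from Theorem~\ref{thm:proportion-buckets-overflow}. We then compute the integral in closed form.

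\emph{Probabilistic part.} Fix a table built by $k$ insertions, and condition on $\mathcal{F}_k$. An unsuccessful search for a fresh key $x$ uses a fresh independent sequence of probes $h^{(c)}(x)$ and overflow hashes $\overflow^{(c)}(x)$. At each probe the search continues exactly when two things happen: the probed bucket has the overflow bit $\overflow^{(c)}(x)$ set, and the fingerprint test does not find $x$. The second condition always holds, since $x$ is absent. Under the model, the bucket is uniform among $n$ and the bit is uniform among $d$, so the continuation probability is
\[
p_k:=\sum_{j=0}^d \frac{V_j(k)}{n}\cdot\frac{j}{d}.
\]
This holds provided we interpret $V_j$ so that non-full buckets with some overflow bits set are counted as well. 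I will check against the definitions that bits can only be set on full buckets, and that buckets never leave the full state in the absence of deletions. Hence the $V_j$ account for every bucket with $j\geq1$ bits on. The probes are i.i.d. given $\mathcal{F}_k$, so the number of probes is geometric and its conditional expectation is exactly $(1-p_k)^{-1}$.

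Theorem~\ref{thm:proportion-buckets-overflow} gives $V_j(k)=nv_j(k/N)+o(n)$ with high probability, uniformly in $j\le d$ and $k\le\theta N$. Hence $p_k=p(t)+o(1)$ with $p(t):=\sum_j \frac jd v_j(t)$. We have $p(t)\le u_b(t)\le \theta<1$ by $u_b(t)\le t$, so the map $p\mapsto(1-p)^{-1}$ is Lipschitz on $[0,\theta']$ for some $\theta'<1$. This yields $(1-p(t))^{-1}+o(1)$ uniformly.

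\emph{Analytic part.} It remains to show $p(t)=F_{b,d}(\lambda_b(t))$. Write $\lambda=\lambda_b(t)$ and $a=e^{-\lambda/d}$, and insert Eq.~\eqref{eq:formula-prop-overflow-bits}. Exchanging the sum and the integral gives
\[
p=\int_0^\lambda \frac{x^{b-1}}{(b-1)!}\sum_j \frac jd\binom dj (e^{-x/d}-a)^j a^{d-j}\,dx.
\]
The inner sum is the mean of a $\mathrm{Bin}(d,q)$ variable divided by $d$, with $q$ normalized by $(e^{-x/d}-a)+a=e^{-x/d}$. Explicitly, $\sum_j j\binom dj y^j a^{d-j}=dy(y+a)^{d-1}$ with $y=e^{-x/d}-a$. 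This gives $\frac1d\cdot d\,(e^{-x/d}-a)e^{-(d-1)x/d}$, which is exactly the integrand of $F_{b,d}$.

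For the closed form, split the integrand into two terms. The first is $\int_0^\lambda \frac{x^{b-1}}{(b-1)!}e^{-x}dx=P(b,\lambda)$. The second is $-a\int_0^\lambda \frac{x^{b-1}}{(b-1)!}e^{-(1-1/d)x}dx$. Substituting $y=(1-1/d)x$ turns it into $-a(1-1/d)^{-b}P(b,\lambda(1-1/d))$. For $d=1$ the second term vanishes, being interpreted as a limit, and the formula with the factor $e^{-\lambda/d}$ follows.

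\emph{Main obstacle.} The delicate step is the justification that the continuation probability depends only on the $V_j$. This requires that overflow bits live only on full buckets, and that the model uses fresh randomness independent of the construction. A second point is uniformity near $t=0$; this is harmless because $p\ge 0$ and the error is additive.
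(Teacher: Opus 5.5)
Your proposal is correct and follows the paper's proof essentially step by step: the search length is geometric with parameter $\sum_{j}\frac{V_j(k)}{n}\frac{j}{d}$, and the $o(1)$ error is transferred through $x\mapsto (1-x)^{-1}$ using $p\le u_b\le\theta$. Your identity $\sum_j j\binom{d}{j}y^j a^{d-j}=dy(y+a)^{d-1}$ is exactly the paper's step of differentiating the generating function at $z=1$, and it is followed by the same substitution $y=(1-1/d)x$. One small correction to your side remark: for $d=1$ the second term does not vanish; it equals $-e^{-\lambda}\lambda^b/b!$ (that is, $-v_0$), which is the limit of $-(1-1/d)^{-b}e^{-\lambda/d}P(b,\lambda(1-1/d))$ as $d\to 1$, since $P(b,\mu)\sim \mu^b/b!$ as $\mu\to 0$.
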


\section{Conclusion}
As Fig.~\ref{fig:experiments} shows, the results obtained in this article, although asymptotic, describe very precisely the behavior of modern hash tables that exploit SIMD instructions. The mathematical functions that enable the analysis are involved, implicitly defined, but easily estimated numerically thanks to the integral representations that we have proposed. From a methodological point of view, the use of variants of the differential equation method developed by Wormald has proved to be particularly well suited, as it produces advanced results with high probability, and gives a very precise description of the state of the table after $k$ insertions; this is why we only need one experimental trial in Fig.~\ref{fig:experiments}. For example, Theorem~\ref{thm:unsuccessful-overflow} can be used to compare the impact of adding more overflow bits, as depicted in Fig.~\ref{fig:different d}.

\begin{wrapfigure}{r}{0.5\textwidth}
  \begin{center}
  \begin{tikzpicture}[scale=.6]
\small
\begin{axis}[
    width=12cm,
    height=8cm,
    xlabel={Proportion of table occupancy},
    ylabel={Extra hops for unsuccessful search},
    legend style={at={(0.03,0.97)},anchor=north west},
    grid=both,
    xticklabel={
        \pgfmathparse{\tick*100}
        \pgfmathprintnumber{\pgfmathresult}\%
    },
    ylabel style={yshift=-1em},
    yticklabel style={font=\footnotesize,/pgf/number format/fixed,/pgf/number format/precision=3},
    xticklabel style={font=\footnotesize},
    scaled ticks=false
]

\addplot[
    blue, thick
]
coordinates {
(0.4375, 0.0024174427917344516)
 (0.44625, 0.002924552183021093)
 (0.45499999999999996, 0.0035173229031226286)
 (0.46374999999999994, 0.004206545745867463)
 (0.4724999999999999, 0.0050038603660047766)
 (0.4812499999999999, 0.0059217748423152795)
 (0.4899999999999999, 0.006973685212841493)
 (0.49874999999999986, 0.008173895971717922)
 (0.5074999999999998, 0.009537642654413607)
 (0.5162499999999999, 0.011081117767217697)
 (0.5249999999999998, 0.0128215014388745)
 (0.5337499999999997, 0.014776998290083127)
 (0.5424999999999998, 0.016966882134106666)
 (0.5512499999999998, 0.019411550244293044)
 (0.5599999999999997, 0.02213258905850959)
 (0.5687499999999996, 0.025152853344293114)
 (0.5774999999999997, 0.028496561031279677)
 (0.5862499999999997, 0.03218940614011472)
 (0.5949999999999996, 0.03625869251226418)
 (0.6037499999999996, 0.04073349138782357)
 (0.6124999999999996, 0.045644826306144824)
 (0.6212499999999996, 0.05102588933795032)
 (0.6299999999999996, 0.056912293323168474)
 (0.6387499999999995, 0.0633423656175513)
 (0.6474999999999995, 0.07035748988256749)
 (0.6562499999999996, 0.07800250373679764)
 (0.6649999999999995, 0.08632616168647352)
 (0.6737499999999994, 0.09538167474950489)
 (0.6824999999999994, 0.10522734068726257)
 (0.6912499999999995, 0.11592728190014577)
 (0.6999999999999994, 0.12755231200925365)
 (0.7087499999999993, 0.14018095717998416)
 (0.7174999999999994, 0.1539006646688846)
 (0.7262499999999994, 0.16880923933075662)
 (0.7349999999999993, 0.18501655950503385)
 (0.7437499999999992, 0.20264663762708848)
 (0.7524999999999993, 0.2218401092152812)
 (0.7612499999999993, 0.2427572581553248)
 (0.7699999999999992, 0.26558171868521663)
 (0.7787499999999992, 0.29052503838998284)
 (0.7874999999999992, 0.31783234650128994)
 (0.7962499999999992, 0.34778945469721617)
 (0.8049999999999992, 0.3807318335890296)
 (0.8137499999999991, 0.4170560725521268)
 (0.8224999999999991, 0.4572346671610815)
 (0.8312499999999992, 0.5018353242437505)
 (0.8399999999999991, 0.551546488558032)
 (0.848749999999999, 0.607211573644504)
 (0.857499999999999, 0.6698755832987617)
 (0.8662499999999991, 0.7408497147306552)
 (0.874999999999999, 0.8218026252701995)
 (0.8837499999999989, 0.9148922049710816)
 (0.892499999999999, 1.022960594657248)
 };
\addlegendentry{$b=16,d=0$}

\addplot[
    red,
    thick,
]
coordinates {
(0.4375, 0.00023553854654164309)
 (0.44625, 0.00028957790663419823)
 (0.45499999999999996, 0.0003539220878026547)
 (0.46374999999999994, 0.0004301268216203121)
 (0.4724999999999999, 0.0005199175114126013)
 (0.4812499999999999, 0.0006251999495306058)
 (0.4899999999999999, 0.0007480711958999784)
 (0.49874999999999986, 0.0008908306622728631)
 (0.5074999999999998, 0.0010559914666298828)
 (0.5162499999999999, 0.0012462921437476416)
 (0.5249999999999998, 0.001464708821021173)
 (0.5337499999999997, 0.0017144679932400053)
 (0.5424999999999998, 0.001999060056281278)
 (0.5512499999999998, 0.002322253787839947)
 (0.5599999999999997, 0.002688111993724346)
 (0.5687499999999996, 0.003101008571423266)
 (0.5774999999999997, 0.003565647279287659)
 (0.5862499999999997, 0.00408708254065848)
 (0.5949999999999996, 0.004670742658745277)
 (0.6037499999999996, 0.005322455871427034)
 (0.6124999999999996, 0.006048479737164265)
 (0.6212499999999996, 0.006855534416037122)
 (0.6299999999999996, 0.007750840496239044)
 (0.6387499999999995, 0.008742162119468641)
 (0.6474999999999995, 0.009837856282701207)
 (0.6562499999999996, 0.011046929343934291)
 (0.6649999999999995, 0.012379101942186744)
 (0.6737499999999994, 0.01384488376549582)
 (0.6824999999999994, 0.015455659875383108)
 (0.6912499999999995, 0.01722379063570048)
 (0.6999999999999994, 0.01916272771533814)
 (0.7087499999999993, 0.02128714916063163)
 (0.7174999999999994, 0.023613117194262997)
 (0.7262499999999994, 0.026158263232390767)
 (0.7349999999999993, 0.028942005673286076)
 (0.7437499999999992, 0.03198580736966609)
 (0.7524999999999993, 0.03531348144933929)
 (0.7612499999999993, 0.038951556426513115)
 (0.7699999999999992, 0.04292971453171489)
 (0.7787499999999992, 0.04728132113733337)
 (0.7874999999999992, 0.05204406843026363)
 (0.7962499999999992, 0.057260763602062545)
 (0.8049999999999992, 0.0629803015444513)
 (0.8137499999999991, 0.06925887546534464)
 (0.8224999999999991, 0.07616149764124018)
 (0.8312499999999992, 0.0837639292274056)
 (0.8399999999999991, 0.09215515658159686)
 (0.848749999999999, 0.10144060812351571)
 (0.857499999999999, 0.11174639037489888)
 (0.8662499999999991, 0.12322495109589024)
 (0.874999999999999, 0.13606277955714552)
 (0.8837499999999989, 0.15049107835723893)
 (0.892499999999999, 0.16680087724640125)
 };
\addlegendentry{$b=15,d=8$}

\addplot[
    green!50!black,
    thick,
]
coordinates {
(0.4375, 0.00016165517290008368)
 (0.44625, 0.0001966893504113234)
 (0.45499999999999996, 0.00023800402780745747)
 (0.46374999999999994, 0.0002864846515922822)
 (0.4724999999999999, 0.0003431059937177088)
 (0.4812499999999999, 0.0004089373302058165)
 (0.4899999999999999, 0.0004851477206799108)
 (0.49874999999999986, 0.0005730114189780398)
 (0.5074999999999998, 0.0006739134539038872)
 (0.5162499999999999, 0.0007893554285640107)
 (0.5249999999999998, 0.0009209615966672181)
 (0.5337499999999997, 0.0010704852846777542)
 (0.5424999999999998, 0.0012398157399193589)
 (0.5512499999999998, 0.0014309854967714548)
 (0.5599999999999997, 0.0016461783661865102)
 (0.5687499999999996, 0.0018877381681581323)
 (0.5774999999999997, 0.002158178342821498)
 (0.5862499999999997, 0.002460192593998991)
 (0.5949999999999996, 0.0027966667397310453)
 (0.6037499999999996, 0.0031706919683028873)
 (0.6124999999999996, 0.00358557972626457)
 (0.6212499999999996, 0.00404487849790737)
 (0.6299999999999996, 0.004552392774772325)
 (0.6387499999999995, 0.005112204560471535)
 (0.6474999999999995, 0.005728697812189369)
 (0.6562499999999996, 0.006406586287926053)
 (0.6649999999999995, 0.007150945350649445)
 (0.6737499999999994, 0.00796724838056057)
 (0.6824999999999994, 0.008861408569135084)
 (0.6912499999999995, 0.00983982701923391)
 (0.6999999999999994, 0.01090944826176178)
 (0.7087499999999993, 0.012077824530714838)
 (0.7174999999999994, 0.013353190427549068)
 (0.7262499999999994, 0.014744549969179588)
 (0.7349999999999993, 0.016261778473568647)
 (0.7437499999999992, 0.01791574232223818)
 (0.7524999999999993, 0.019718440389947835)
 (0.7612499999999993, 0.021683171902526124)
 (0.7699999999999992, 0.023824736749232536)
 (0.7787499999999992, 0.026159675940268783)
 (0.7874999999999992, 0.028706562109857225)
 (0.7962499999999992, 0.03148635292987649)
 (0.8049999999999992, 0.034522824320470526)
 (0.8137499999999991, 0.037843105864893783)
 (0.8224999999999991, 0.04147834851419193)
 (0.8312499999999992, 0.045464565498122146)
 (0.8399999999999991, 0.04984370287339745)
 (0.848749999999999, 0.054665018742689954)
 (0.857499999999999, 0.059986883722822716)
 (0.8662499999999991, 0.06587916605885345)
 (0.874999999999999, 0.07242644354491959)
 (0.8837499999999989, 0.07973240965034183)
 (0.892499999999999, 0.08792604619945853)
 };
\addlegendentry{$b=14,d=16$}
\end{axis}
\end{tikzpicture}
  \end{center}
  \caption{The expected number of additional hops for a unsuccessful search, for $n=2^{14}$. We use different values of $b$ and $d$ at constant header size, to compare the impact of the overflow bits. Observe that the capacities slightly differ since $b$ is not constant, but they do not vary much.\label{fig:different d}}
\end{wrapfigure}

    In the results, there is a tradeoff between the error term and the speed of convergence in probability: the smaller the error term, the slower the convergence in probability. For each of the $o(n)$ terms in the results provided, we can prove that these are $O(n^{1/2+\varepsilon})$, for any fixed $\varepsilon>0$, with probability tending to one.  This follows from taking $\delta(n)=n^{-1/2+\varepsilon}$ when applying Theorem~\ref{thm:wormald-probabilistic-increment}. In consequence, the $o(1)$ terms are $O(n^{-1/2+\varepsilon})$. The probability depends on the choice of $\varepsilon>0$, but it is always $1-o(n^{-M})$ for every any $M>0$. See the proof of Theorem~\ref{lambda-number-hops}, for an example. This also applies to Section~\ref{sec:analysis-overflow} when composing.

As we already mentioned in the introduction, the engineers of each programming language have implemented their own algorithms, and it is difficult to cover everything. However, many of the variants we encountered can most likely be handled in the same way as what we have just done, perhaps with solutions that are harder to describe, but certainly still exploitable. For example, this method could be used to refine the analysis to count the number of element comparisons instead of the number of hops, sicne it is not difficult to add the  fingerprint collisions in the equations.
A notable exception to this statement concerns algorithms in which groups of $b$ elements are not arranged by bucket, but overlap within the table: such overlaps greatly complicate the analysis by introducing dependencies between the different buckets, and new ideas will be required to study them. 

Finally, we did not address deletions in this article. The main reason for this is that there is no commonly accepted model for the use of deletions in hash tables. In practice, it is often the case that elements are never deleted, but even when they are, defining typical usage scenarios is not an easy task. Some suggestions can be found in~\cite{larson1983analysis,MartinezNR25,PobleteV19}, for instance, but without real validation from practice~\footnote{In~\cite{bender2024}, there is also a scenario where they repeatedly delete a uniformly random element and then insert a new uniformly random element in an almost-full table. However, it is used to establish a lower bound via Yao's principle, not as a model of typical practical use of a hash table.}.  Such scenarios, using a simple probabilistic modeling, are most certainly amenable to analysis using the methodology of this article, though solving the systems might be difficult, but more advanced descriptions will likely require significant improvements of the method.

For example, consider the simple model introduced in~\cite{MartinezNR25} in which, at time $k$, an insertion occurs with probability $p\in(\tfrac{1}{2},1)$, and a deletion is performed (uniformly among all elements inside) with probability $1-p$. Observe that now it is possible to have buckets which are not full but have a number of overflow bits. Thus we are led to a system of ordinary differential equations with $(b+1)\times (d+1)$ unknown functions. If $U_{i,j}(k)$ corresponds to the number of buckets with $i$ elements and $j$ bits on, we expect a system for the fluid limit $u_{i,j}(t)=\lim U_{i,j}(\lfloor t b n\rfloor)/n$ by similar methods. Observe that we should have $\sum_{i,j} u_{i,j}(t)\cdot i = (2p-1)bt$, as this corresponds to the expected number of elements in the table. The variable $t$ must be constrained to $t \leq \theta /(2p-1)$, at which point the table attains the maximal load factor $\theta$ for the first time, with high probability. This system, which can be written explicitly, is more involved and we do not succeed in obtaining a simple solution, even though it can  easily be evaluated by numerical methods.

\bibliography{biblio}
\clearpage
\appendix

\section{Theorem~\ref{thm:wormald-probabilistic-increment}}

\subsection{Wormald's Theorem}

In this section we cite Wormald's original version of Wormald's Differential Equation method, as presented in \cite{wormald95}, for reference. In order to prove Theorem~\ref{thm:proportion-buckets-no-overflow} we could have used a result of Wormald's in almost its original form, but for the rest (e.g., Theorem~\ref{lambda-number-hops}) we need a stronger result that we present in Section~\ref{sec:proof-wormald-probabilistic-increment}.
    
\begin{theorem}[Adapted from {Wormald'95~\cite{wormald95}}]

    \label{thm:original-wormald}
    Let $\ell\geq 1$ be an integer. Consider discrete random processes $(Y_j)_{j=1}^\ell$ taking on real values, where $Y_j(k)$ is the value of the $j$-th process at time $k$. Suppose there exists $C>0$ such that $|Y_j(k)|\leq C \cdot N$ for all $k\leq N$ and $j\in \{1,\ldots,\ell\}$.

    Suppose also that for some $m=m(N)$, 
    \begin{enumerate}
        \item  there is a constant $\beta>0$ such that for all $k<m$ and $j\in [\ell]$,
        \[
        |Y_j(k+1)-Y_j(k)|\leq \beta\,;
        \]
        \item for all $j\in [\ell]$, and $k<m,$
        \[
        \mathbb{E}[Y_j(k+1)-Y_j(k) \,\,\big|\,\, \mathcal{F}_k] = f_j(k/N; Y_1(k)/N,\ldots,Y_b(k)/N) \,,
        \]
        where the functions $f_j $ are continuous and satisfy a Lipschitz condition on an open and connected domain $\mathcal{D}$;
        \item the initial vector $\boldsymbol{Y}(0)=(Y_1(0),\ldots,Y_\ell(0))$ satisfies $\|{\boldsymbol{Y}}(0)-N \boldsymbol{v}\|_\infty = o(N)$ for some fixed ${\boldsymbol{v}}=(v_1,\ldots,v_\ell)$ with $(0,v_1,\ldots,v_\ell)\in \mathcal{D}$.
    \end{enumerate}
        
        Let  $\boldsymbol{y}(t)=(y_1(t),\ldots,y_\ell(t))$ be the unique solution to the system of differential equations in Eq.~\eqref{eq:sys-ode}, with initial condition $y_1(0)=v_1,\ldots,y_\ell(0)=v_\ell$. Let $\sigma>0$ be the supremum of $t\geq 0$ such that $(s,\boldsymbol{y}(s))\in \mathcal{D}$ for all $s\in [0,t)$.

        Fix any $\theta \in (0,\sigma)$, then with probability tending to one we have $Y_j(k) = N y_j(k/N) + o(N)$, for some $o$-term that is actually uniform on $j$ and $k\leq \min(m,\theta N)$. 
            
\end{theorem}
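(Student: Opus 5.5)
This result is a special case of Theorem~\ref{thm:wormald-probabilistic-increment}, so I would derive it from that theorem by choosing its parameters $\beta,\gamma,\delta$ appropriately and checking that the failure probability tends to $0$ and the error term is $o(N)$. No new martingale argument should be needed.

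\emph{Step 1: Conditions 1 and 2.} Hypothesis~1 here bounds the increments deterministically by a constant $\beta$. So Condition~\textcolor{darkgray}{\bf 1} of Theorem~\ref{thm:wormald-probabilistic-increment} holds with $\beta(N):=\beta$ and $\gamma(N):=0$, and the term $\theta\ell N\gamma$ in the failure probability vanishes. For Condition~\textcolor{darkgray}{\bf 2}, the trending hypothesis is identical. We additionally need the $f_j$ to be uniformly bounded by some $R>0$. This follows from the Lipschitz condition and the boundedness of $\mathcal{D}$: the Lipschitz inequality is stated with the ambient max-norm, not an intrinsic path metric. So fixing any point $x_0\in\mathcal{D}$, we get $|f_j(x)|\le |f_j(x_0)|+L\,\mathrm{diam}_\infty(\mathcal{D})=:R$ for all $x\in\mathcal{D}$ and all $j$.

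\emph{Step 2: choice of $\delta$.} Write $\varepsilon_N:=\|\boldsymbol{Y}(0)-N\boldsymbol v\|_\infty/N$, which tends to $0$ by hypothesis~3. Set $\delta(N):=\max\bigl(N^{-1/3},\varepsilon_N\bigr)$. This has four consequences:
\begin{itemize}
\item $\delta\to0$;
\item Condition~\textcolor{darkgray}{\bf 3} of Theorem~\ref{thm:wormald-probabilistic-increment} holds by construction;
\item $\delta\ge N^{-1/3}\ge R/N$ for all $N$ large enough;
\item $N\delta^2/(8\theta\beta^2)\ge N^{1/3}/(8\theta\beta^2)\to\infty$.
\end{itemize}
Therefore Theorem~\ref{thm:wormald-probabilistic-increment} gives, with probability at least $1-2\ell\exp\bigl(-N^{1/3}/(8\theta\beta^2)\bigr)\to1$, that $|Y_j(k)-Ny_j(k/N)|\le 3e^{L\theta}\delta N$ simultaneously for all $j\in[\ell]$ and $k\le\min(m,\theta N)$. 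Since $\delta\to0$, the right-hand side is $o(N)$. The bound does not depend on $j$ or $k$, which gives the claimed uniformity.

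\emph{Main obstacle.} There is no real analytic difficulty. The delicate points are bookkeeping ones.
\begin{itemize}
\item The uniform bound $R$ on the $f_j$ has to be extracted from the Lipschitz hypothesis, as in Step~1, since it is not assumed explicitly here.
\item The extra hypothesis $|Y_j(k)|\le CN$ is not needed in this route. It only ensures that the rescaled process stays in a bounded region, which is already implicit in working on the bounded domain $\mathcal{D}$.
\item The failure bound of Theorem~\ref{thm:wormald-probabilistic-increment} should be read with the factor $2\ell$ (the number of processes) in front of the exponential, consistent with the discussion following that theorem.
\end{itemize}
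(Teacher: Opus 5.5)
Your reduction is correct, but it is not how the paper treats this statement. The paper gives no proof: it presents Theorem~\ref{thm:original-wormald} as a citation of Wormald's 1995 theorem. Wormald's proof is a direct martingale/Azuma argument over short blocks of steps, and his original form also allows a random initial condition and an approximate trending hypothesis. The paper even describes Theorem~\ref{thm:wormald-probabilistic-increment} as being ``derived from'' Theorem~\ref{thm:original-wormald}.

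You reverse that dependency and obtain the statement as a corollary of the sharper Warnke-type bound. This is not circular, but only because the proof written in Appendix~\ref{sec:proof-wormald-probabilistic-increment} rests on Theorem~\ref{thm:warnke-version} (Warnke's Theorem~2) and Lemma~\ref{lemma:conditioning-wormald}, not on Theorem~\ref{thm:original-wormald}. You should state explicitly that this is your foundation. Moreover, since your increments are deterministically bounded ($\gamma=0$), you can invoke Theorem~\ref{thm:warnke-version} directly and skip the conditioning lemma altogether.

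Your individual checks are sound:
\begin{itemize}
\item The uniform bound $R$ does follow from the ambient max-norm Lipschitz inequality together with the boundedness of $\mathcal{D}$ fixed in Section~\ref{sec:wormaldstheorem}.
\item The choice $\delta=\max(N^{-1/3},\varepsilon_N)$ meets every constraint. Because $\delta\to0$ and $\theta<\sigma$, the trajectory on $[0,\theta]$ also stays at distance well above $3e^{L\theta}\delta$ from $\partial\mathcal{D}$ for large $N$, which is the requirement in Warnke's original formulation that the adapted statement suppresses.
\item The hypothesis $|Y_j(k)|\le CN$ is indeed superfluous on this route.
\end{itemize}

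What your route buys over the bare citation is quantitative information. You get an explicit error $3e^{L\theta}\max(N^{-1/3},\varepsilon_N)\,N$ and an explicit failure probability $2\ell\exp\bigl(-N^{1/3}/(8\theta\beta^2)\bigr)$, rather than an unspecified $o(N)$ with probability $1-o(1)$. What the citation buys is independence from Warnke's result and the somewhat more general setting of Wormald's original theorem.
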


Theorem~\ref{thm:original-wormald} can be extended to allow for differences bounded by $\beta=\beta(N)\to \infty$, provided that $\beta(N)$ does  not grow too fast. Moreover, it still holds if $|Y_j(k+1)-Y_j(k)|\leq \beta(N)$ occurs with high probability instead instead of always. These variants already appear in the  paper of Wormald~\cite{wormald95}. We have supposed that the initial condition is deterministic, as the initial state of our algorithm is deterministic. Wormald's Theorem, in its original form, allows for a random initial condition. It also allows for the \emph{trending hypothesis} $(\textcolor{darkgray}{\bf 2.})$ to be an approximation up to an uniform $o(1)$. The condition  $k\leq \theta N$ for $\theta<\sigma$ simplifies the statement somewhat. This can be lifted by being more precise regarding the distance to the border of $\mathcal{D}$ and the growth of the functions $f_j$. Hence Theorem~\ref{thm:original-wormald} is both an extension and a restriction of the statement of~\cite{wormald95}, but everything can be deduced directly from the work of Wormald. Our main tool, Theorem~\ref{thm:wormald-probabilistic-increment}, which is proved in the next section, is derived from Theorem~\ref{thm:original-wormald}.

\subsection{Proof of Theorem~\ref{thm:wormald-probabilistic-increment}}

\label{sec:proof-wormald-probabilistic-increment}

To prove Theorem~\ref{thm:wormald-probabilistic-increment}, we rely on two results from Warnke~\cite{warnke2019wormaldsdifferentialequationmethod}. First, we improve on Theorem~\ref{thm:original-wormald} by precising the error term.
\begin{theorem}[{Adapted Theorem~2 from \cite{warnke2019wormaldsdifferentialequationmethod}}]

    \label{thm:warnke-version}
            Let $\ell\geq 1$ be an integer. Consider discrete random processes $(Y_j)_{j=1}^\ell$ taking real values, where $Y_j(k)$ is the value of the $j$-th process at time $k$. 
    Suppose that there are functions $m:=m(N)$, $\beta=\beta(N)$, and $\delta:=\delta(N)$, such that $\lim_N\gamma(N) = \lim_N\delta(N)=0$ and 
    \begin{enumerate}
        \item  for all $k<m$ and $j\in [\ell]$,
        $
        |Y_j(k+1)-Y_j(k)|\leq \beta\,;
        $
        \item for all $j\in [\ell]$, and  $k<m,$
        \[
        \mathbb{E}[Y_j(k+1)-Y_j(k) \,\,\big|\,\, \mathcal{F}_k] = f_j(k/N; Y_1(k)/N,\ldots,Y_\ell(k)/N)\,,
        \]
        where the functions $f_j $ are continuous, uniformly bounded by $R>0$, and are $L$-Lipschitz on an open and connected domain $\mathcal{D}$, form some $L>0$. 
        \item The initial vector $\boldsymbol{Y}(0)=(Y_1(0),\ldots,Y_\ell(0))$ satisfies $\|{\boldsymbol{Y}}(0)-N \boldsymbol{v}\|_\infty \leq \delta\cdot N$ for some fixed ${\boldsymbol{v}}=(v_1,\ldots,v_\ell)$ with $(0,v_1,\ldots,v_\ell)\in \mathcal{D}$.
    \end{enumerate}    
        Let  $\boldsymbol{y}(t)=(y_1(t),\ldots,y_\ell(t))$ be the unique solution to the system of differential equations of Eq.~\eqref{eq:sys-ode}, with initial condition $y_1(0)=v_1,\ldots,y_\ell(0)=v_\ell$. Let $\sigma>0$ be the supremum of $t\geq 0$ such that the solution $(s,\boldsymbol{y}(s))\in \mathcal{D}$ for all $s\in [0,t)$ and fix $\theta \in (0,\sigma)$.

        Whenever $\delta\geq R/N$, with probability at least $1-2 \ell \exp(-N \delta^2 / (8\theta \beta^2)) $, we have $|Y_j(k) - N y_j(k/N) |\leq 3 e^{L \theta} \delta \cdot N$, for all $j \in [\ell]$ and $k\leq \min(m,\theta N)$. 

\end{theorem}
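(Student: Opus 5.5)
We would follow Warnke's proof of his Theorem~2, which has two parts: a martingale concentration bound for the accumulated noise, and a discrete Gronwall argument comparing the process with the ODE solution. The domain issue is handled by a stopping time.

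\emph{Setup.} Fix $\theta\in(0,\sigma)$. The compact trajectory $\{(s,\boldsymbol{y}(s)) : s\in[0,\theta]\}$ lies in the open set $\mathcal{D}$, so it has some positive distance $\rho$ to $\partial\mathcal{D}$. For each $j$ we write
\[
Y_j(k)=Y_j(0)+\sum_{i<k} f_j\bigl(i/N;\boldsymbol{Y}(i)/N\bigr)+M_j(k),
\]
where $M_j(k)=\sum_{i<k}\bigl(\Delta Y_j(i)-\mathbb{E}[\Delta Y_j(i)\mid\mathcal{F}_i]\bigr)$. By Condition~2 this is a martingale. Let $\tau$ be the first $k$ at which $(k/N,\boldsymbol{Y}(k)/N)$ is at distance less than $\rho/2$ from $\partial\mathcal{D}$, or $k=\min(m,\theta N)$. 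We work with the stopped process, so that the trending hypothesis and the Lipschitz and boundedness properties of $f_j$ are available at every step used.

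\emph{Concentration.} By Condition~1 and $|f_j|\le R$, the stopped martingale increments are bounded by $2\beta$, using that we may assume $R\le\beta$ as in Warnke. The maximal Azuma--Hoeffding inequality (Freedman's version, or Doob's submartingale inequality applied to $e^{\pm\lambda M}$) over at most $\theta N$ steps gives
\[
\Pr\Bigl(\max_{k\le\tau}|M_j(k)|\ge\delta N\Bigr)\le 2\exp\Bigl(-\tfrac{(\delta N)^2}{2\theta N(2\beta)^2}\Bigr)=2\exp\bigl(-N\delta^2/(8\theta\beta^2)\bigr).
\]
A union bound over $j\in[\ell]$ then gives the stated failure probability.

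\emph{Deterministic comparison.} On the good event, let $e_k=\max_j|Y_j(k)-Ny_j(k/N)|$. We write $Ny_j(k/N)=Nv_j+\sum_{i<k}N\int_{i/N}^{(i+1)/N}f_j(s,\boldsymbol{y}(s))\,ds$. The $L$-Lipschitz property, together with $|y'|\le R$, gives
\[
\bigl|f_j(s,\boldsymbol{y}(s))-f_j(i/N,\boldsymbol{Y}(i)/N)\bigr|\le L\bigl(e_i/N+\max(1,R)/N\bigr)
\]
on each interval. Summing the three contributions:
\begin{itemize}
\item the initial error, at most $\delta N$ by Condition~3;
\item the martingale term, at most $\delta N$;
\item the drift term, at most $\sum_{i<k}(L/N)e_i+O(\theta L\max(1,R))$.
\end{itemize}
This yields $e_k\le 2\delta N+O(1)+\tfrac{L}{N}\sum_{i<k}e_i$. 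The $O(1)$ discretisation error is absorbed into one further $\delta N$ using the hypothesis $\delta\ge R/N$; this requires the careful per-step bookkeeping of Warnke, where the mismatch per step is at most $LR/N^2$ times $N$. The discrete Gronwall lemma then gives $e_k\le 3\delta N(1+L/N)^k\le 3e^{L\theta}\delta N$.

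\emph{Closing the argument.} Since $\delta\to0$, this bound is below $\rho N/2$ for large $N$. Hence the stopping time is not triggered by approaching $\partial\mathcal{D}$, and the estimate holds for all $k\le\min(m,\theta N)$.

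The main obstacle is this last step together with the constant bookkeeping. One must make sure the stopped process never leaves the region where the hypotheses apply, and that the discretisation error is dominated by $\delta N$ with exactly the constant $3$. The natural choice is to follow Warnke's Lemma-based accounting rather than a crude Taylor bound.
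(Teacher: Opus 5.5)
The paper gives no proof of this statement; it imports Warnke's Theorem~2. Your skeleton (Doob decomposition, maximal Azuma on the stopped martingale, discrete Gronwall) is Warnke's, and the concentration step is fine. The aside ``we may assume $R\le\beta$'' is neither justified nor needed: $|\Delta Y_j|\le\beta$ already gives $|\mathbb{E}[\Delta Y_j\mid\mathcal{F}_i]|\le\beta$, so the martingale increments are at most $2\beta$.

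\textbf{The gap.} The real problem is the step you flag yourself, and in the order you do it, it fails rather than merely needing care. You pull the discretisation error out as a lump $\theta L\max(1,R)$ before applying Gronwall, then absorb it into one extra $\delta N$ using $\delta N\ge R$. That requires $\delta N\ge\theta L\max(1,R)$, which does not follow once $\theta L>1$. Your per-step figure $LR/N$, summed over $\theta N$ steps, gives the same $\theta LR$, so it does not help.

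\textbf{The fix.} The missing idea is to keep the per-step error inside the recursion. With $R':=\max(1,R)$, your own estimates give
\[
e_k\le 2\delta N+\sum_{i<k}\Bigl(\tfrac{L}{N}e_i+\tfrac{LR'}{N}\Bigr).
\]
So $g_k:=e_k+R'$ satisfies $g_k\le 2\delta N+R'+\tfrac{L}{N}\sum_{i<k}g_i$. Discrete Gronwall gives $g_k\le(2\delta N+R')(1+\tfrac{L}{N})^k$, that is,
\[
e_k\le 2e^{L\theta}\delta N+R'\bigl(e^{L\theta}-1\bigr)\le 3e^{L\theta}\delta N\qquad\text{whenever }\delta N\ge R'.
\]
The factor $L$ in the per-step error cancels against $\sum_{i<k}\tfrac{L}{N}(1+\tfrac{L}{N})^{i}=(1+\tfrac{L}{N})^{k}-1$. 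Discretisation therefore costs $R'(e^{L\theta}-1)$ instead of $\theta LR'e^{L\theta}$.

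\textbf{Need for $R\ge 1$.} This computation also shows you need $R\ge 1$, which Warnke assumes and the adaptation silently drops. The time coordinate contributes $L/N$ per step whatever $R$ is. For example, for a given $N$, take $Y\equiv 0$ and $f(t,y)=L\,\mathrm{dist}(t,\tfrac1N\mathbb{Z})$, and choose $\beta$ small so the probability bound is non-vacuous. Then $R=L/(2N)$, but $|Y(\theta N)-Ny(\theta)|=L\theta/4$, which exceeds $3e^{L\theta}R$ for large $N$.

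\textbf{Stopping time.} Your $\rho$-argument only yields the bound once $3e^{L\theta}\delta<\rho/2$, i.e.\ for large $N$, which is weaker than the non-asymptotic statement. Since Condition~2 is postulated for all $k<m$, the points $(k/N,\boldsymbol{Y}(k)/N)$ already lie in $\mathcal{D}$, so you can simply drop the stopping time.
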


In order to prove Theorem~\ref{thm:wormald-probabilistic-increment} we still need Condition \textcolor{gray}{ \bf1} to be probabilistic, rather than deterministic, which is done in the following lemma from~\cite{warnke2019wormaldsdifferentialequationmethod}.

\begin{lemma}[{Lemma~9 from~\cite{warnke2019wormaldsdifferentialequationmethod}}]
    \label{lemma:conditioning-wormald}
    Consider events $(\mathcal{E}_i)_{i\leq I}$ with $\mathcal{E}_i \in \mathcal{F}_i$. Suppose that we  only verified Condition~\textcolor{darkgray}{ \bf1} and Condition~\textcolor{darkgray}{ \bf2} of Theorem~\ref{thm:warnke-version} under the asumption that $\mathcal{E}_i$ holds. Let $\mathcal{E}=\bigcap_{i\leq I} \mathcal{E}_i$. Then the probability of having $\mathcal{E}$ and the conclusion of Theorem~\ref{thm:warnke-version} {\bf{not}} holding is at most $2 \ell \exp(-N \delta^2/(8\theta \beta^2))$.
\end{lemma}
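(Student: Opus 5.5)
The plan is a stopping-time (``freezing'') argument: replace $\boldsymbol{Y}$ by a modified process $\boldsymbol{Y}'$ that agrees with $\boldsymbol{Y}$ as long as the good events hold and afterwards evolves in a harmless deterministic way. We then check that $\boldsymbol{Y}'$ satisfies Conditions~\textcolor{darkgray}{\bf 1} and~\textcolor{darkgray}{\bf 2} of Theorem~\ref{thm:warnke-version} unconditionally, and apply that theorem to $\boldsymbol{Y}'$.

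First, define the stopping time $T:=\min\{i\leq I : \mathcal{E}_i \text{ fails}\}$, with $T=\infty$ if all $\mathcal{E}_i$ hold. Since $\mathcal{E}_i\in\mathcal{F}_i$, the event $\{T>k\}=\bigcap_{i\leq k}\mathcal{E}_i$ belongs to $\mathcal{F}_k$. Set $\boldsymbol{Y}'(k)=\boldsymbol{Y}(k)$ for $k\leq T$. For $k\geq T$, use the deterministic drift step
\[
Y'_j(k+1)=Y'_j(k)+f_j\bigl(k/N;\boldsymbol{Y}'(k)/N\bigr)
\]
as long as $(k/N,\boldsymbol{Y}'(k)/N)\in\mathcal{D}$. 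If the point has left $\mathcal{D}$, use zero increments instead. This is harmless, since Theorem~\ref{thm:warnke-version} is itself proved by stopping at the exit time from $\mathcal{D}$, so only the behaviour inside $\mathcal{D}$ matters.

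Next, verify the hypotheses for $\boldsymbol{Y}'$. On $\{T>k\}\in\mathcal{F}_k$ the increment of $\boldsymbol{Y}'$ is that of $\boldsymbol{Y}$, and by assumption $\mathcal{E}_k$ guarantees both the bound $\beta$ and the exact drift $f_j$. Here we multiply the conditional expectation by the $\mathcal{F}_k$-measurable indicator $\mathbf{1}_{T>k}$. On $\{T\leq k\}$ the increment is deterministic given $\mathcal{F}_k$ and equals $f_j$, so its conditional expectation is exactly the drift. Its absolute value is at most $R$, and we may assume $R\leq\beta$: since the conditional mean of an increment bounded by $\beta$ is bounded by $\beta$, we can always enlarge $\beta$ to $\max(\beta,R)$ without harm. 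The initial condition is unchanged. Theorem~\ref{thm:warnke-version} then gives that the conclusion fails for $\boldsymbol{Y}'$ with probability at most $2\ell\exp(-N\delta^2/(8\theta\beta^2))$.

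Finally, transfer the result back to $\boldsymbol{Y}$. On $\mathcal{E}$ we have $T>I$, so $\boldsymbol{Y}'(k)=\boldsymbol{Y}(k)$ for all relevant $k$. Hence the event ``$\mathcal{E}$ holds and the conclusion fails for $\boldsymbol{Y}$'' is contained in ``the conclusion fails for $\boldsymbol{Y}'$'', which gives the claimed bound. The main obstacle is the bookkeeping in the second step. One must make sure that ``Conditions~\textcolor{darkgray}{\bf 1}--\textcolor{darkgray}{\bf 2} hold under $\mathcal{E}_k$'' is interpreted at the level of conditional expectations restricted to an $\mathcal{F}_k$-event. One must also make sure that the continuation after $T$ respects both the increment bound and the domain $\mathcal{D}$. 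If the domain issue turns out to be delicate, my first fallback is to freeze $\boldsymbol{Y}'$ completely (zero increments) after $T$. I would then re-run Warnke's martingale argument directly, rather than applying Theorem~\ref{thm:warnke-version} as a black box.
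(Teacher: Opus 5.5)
The paper gives no proof of this lemma; it quotes it from Warnke. So the comparison is with the standard argument. Your overall plan is the right one: stop at the first failure time $T$ of the $\mathcal{E}_i$, then transfer back on $\mathcal{E}$. The black-box step, however, has a real hole in Condition~1.

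After $T$ your increments are $f_j(k/N;\boldsymbol{Y}'(k)/N)$, which can be as large as $R$, while Condition~1 of Theorem~\ref{thm:warnke-version} requires them to be at most $\beta$. Your justification for ``we may assume $R\le\beta$'' only shows $|f_j|\le\beta$ at points $(k/N,\boldsymbol{Y}(k)/N)$ where $\mathcal{E}_k$ holds, since there $f_j$ is the conditional mean of a $\beta$-bounded increment. The continuation starts at $\boldsymbol{Y}(T)$, where $\mathcal{E}_T$ fails. From there it is an Euler scheme wandering through $\mathcal{D}$, where nothing better than $|f_j|\le R$ is known.

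Enlarging $\beta$ to $\max(\beta,R)$ is not harmless. It only yields $2\ell\exp(-N\delta^2/(8\theta\max(\beta,R)^2))$, which is strictly weaker than the claimed bound when $R>\beta$. No cleverer continuation can rescue the black-box route with the given $f_j$: at a point where $|f_j|>\beta$, no increment bounded by $\beta$ has mean $f_j$. Clipping $f_j$ to $[-\beta,\beta]$ preserves both conditions but may change the ODE solution, so it does not close the gap in general. Your patch for leaving $\mathcal{D}$ (zero increments) is also not a black-box use of the theorem as stated in the paper, which requires Condition~2 at every $k<m$.

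The repair is your fallback, and it is the standard proof. Write $\Delta Y_j:=Y_j(i+1)-Y_j(i)$ and consider the stopped compensated process
\[
M_j(k)=Y_j(k\wedge T)-Y_j(0)-\sum_{i<k\wedge T}f_j\bigl(i/N;\boldsymbol{Y}(i)/N\bigr),
\]
also stopped at the exit from $\mathcal{D}$. For $i<T$ its differences are bounded by $2\beta$, because on $\mathcal{E}_i$ both $|\Delta Y_j|\le\beta$ and $|f_j|=|\mathbb{E}[\Delta Y_j\mid\mathcal{F}_i]|\le\beta$ hold. For $i\ge T$ the differences vanish. A maximal Azuma--Hoeffding bound over $\theta N$ steps then bounds the probability that some $|M_j(k)|$ with $k\le\theta N$ reaches $\delta N$ by exactly $2\ell\exp(-N\delta^2/(8\theta\beta^2))$. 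On $\mathcal{E}$ the stopped and original processes coincide, so the deterministic discrete-Gronwall part of Warnke's proof applies unchanged.

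Your deterministic continuation is in fact harmless \emph{inside} that proof. Only the martingale differences $\Delta Y'_j-\mathbb{E}[\Delta Y'_j\mid\mathcal{F}_k]$ enter the concentration step, and these are zero after $T$. What fails is only invoking Theorem~\ref{thm:warnke-version} as a black box. As written, your argument is complete only when $R\le\beta$. That case does cover the paper's use in Theorem~\ref{lambda-number-hops}, where $\beta=(\log N)^2$ exceeds $R=1/(1-\theta)$ for large $N$.
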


We are now ready to prove the theorem. 
\begin{proof}[Proof of {Theorem~\ref{thm:wormald-probabilistic-increment}}]
Let $\mathcal{E}_j$ be the event $|Y_{j+1}(k)-Y_j(k)|\leq \beta(N)$ for all $j\in [\ell]$ and $k\leq \theta N$ (here we can take $m=N$). Let $S=S(N,\delta,\beta)$ be the event that corresponds to the conclusions of Theorem~\ref{thm:warnke-version}.  Lemma~\ref{lemma:conditioning-wormald} tells us that if $\mathcal{E}=\bigcap_{j < \theta N} \mathcal{E}_j$, the probability $\Pr(\mathcal{E}\cap \overline{S})\leq 2 \ell \exp(-N \delta^2/(8\theta \beta^2))$, where $\overline{S}$ is the complement of $S$. Observe that $\Pr(\overline{\mathcal{E}}) \leq \ell\times\theta N\times \gamma(N)$ by applying the union-bound. We deduce that $\Pr(\overline{S})\leq \Pr(\overline{\mathcal{E}})+\Pr(\mathcal{E}\cap \overline{S}) \leq \ell \theta N \gamma(N) + 2 \ell \exp(-N \delta^2/(8\theta \beta^2))$ and the conclusion follows by taking the complement.
\end{proof}

\section{Proofs from Section~\ref{sec:no-overflow}}

\subsection{Proof of Theorem~\ref{thm:proportion-buckets-no-overflow}}
\label{sec:proof:thm:proportion-buckets-no-overflow}

In this section we complete the proof of the following theorem.

\thmproportionoccupied*

We begin by proving the following lemma, which we need in order to prove that $u_b(t)\leq t$ holds. This is required in order to assert that the solution to the system of ODEs exists on $[0,1)$, because of the denominators $1-u_b(t)$. Another consequence of the lemma is that $u_i(t)\leq 1$ holds for every $i$, and every $t$ where the solutions are defined.  Given these properties, we can verify that for every $\varepsilon \in (0,1)$ the system satisfies the hypothesis of the Picard-Lindel\"of Theorem for $t\in [0,1-\varepsilon)$.
\begin{lemma}
    \label{lemma:sols-are-positive}
    The solution $(u_0(t),u_1(t),\ldots,u_b(t))$ to the system of ODEs in Equation~\eqref{eq:sys-ode-occupied-buckets-no-overflow} with initial condition $u_0(0)=1,u_1(0)=0,\ldots,u_\bucketSize(0)=0$,  satisfies $u_i(t)>0$, for every $i>0$, and every $t>0$ where the solution can be extended.
\end{lemma}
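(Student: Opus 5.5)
We prove positivity by exploiting the lower-triangular (cascade) structure of the system in Eq.~\eqref{eq:sys-ode-occupied-buckets-no-overflow}. Let $[0,T)$ be the maximal interval on which the solution exists with $u_b(t)<1$, so that the denominators are well defined. Set $a(t):=\frac{b}{1-u_b(t)}$, which is continuous on $[0,T)$. We do not need positivity of $a$ a priori, only its continuity. Each equation is then a scalar linear ODE in its own unknown, driven by the previous one.

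First, $u_0'(t)=-a(t)u_0(t)$ with $u_0(0)=1$. Its solution is
\[
u_0(t)=\exp\Bigl(-\int_0^t a(s)\,ds\Bigr)>0 .
\]
Next, for $1\le i\le b-1$, define $A(t):=\int_0^t a(s)\,ds$. Variation of constants applied to $u_i'=a u_{i-1}-a u_i$ with $u_i(0)=0$ gives
\[
u_i(t)=e^{-A(t)}\int_0^t a(s)\,e^{A(s)}\,u_{i-1}(s)\,ds .
\]
Finally, $u_b(t)=\int_0^t a(s)\,u_{b-1}(s)\,ds$.

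We argue by induction on $i$. The integrand is nonnegative if $a\ge 0$ and $u_{i-1}\ge 0$. To obtain strict positivity for $t>0$, we need $a(s)>0$ and $u_{i-1}(s)>0$ on a set of positive measure in $(0,t)$. The key sign fact is that $a>0$ whenever $u_b<1$. On the maximal interval this holds at least near $0$, since $u_b(0)=0$. So on $(0,T)$ we have $a>0$ by the very definition of $T$. Then $u_0>0$, and the induction $u_{i-1}>0$ on $(0,t) \Rightarrow u_i(t)>0$ goes through for every $i\le b$. The base case for $i=1$ uses that $u_0$ is positive everywhere, including at $s=0$.

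The main obstacle is circularity: positivity relies on $1-u_b>0$, and the bound $u_b<1$ (indeed $u_b\le t$) in turn relies on positivity. We break it by working only on the maximal interval where $1-u_b>0$. There the denominators are positive, and positivity follows as above. Then $\sum_i u_i=1$ together with $\sum_i i\,u_i=bt$ and $u_i\ge0$ give $b\,u_b\le bt$, hence $u_b\le t<1$ for $t<1$. This shows the solution never reaches the singular set $u_b=1$ before time $1$. So the maximal interval is all of $[0,1)$, and the positivity statement holds for every $t>0$ at which the solution can be extended.
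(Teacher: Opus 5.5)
Your proof is correct, but it takes a different route from the paper. The paper argues by contradiction at a first zero. It takes the smallest $t'>0$ at which some $u_i$ with $i>0$ vanishes, and picks the smallest such $i$ so that $u_{i-1}(t')>0$. It then reads off from the system that $u_i'(t')=b\,u_{i-1}(t')/(1-u_b(t'))>0$, contradicting the fact that $u_i$ reaches $0$ from above. You instead use the cascade structure to write each $u_i$ by variation of constants, so positivity propagates by induction through integrals of positive integrands. You break the circularity with $1-u_b>0$ by working on the maximal interval where $u_b<1$.

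Your version is more robust at the degenerate starting point. Since $u_1,\ldots,u_b$ all vanish at $t=0$, the paper's first-zero argument tacitly needs that they are positive on some initial interval $(0,\varepsilon)$. That requires a separate local argument, e.g.\ $u_i(t)\sim (bt)^i/i!$ as $t\to 0^+$, whereas your integral formula gives strict positivity on all of $(0,T)$ at once. It also yields more. Iterating it gives $u_i(t)=e^{-A(t)}A(t)^i/i!$ for $i<b$, and $A(t)=b\int_0^t\frac{ds}{1-u_b(s)}$ is exactly $\lambda_b(t)$ by Lemma~\ref{lemma:lambda-integral}. So you essentially derive the closed form of Theorem~\ref{thm:proportion-buckets-no-overflow} rather than merely verifying it. The paper's argument, in turn, is shorter and purely local, in the spirit of the standard forward-invariance argument for the nonnegative orthant.
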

\begin{proof}
It is immediate that $u_0(t)>0$ for $t>0$ as the system implies $u_0(t)=\exp\left(-b\int_0^t \frac{dt}{1-u_b(t)}\right)$.

For $i>0$, we claim that $u_i(t)>0$ for $t>0$ holds too. Suppose for the sake of contradiction that $t'>0$ were the smallest (by continuity) possible so that $u_i(t')=0$ for some $i>0$. We take the smallest $i$ without loss of generality, so that $u_{i-1}(t')>0$. Observe that $1-u_b(t')=u_0(t')+\ldots+u_{b-1}(t')>0$ by our choice of $t'$.  By the system, \[u_i'(t')=b\frac{ u_{i-1}(t')}{1-u_b(t')}-b\frac{u_i(t')}{1-u_b(t')} = b\frac{u'_{i-1}(t')}{1-u_b(t')} > 0\,,\] but this is a contradiction, as we must have $u_i'(t')\leq 0$ by the choice of $t'$.
\end{proof}

\begin{corollary}
    The solution $(u_0(t),u_1(t),\ldots,u_b(t))$ to the system of ODEs in Equation~\eqref{eq:sys-ode-occupied-buckets-no-overflow} with initial condition $u_0(0)=1,u_1(0)=0,\ldots,u_\bucketSize(0)=0$,  satisfies $u_i(t)\leq 1$ for every $i=0,1,\ldots,b$, and for every $t>0$ where the solution can be extended.
\end{corollary}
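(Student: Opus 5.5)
The upper bound $u_i(t)\leq 1$ follows from the positivity just established, together with the conservation law $\sum_{i=0}^b u_i(t)=1$. First, I would check the conservation law directly from the system in Eq.~\eqref{eq:sys-ode-occupied-buckets-no-overflow}. Summing the right-hand sides gives a telescoping sum: each term $b\,u_{i-1}/(1-u_b)$ with a plus sign cancels the corresponding term $-b\,u_{i-1}/(1-u_b)$ in the equation for $u_{i-1}$. Hence $\frac{d}{dt}\sum_i u_i(t)=0$ wherever the solution is defined. Since $\sum_i u_i(0)=1$, the sum equals $1$ for every such $t$.

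Next, I would invoke Lemma~\ref{lemma:sols-are-positive}. It gives $u_i(t)>0$ for all $i\geq 1$ and all $t>0$ in the domain of the solution. For $u_0$, the explicit formula $u_0(t)=\exp\bigl(-b\int_0^t \frac{ds}{1-u_b(s)}\bigr)$ from that proof gives $u_0(t)>0$ as well. Note that this formula is well defined precisely because the solution exists up to $t$, so $u_b(s)\neq 1$ on $[0,t]$. So all coordinates are nonnegative.

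Finally, for any fixed $i$, I would write $u_i(t)=1-\sum_{j\neq i}u_j(t)\leq 1$, since each dropped term is nonnegative. This holds for every $t>0$ where the solution extends; at $t=0$ it is immediate from the initial condition.

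I do not expect a genuine obstacle. The only point requiring care is that the argument stays inside the maximal interval of existence, where $1-u_b(t)\neq 0$, so that the telescoping computation and the formula for $u_0$ are legitimate. This restriction is already built into the statement through the phrase ``where the solution can be extended.'' As a byproduct, the bound gives $u_b(t)<1$ strictly, because $u_0(t)>0$. Combined with $\sum_i i\,u_i=bt$, it also recovers $u_b(t)\leq t$.
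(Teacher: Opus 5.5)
Your proof is correct and follows the paper's argument, which deduces $u_i(t)\leq 1$ directly from the conservation law $\sum_{i=0}^b u_i(t)=1$ and the nonnegativity given by Lemma~\ref{lemma:sols-are-positive}. Your telescoping verification of the conservation law and your separate treatment of $u_0$ via its explicit exponential formula only spell out details that the paper leaves implicit.
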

\begin{proof}
    Follows from $\sum_{i=0}^b u_i(t)=1$ and the non-negativity.
\end{proof}

\begin{corollary}
    The solution $(u_0(t),u_1(t),\ldots,u_b(t))$ to the system of ODEs in Equation~\eqref{eq:sys-ode-occupied-buckets-no-overflow} with initial condition $u_0(0)=1,u_1(0)=0,\ldots,u_\bucketSize(0)=0$,  satisfies $u_b(t)\leq t$ for every $t>0$ where the solution can be extended.
\end{corollary}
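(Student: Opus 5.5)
The plan is to use the two conservation laws of the system in Eq.~\eqref{eq:sys-ode-occupied-buckets-no-overflow}, namely $\sum_{i=0}^b u_i(t)=1$ and $\sum_{i=0}^b i\,u_i(t)=bt$, together with the non-negativity from Lemma~\ref{lemma:sols-are-positive}. Multiplying the first identity by $b$ and subtracting the second gives
\[
\sum_{i=0}^{b} (b-i)\,u_i(t) \;=\; b - bt .
\]
The term $i=b$ vanishes, so $b-bt=\sum_{i<b}(b-i)u_i(t)$.

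Next we bound the left-hand side below by $b\,u_b$-type information. Since $u_i(t)\ge 0$ for every $i$ and $b-i\ge 0$, and $\sum_{i<b}u_i(t)=1-u_b(t)$, the weighted sum satisfies
\[
\sum_{i<b}(b-i)\,u_i(t)\;\ge\;\sum_{i<b}u_i(t)\;=\;1-u_b(t),
\]
because each weight $b-i$ is at least $1$ for $i<b$. Combining the two displays yields $b-bt\ge 1-u_b(t)$, that is $u_b(t)\ge 1-b+bt$, which is the wrong direction. So instead we use the bound $i\le b$ in the second conservation law: $bt=\sum_i i\,u_i(t)\ge b\,u_b(t)$, because all terms are non-negative and the $i=b$ term equals $b\,u_b(t)$. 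Dividing by $b$ gives $u_b(t)\le t$ directly.

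The only step requiring care is justifying $\sum_i i\,u_i(t)=bt$ on the whole interval of existence. Differentiating, $\frac{d}{dt}\sum_i i u_i=\frac{b}{1-u_b}\bigl(\sum_{i=1}^{b} i u_{i-1}-\sum_{i=0}^{b-1} i u_i\bigr)=\frac{b}{1-u_b}\sum_{i=0}^{b-1}u_i=b$, using $\sum_{i<b}u_i=1-u_b$ from the first conservation law. The first law itself follows because the right-hand sides telescope to zero. With initial value $0$ this gives $\sum_i iu_i=bt$. Non-negativity of all $u_i$ (including $u_0$, given by the exponential formula) comes from Lemma~\ref{lemma:sols-are-positive}. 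This completes the argument wherever the solution is defined, which is exactly where $1-u_b(t)>0$.
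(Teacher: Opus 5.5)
Your argument is correct and is essentially the paper's own proof: you obtain $\sum_i i\,u_i(t)=bt$ by differentiating and using $\sum_{i<b}u_i=1-u_b$, then apply the non-negativity from Lemma~\ref{lemma:sols-are-positive} to get $b\,u_b(t)\le bt$. Delete the opening detour that produces the wrong-direction inequality; using $b-i\le b$ there instead of $b-i\ge 1$ would also have given $u_b(t)\le t$ directly.
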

\begin{proof}
    From the system of ODEs in Eq.~\eqref{eq:sys-ode-occupied-buckets-no-overflow} and simple calculations we deduce that
    \[
    \sum_{i=0}^b i u_i^\prime(t)= b \frac{\sum_{i<b} u_i(t)}{1-u_b(t)} = b\,,
    \]
    as we know that $\sum_{i=0}^b u_i(t)=1$. We deduce that $\sum_i i u_i(t)=bt$. Then, since $u_i(t) > 0$ for $t>0$ and $i=0,\ldots,b$, we deduce that $b u_b(t) \leq b t$, thus $u_b(t)\leq t$.
\end{proof}

We now conclude the proof of Theorem~\ref{thm:proportion-buckets-no-overflow}.
Theorem~\ref{thm:wormald-probabilistic-increment} already gives the approximations of the $U_i(k)$'s by $nu_i(k/N)$, so
it remains to prove the expressions for the $u_i$'s are correct in the statement Theorem~\ref{thm:proportion-buckets-no-overflow}. Since the solution to the system of ODEs (ordinary differential equations) in Equation~\eqref{eq:sys-ode-occupied-buckets-no-overflow} is unique, given the initial conditions $u_0(0)=1,u_1(0)=0,\ldots,u_b(0)=0$, it suffices
to check that the proposed formula satisfies the system of ODEs. Note that $\lambda_b(t)$ is well defined, differentiable, and unique, due to the Implicit Function Theorem. 
Indeed, let $F(t,\lambda):=b-bt-\sum_{i=0}^{b} (b-i) \tfrac{\lambda^i}{i!} e^{-\lambda}$. Then  $\partial_\lambda F(t,\lambda)=-\sum_{i=0}^{b-1} (b-i-1) \tfrac{\lambda^i}{i!} e^{-\lambda}+\sum_{i=0}^{b} (b-i) \tfrac{\lambda^i}{i!} e^{-\lambda}=\sum_{i=0}^{b-1} \tfrac{\lambda^i}{i!} e^{-\lambda} \geq e^{-\lambda}>0$ for any $\lambda$.

Let $u_i(t)$ be the functions defined by the formulas at the end of the statement of Theorem~\ref{thm:proportion-buckets-no-overflow}, we show that they verify the system. They clearly satisfy the initial conditions. Observe that, for $0<i<b$, the functions defined by the formulas satisfy
\begin{equation}\label{eq:u prime}
u_i'(t) = \lambda_b'(t) \left( u_{i-1}(t) - u_i(t)\right),
\end{equation}
while $u_0'(t)=-u_0(t) \lambda_b'(t)$ and $u_b'(t)=-u_0'(t)-\ldots-u_{b-1}'(t)$. Hence, the system of ODEs is satisfied if we can also prove that $\lambda_b'(t)=\tfrac{b}{1-u_b(t)}$ and $u_b'(t)=\lambda_b'(t) u_{b-1}(t)$.

\begin{itemize}
    \item Observe that we have a telescoping sum $u_b'(t)=-u_0'(t)-\ldots-u_{b-1}'(t) = -\lambda_b'(t) u_b(t)$, hence it only remains to prove that $\lambda_b'(t)=\tfrac{b}{1-u_b(t)}$.

    \item Remark that by Eq.~\eqref{eq:u prime}, we have $\sum_{i=0}^b u_i'(t) i = \lambda_b'(t) \sum_{i=0}^{b-1}u_i(t)$.

    \item Finally, using basic computations and the identity above, the equation defining $\lambda_b(t)$ is equivalent to $    b-b t = b - \sum_{i=0}^b u_i(t) i$, since $u_b(t):=1-u_0(t)-\ldots-u_{b-1}(t)$.
\end{itemize}
    Differentiating the equation $    b-b t = b - \sum_{i=0}^b u_i(t) i$ yields
    \begin{align*}
        -b &= -\sum_{i=0}^b u_i'(t) i \\
        &= -\lambda_b'(t) \sum_{i=0}^{b-1}u_i(t) \\ 
        &= -\lambda_b'(t) \times (1-u_b(t))\,,
    \end{align*}
    completing the proof of Theorem~\ref{thm:proportion-buckets-no-overflow}
    .

\section{Proofs from Section~\ref{sec:lambda}}
\label{sec:proofpropertieslambda}

\propertieslambda*
\begin{proof}
First, the inequality $u_b(t)\leq t$ implies that $\lambda_b'(t) = \frac{b}{1-u_b(t)} \leq \frac{b}{1-t}$ and integrating we obtain $\lambda_b(t)\leq b \log\left(\frac{1}{1-t}\right)$ for all $t \in [0,1)$.

Next, observe that $p(\lambda_b(t))e^{-\lambda_b(t)}=b-bt$ for $p(x):=\sum_{i\leq b} (b-i) \frac{x^i}{i!}$. Clearly,  the coefficients of the polynomial $p(x)$ being positive, we have $p(0)\leq p(x)$ for every $x\geq 0$. It follows that $b e^{-\lambda_b(t)} \leq b - bt$. Taking logarithms we obtain that $\log \left(\frac{1}{1-t}\right)\leq \lambda_b(t)$  for all $t \in [0,1)$. 

Finally, we move on to the asymptotics as $t\to 0^+$ and $t\to 1^-$. Observe that $\lambda_b(t)\geq 1$ for $t\geq 1-e^{-1}$ due to the previous inequality. Then we can bound simply $p(\lambda_b(t))\leq b^2 (\lambda_b(t))^{b-1} $. Taking logarithms we obtain $\log b + \log (1-t) \leq -\lambda_b(t) +2\log b+(b-1)\log \lambda_b(t)$. Using $\lambda_b(t)\leq b \log\left(\frac{1}{1-t}\right)$, we deduce
\[
\lambda_b(t) \leq \log\left(\tfrac{1}{1-t}\right) + (b-1)\log\log \left(\tfrac{1}{1-t}\right) + b\log b\,,
\]
for $t\geq 1-e^{-1}$. Along with $\log \left(\frac{1}{1-t}\right)\leq \lambda_b(t)$ , this proves the asymptotic as $t\to 1^-$.

On the other hand $\lambda_b'(t)\geq b$ and so $\lambda_b(t)\geq b t$. Since $b t \sim b \log(\tfrac{1}{1-t})$ as $t\to 0^+$, the other equivalent for $t\to 0^+$ follows because we have seen that $\lambda_b(t)\leq b \log\left(\frac{1}{1-t}\right)$ holds.
\end{proof}

\gammaincomplete*

\begin{proof}
    Note that $1-u_b(t) = \sum_{i=0}^{b-1} u_i(t) = \sum_{i=0}^{b-1} \tfrac{\lambda_b(t)^i}{i!}e^{-\lambda_b(t)} = Q(b,\lambda_b(t))$, the last equality being a classical relation $Q(b,\omega) = \Pr({\tt Poiss}(\omega)< b)$ for $b=1,2,\ldots$ where ${\tt Poiss}(\omega)$ denotes a Poisson random variable of parameter $\omega$. The rest follows from Lemma~\ref{lemma:lambda-integral}.
\end{proof}

\section{Proofs from Section~\ref{sec:analysis-overflow}}

\subsection{Proving the formula}\label{sec:D formula}

We will need the following lemma to prove the formula for $v_j(t)$.

\begin{restatable}{lemma}{integralinduction}
    
    \label{lemma:integral-induction}
    Let $F(x,y)$,  $F\colon [0,\infty)\times [0,\infty)\to \mathbb{R}$, be continuous on $(x,y)$, and  $y\mapsto F(x,y)$ continuously differentiable on $y$ for almost every $x$, then
    \[
    \tfrac{d}{dy} \left(y\mapsto \int_0^y F(x,y)dx\right) = F(y,y) + \int_0^y \tfrac{\partial}{\partial y}F (x,y) dx\,.
    \]
    In particular, for any integers $b,d,s\geq 1$,
    \[
\tfrac{d}{dy}\left(\int_0^{y}    x^{b-1} (e^{-x/d} - e^{-y/d})^{s} dx\right) =  \tfrac{s}{d} e^{-y/d}  \int_0^{y}    x^{b-1} (e^{-x/d} - e^{-y/d})^{s-1} dx\,.
\]

\end{restatable}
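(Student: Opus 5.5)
The plan is to handle the two displayed claims of Lemma~\ref{lemma:integral-induction} separately: the general formula is a Leibniz rule for a variable upper limit, and the specific identity is a direct application of it.

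\textbf{General Leibniz rule.} Set $G(y):=\int_0^y F(x,y)\,dx$ and split the difference quotient as
\[
\frac{G(y+h)-G(y)}{h} = \frac1h\int_y^{y+h} F(x,y+h)\,dx + \int_0^y \frac{F(x,y+h)-F(x,y)}{h}\,dx .
\]
For the first term, joint continuity of $F$ on a compact neighbourhood of $(y,y)$ gives uniform continuity there. Hence the average of $F(\cdot,y+h)$ over $[y,y+h]$ tends to $F(y,y)$.

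For the second term, the mean value theorem in the second variable rewrites the integrand as $\partial_y F(x,\xi_{x,h})$ for some $\xi_{x,h}$ between $y$ and $y+h$. This converges pointwise, for almost every $x$, to $\partial_y F(x,y)$. To pass to the limit I would use dominated convergence, with the domination coming from a local bound on $\partial_y F$ over the compact set $[0,y]\times[y-1,y+1]$.

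This domination is the one delicate point. The stated hypotheses, continuous differentiability only for almost every $x$, do not by themselves guarantee a uniform bound on $\partial_y F$. I would therefore add the mild assumption that $\partial_y F$ is locally bounded, or simply that it is jointly continuous. That is clearly satisfied in the application below, so nothing downstream is affected.

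\textbf{The specific identity.} Apply the rule with $F(x,y)=x^{b-1}(e^{-x/d}-e^{-y/d})^s$, which is smooth in both variables, so the extra assumption holds.

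The boundary term vanishes: $F(y,y)=y^{b-1}(e^{-y/d}-e^{-y/d})^s=0$, and this uses $s\ge 1$.

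For the integrand term, differentiating in $y$ gives
\[
\partial_y F(x,y)= s\,x^{b-1}(e^{-x/d}-e^{-y/d})^{s-1}\cdot \tfrac1d e^{-y/d}.
\]
Pulling the factor $\tfrac{s}{d}e^{-y/d}$ out of the integral yields exactly the claimed formula.

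Note that for $s=1$ the factor $(e^{-x/d}-e^{-y/d})^{0}$ should be read as $1$, including at $x=y$. It is worth stating this convention explicitly, since the lemma will be used inductively in $j$ to verify the formula for $v_j$ against the system in Eq.~\eqref{eq:system-overflow-bits}.
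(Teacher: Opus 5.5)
Your argument is correct, but it is organized differently from the paper's. You differentiate $G(y)=\int_0^y F(x,y)\,dx$ directly through the difference quotient. You handle the moving endpoint by continuity of $F$ at $(y,y)$, and the integrand by the mean value theorem plus dominated convergence.

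The paper instead verifies the identity at the level of primitives. It integrates the proposed right-hand side over $[0,a]$ and swaps the order in $\int_0^a\int_0^y \partial_y F(x,y)\,dx\,dy$ by Fubini. Applying the fundamental theorem of calculus in $y$ gives $\int_0^a \bigl(F(x,a)-F(x,x)\bigr)\,dx$, and the $\int_0^a F(y,y)\,dy$ terms cancel, leaving $G(a)$. That route avoids any interchange of limit and integral, but it tacitly needs two things:
\begin{itemize}
\item integrability of $\partial_yF$ on the triangle, which the paper asserts but does not derive from the hypotheses;
\item continuity in $y$ of the right-hand side, to go from equal primitives back to equal derivatives at every point.
\end{itemize}
Your domination hypothesis yields the derivative pointwise in one step, and you state it openly.

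Your caveat is not cosmetic. Let $\varphi$ be smooth and compactly supported with $\varphi(s)=s$ near $0$, and set $F(x,y)=x\,\varphi\bigl((y-1)/x^2\bigr)$ with $F(0,y)=0$. This $F$ is jointly continuous and smooth in $y$ for every $x$. Yet $\partial_yF(x,1)=1/x$, and $G$ has infinite right derivative at $y=1$. In this example Fubini still applies, and it is the paper's final step that breaks. So an assumption like yours is genuinely required.

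For the concrete $F(x,y)=x^{b-1}(e^{-x/d}-e^{-y/d})^s$ everything is smooth. Your computation of $F(y,y)=0$ and of $\partial_yF$ is exactly what the paper leaves implicit behind ``in particular''.
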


\begin{proof}
    The proof is just a direct verification: we show that the primitives coincide, hence so do the functions. If we integrate the right-hand side from $0$ to $a$ we obtain:
\[
\int_0^a F(y,y)dy + \int_0^a \left(\int_0^y \tfrac{\partial}{\partial y}F(x,y)dx\right)dy\,.
\]
Reversing the order of integration on the second term by Fubini's Theorem (as the partial derivative $\tfrac{\partial}{\partial y}F(x,y)$ is integrable on $[0,a]\times [0,y]$) we obtain
\[
\int_0^a F(y,y)dy + \int_0^y \left( \int_x^a\tfrac{\partial}{\partial y}F(x,y)dy\right)dx = \int_0^a F(y,y)dy + \int_0^a \left( F(x,a)-F(x,x)\right)dx\,.
\]
Thus we simplify
\[
\int_0^a F(y,y)dy + \int_0^y \left( \int_0^a\tfrac{\partial}{\partial y}F(x,y)dy\right)dx = \int_0^a F(y,y)dy + \int_0^a F(x,a)dx - \int_0^a F(x,x)dx\,,
\]
which is exactly  $a\mapsto \int_0^a F(x,a)dx$. Thus the proof is complete.
\end{proof}

\paragraph*{Proof of the formula}
We prove inductively that the result holds for all $j\leq d$.

\noindent$\bullet$ Base case ($j=0$). Recall that $v_0$ can be defined by $v_0^\prime(t) = b\frac{u_{b-1}(t)}{1-u_b(t)} - b \tfrac{v_0(t)}{1-u_b(t)}$.  Using the product rule and $\lambda_b^\prime(t)=b\frac{1}{1-u_b(t)}$ (see Lemma~\ref{lemma:lambda-integral}) we obtain \[\left(v_0(t)e^{\lambda_b(t)}\right)^\prime = u_{b-1}(t) \lambda_b^\prime(t) e^{\lambda_b(t)} = \tfrac{\lambda_b(t)^{b-1}}{(b-1)!} \lambda_b^\prime(t)\,,\] where we have used that $u_{b-1}(t)=\tfrac{\lambda_b(t)^{b-1}}{(b-1)!} e^{-\lambda_b(t)}$. 
Integrating, with the change of variables $x=\lambda_b(t)$, gives
 \[v_0(t)e^{\lambda_b(t)} = \int_0^t \tfrac{\lambda_b(t)^{b-1}}{(b-1)!} \lambda_b^\prime(t) dt = \int_0^{\lambda_b(t)} \frac{x^{b-1}}{(b-1)!} dx \,,\] 
which proves the base case.

\noindent$\bullet$ Inductive step:  assume the result holds for all $j^\prime$, $0\leq j\prime<j\leq d$. We prove that it holds for $j$. Recall that
\[
v_j^\prime(t) = b \left(1-\tfrac{j-1}{d}\right) \frac{v_{j-1}(t)}{1-u_b(t)} - b \left(1-\tfrac{j}{d}\right) \frac{v_{s}(t)}{1-u_b(t)}\,,
\]
which is also valid for $j=d$, in which case the latter term is $0$ due to the factor $\left(1-\tfrac{j}{d}\right)$.
Thus 
\[
\left(v_j(t) e^{(1-\tfrac{j}{d})\lambda_b(t)}\right)^\prime  =
 b \left(1-\tfrac{j-1}{d}\right) \frac{v_{j-1}(t)}{1-u_b(t)} e^{(1-\tfrac{j}{d})\lambda_b(t)} =
 \left(1-\tfrac{j-1}{d}\right)\lambda_b^\prime(t) e^{(1-\tfrac{j}{d})\lambda_b(t)} v_{j-1}(t)\,.
\]
By the induction hypothesis $v_{j-1}(t)$ satisfies the proposed formula, thus
 {\footnotesize
\begin{align*}
    \left(v_j(t) e^{(1-\tfrac{j}{d})\lambda_b(t)}\right)^\prime 
&=
  \left(1-\tfrac{j-1}{d}\right) \lambda_b^\prime(t) e^{(1-\tfrac{j}{d})\lambda_b(t)} \cdot \binom{d}{j-1}  e^{-(d-(j-1)) \lambda_b(t)/d}\int_0^{\lambda_b(t)}   \tfrac{x^{b-1}}{(b-1)!} (e^{-x/d}-e^{-\lambda_b(t)/d})^{j-1} dx\,, \\
  &= \left(\tfrac{d-j+1}{d}\right) \binom{d}{j-1} e^{-\tfrac{1}{d}\lambda_b(t)}\lambda_b^\prime(t) \times \int_0^{\lambda_b(t)}   \tfrac{x^{b-1}}{(b-1)!} (e^{-x/d}-e^{-\lambda_b(t)/d})^{j-1} dx\,.
\end{align*}
}
Integrating we find
\[
v_j(t) e^{(1-\tfrac{j}{d})\lambda_b(t)} = \left(\tfrac{d-j+1}{d}\right) \binom{d}{j-1}  \int_0^t \left( e^{-\tfrac{1}{d}\lambda_b(t)}\lambda_b^\prime(t) \times \int_0^{\lambda_b(t)}   \tfrac{x^{b-1}}{(b-1)!} (e^{-x/d}-e^{-\lambda_b(t)/d})^{j-1} dx\right)dt\,.
\]
And changing variables $y=\lambda_b(t)$ on the outer integral,
\[
v_j(t) e^{(1-\tfrac{j}{d})\lambda_b(t)} = \left(\tfrac{d-j+1}{d}\right) \binom{d}{j-1} \int_0^{\lambda_b(t)} \left( e^{-\tfrac{y}{d}}\int_0^{y}   \tfrac{x^{b-1}}{(b-1)!} (e^{-x/d}-e^{-y/d})^{j-1} dx\right)dy\,.
\]

Integrating the result in Lemma~\ref{lemma:integral-induction}, we have the identity
\[
 \int_0^{\lambda_b(t)} \left( e^{-y/d}  \int_0^{y}    x^{b-1} (e^{-x/d} - e^{-y/d})^{j-1}dx\right) dy
  = \tfrac{d}{j}   \int_0^{\lambda_b(t)}    x^{b-1} (e^{-x/d} - e^{-\lambda_b(t)/d})^{j} dx
\]
which completes the proof for all $j\leq d$, since $\frac{d-j+1}{d} \frac{d}{j} \binom{d}{j-1} = \binom{d}{j}$ for $j>0$.

\begin{remark}
    Equivalently, one could prove this in the time of ``hops'' and then retrieve the result by composing with $\lambda_b(t)/b$.
\end{remark}

\subsection{Proof of Theorem~\ref{thm:unsuccessful-overflow}}
\label{sec:theorem overflow unsuccessful}

\expectedvalueoverflow*

\begin{proof}
We recall that the unsuccessful search, after $k$ insertions, is a geometric random variable
with parameter $P(k)=\sum_{j=0}^d \tfrac{V_j(k)}{n} \tfrac{j}{d}$. Hence, the expected value is $\tfrac{1}{1-P(k)}$. From Theorem~\ref{thm:proportion-buckets-overflow} we deduce that $P(k)=\sum_{j=0}^d v_j(t)  \tfrac{j}{d} + o(1)$ where $t=k/N$. Let $p(t):=\sum_{j=0}^d v_j(t)  \tfrac{j}{d} $, then it follows that $\tfrac{1}{1-P(k)}=\tfrac{1}{1-p(t)}+o(1)$. This approximation is valid, because  we  have that $P(k)\leq U_b(k)/n \leq \theta$ and $p(t)\leq u_b(t) \leq \theta$ when $k\leq \theta N$ and $t\leq \theta$, thus we are on the region where $x\mapsto \tfrac{1}{1-x}$ has a bounded derivative. Let us now calculate $p(t)$.

As a consequence of Theorem~\ref{thm:proportion-buckets-overflow} and the the Binomial Theorem, we obtain the following formula:
\begin{equation}
\label{gf-v}
\sum_{j=0}^d v_j(t) z^j = \int_0^{\lambda_b(t)}   \frac{x^{b-1}}{(b-1)!} \cdot \left(e^{-\lambda_b(t)/d} + z\cdot (e^{-x/d}-e^{-\lambda_b(t)/d}) \right)^d  dx\,.
\end{equation}
Differentiating in $z$ and letting $z=1$ we obtain, after diving by $d$
    \[
p(t)=\sum_{j=0}^d v_j(t) \tfrac{j}{d}=\int_0^{\lambda_b(t)}   \tfrac{x^{b-1}}{(b-1)!} e^{-(1-1/d)x} (e^{-x/d}-e^{-\lambda_b(t)/d})  dx \,.
\]
To deduce the given formula we expand
\[
p(t)=\int_0^{\lambda_b(t)}   \tfrac{x^{b-1}}{(b-1)!} e^{-x}  dx  - e^{-\lambda_b(t)/d}\int_0^{\lambda_b(t)}   \tfrac{x^{b-1}}{(b-1)!} e^{-(1-1/d)x}   dx\,.
\]
The first term is simply $P(b,\lambda_b(t))$. For the second we set $u=(1-1/d)x$ and we obtain 
\[
\int_0^{\lambda_b(t)}   \tfrac{x^{b-1}}{(b-1)!} e^{-(1-1/d)x}   dx = (1-1/d)^{-b} P(b,(1-1/d)\lambda_b(t))\,,
\]
thus proving the formula.
\end{proof}
\end{document}